\setlist{leftmargin=*}
\numberwithin{equation}{section}
\newtheoremstyle{corsivo}
   {\medskipamount}{\medskipamount}%
   {\itshape}{}%
   {\bfseries}{}%
   { }
   {\thmname{#1}\thmnumber{\@ifnotempty{#1}{ }\@upn{#2}}%
    \thmnote{ {\bfseries\boldmath(#3)}}.}%
\theoremstyle{corsivo}
\newtheorem{theorem}{Theorem}[section]
\newtheorem{lemma}[theorem]{Lemma}
\newtheorem{corollary}[theorem]{Corollary}
\newtheorem{proposition}[theorem]{Proposition}
\newtheorem{assumption}[theorem]{Assumption}
\newtheoremstyle{dritto}
   {\medskipamount}{\medskipamount}%
   {\rmfamily}{}%
   {\bfseries}{}%
   { }
   {\thmname{#1}\thmnumber{\@ifnotempty{#1}{ }\@upn{#2}}%
    \thmnote{ {\bfseries\boldmath(#3)}}.}%
\theoremstyle{dritto}
\newtheorem{remark}[theorem]{Remark}
\newcommand{\sub}[1]{_{\mathrm{#1}}}
\newcommand{\su}[1]{^{\mathrm{#1}}}
\newcommand{\eps}{\varepsilon}
\newcommand{\epsi}{\varepsilon}
\newcommand{\Id}{\mathds{1}}   
\newcommand{\ex}{\mathrm{e}}
\newcommand{\iu}{\mathrm{i}}
\newcommand{\di}{\mathrm{d}}
\newcommand{\N}{\mathbb{N}}
\newcommand{\Z}{\mathbb{Z}}
\newcommand{\R}{\mathbb{R}}
\newcommand{\C}{\mathbb{C}}
\newcommand{\Hi}{\mathcal{H}}
\newcommand{\F}{\mathcal{F}}
\DeclareMathOperator*{\slim}{\text{s-}lim}
\newcommand{\scal}[2]{\left\langle\left.  #1 \right|#2 \right\rangle}                
\newcommand{\bscal}[2]{\left\langle  #1 \Big|#2 \right\rangle}        
\newcommand{\norm}[1]{\left\| #1 \right\|}
\newcommand{\bra}[1]{\left\langle #1 \right|}
\newcommand{\ket}[1]{\left| #1 \right\rangle}
\newcommand{\set}[1]{ \left\{  #1 \right\}} 
\DeclareMathOperator{\Tr}{Tr}         
 \DeclareMathOperator{\im}{Im}
\DeclareMathOperator{\ran}{Ran}
\newcommand{\ie}{{\sl i.\,e.\ }}   
\newcommand{\LH}{{\mathcal{L}(\Hi)}}
\newcommand{\Or}{{\mathcal{O}}}
\newcommand{\abs}[1]{\left\lvert#1\right\rvert}
\newcommand{\virg}[1]{``#1''}
\renewcommand{\(}{\left(}
\renewcommand{\)}{\right)}
\newcommand{\gm}{}
\newcommand{\hc}{}
\let\oldfootnote\footnote
\renewcommand{\footnote}[1]{\oldfootnote{\  #1}}
\title[A non-linear  Landauer--B\"uttiker formalism]{On the  self-consistent Landauer--B\"uttiker formalism}
\author[H. D. Cornean, G. Marcelli]{Horia D. Cornean \and Giovanna Marcelli}
\begin{document}

\begin{abstract}  We provide sufficient conditions such that the time evolution of a mesoscopic tight-binding open system with a local Hartree--Fock non-linearity converges to a self-consistent non-equilibrium steady state, which is independent of the initial condition from the \virg{small sample}. We also show that the steady charge current intensities are given by Landauer--B\"uttiker-like formulas, and make the connection with  the case of weakly self-interacting many-body systems.   
\end{abstract}

\maketitle

\goodbreak

\section{Introduction and the main results}
\subsection{The problem and its history}   The Landauer--B\"uttiker formalism \cite{B1, B2, IL, L1, L2} is one of the standard tools in the study of mesoscopic quantum transport. Boiled down to its essence, this formalism states that the steady charge current intensities  through different reservoirs (leads) -indirectly coupled through a \virg{small sample}- are  functions of certain quantum scattering transmission coefficients. These non-interacting steady current formulas  have been the object of a thorough mathematical investigation during the last two decades, both in the tight-binding setting \hc{\cite{HA}}, \cite{AJPP1, AJPP2, CJM, CGZ, CJN, CM, N}, and in the continuous one \cite{BP, CDNP, CDP, CNZ}. Quite recently, the authors of \cite{AJR} considered a very much related model where the time is also discretized. Note the important detail that while the physics community usually takes the existence of steady states for granted, providing a mathematical proof of this fact is not trivial, especially in the continuous setting. 

The situation in which the carriers are allowed to interact in the small sample is much more involved. The condensed-matter community widely uses the so-called non-equilibrium Green function (NEGF) formalism \cite{JWM, KKS, S}, which was only recently put on firm mathematical grounds \cite{CMP3}. In the locally interacting case, formulating a proper mathematical theory \hc{for the existence of non-equilibrium steady states (NESS) requires advanced mathematical techniques even for tight-binding models. One of the first papers which gives a list of generic sufficient conditions for the existence of NESS is \cite{R}. In this context, existence and completeness of M{\o}ller morphisms are much more difficult to prove than in the non-interacting case. They demand both a good control on the propagation estimates for the one particle Hamiltonian, and a clever way of dealing with an apparently exploding combinatorics of Dyson series \cite{BoMa, MB}. These ideas generated further activity on applications like linear response theory, correlation functions, and current formulas \cite{JP1, JP2, FMU, JPO, CMP2}}. 

\hc{Now let us go back to our concrete problem of obtaining Landauer-B\"uttiker type current formulas. To the best of our knowledge, for locally }interacting many-body mesoscopic systems this question has been for the first time analyzed in \cite{CMP1, CMP2}. The proofs are based on two main technical assumptions: first, the one-particle coupled Hamiltonian must have good enough dispersive estimates (and no bound-states), and second, the self-interaction must be small enough. A very interesting consequence of the results in \cite{CMP2} is that the steady current intensities are given, up to a second order error in the strength of the self-interaction, by a non-interacting Landauer--B\"uttiker formula, where the Hamiltonian of the small sample is perturbed by an additional potential which coincides with the first order iteration of a Hartree--Fock scheme. 
We note that self-consistent Hartree--Fock diagrammatic expansions are used by physicists to approximate Green--Keldysh functions with one-body objects (see \cite{JWM, SvL} and references therein). 

\subsection{What is new in this paper?}
\hc{We work in the so-called partitioning approach, see Figure \ref{apr13} for the generic setup. This means that when $t<0$, the total (decoupled) system consists of two isolated leads (actually any finite number of leads may be allowed) characterized by quasi-free equilibrium states, and a \virg{small} finite dimensional sample characterized by a self-consistent equilibrium state. At time $t=0$ the leads are coupled to the small sample, and for $t>0$ the total state is given by a density matrix $\rho(t)$ which solves a self-consistent Liouville equation. We are firstly interested in finding out whether  $\rho(t)$ has a limit when $t\to\infty$; to the best of our knowledge,  our paper is the first one providing rigorous results regarding this non-linear problem in the partitioning approach. The existence of a self-consistent steady state  is proved in Theorem \ref{thm:main}.  The convergence requires the same two main technical assumptions as in the weakly self-interacting many-body model \cite{CMP1, CMP2} discussed above. Also, this limit state is independent of the initial condition from the small sample. }

\hc{
A complementary approach to the partitioned self-consistent problem is the so-called partition-free \cite{Ni}, in which the leads and the sample are already coupled at $t<0$, and the non-linearity is added at $t=0$. A more detailed comparison between these two approaches can be found in Remark \ref{remark1.6}. 
}

\hc{The most important new practical application which we obtain in this paper, as its title suggests, may be found in Corollary \ref{coro2}, where we show that the steady state charge current intensity can still be expressed with a Landauer--B\"uttiker-like formula}. In Corollary \ref{coroHC} we derive an effective \virg{non-interacting}   formula (see also Corollary  \ref{Rhc1}), which replicates the weakly interacting many-body results.   We also comment on a self-consistent algorithm for computing the conductance \virg{near equilibrium} proposed in \cite{MN}; we explain in Remark \ref{Rhc2} how that algorithm can be justified within our framework.  

In the rest of this section we introduce the mathematical setting and formulate our main result, Theorem \ref{thm:main}, followed by a number of comments, open problems and corollaries. Section \ref{sec2} is entirely dedicated to the proof of Theorem \ref{thm:main}, which is based on a rather subtle fixed point argument. 
In Section \ref{sec3} we prove the various  Landauer--B\"uttiker formulas. 

 Technical results regarding the  global existence and uniqueness for the non-linear propagators are presented in Appendix \ref{Ap1}. In Appendix \ref{Ap2} we discuss which conditions are needed for the crucial dispersive bounds of Assumption \ref{as:main} to hold true. Some of these scattering estimates have been previously spelled out in \cite{CMP2}, but in a rather laconic manner. For completeness, we decided to give more details here. Finally, in Appendix \ref{Ap3} we analyse the continuity properties of the current density as a function of the energy.

\subsection{The configuration space}
By adopting the tight-binding approximation, we consider a discrete model given by two semi-infinite leads coupled to a finite system. In the following, the finite system will be named \emph{(small) sample}, while the term \emph{system} refers to the whole structure consisting of the sample together with the leads. The one-particle Hilbert space of the system is defined as 
\[
\Hi=\ell^2(\N_1) \oplus \ell^2(\N_2)\oplus \C^N,
\]
where $\N_j=\{0,1,2,\dots\}$ for $j\in\{1,2\}$,  and $N$ is a natural number.
The standard orthonormal basis of the $j$-th lead is denoted by $\{ \ket{n_j}:\, n\in\N_j\}$. The standard orthonormal basis of the sample $\C^N$ is denoted by $\{   \ket{\zeta_k}:\, k\in \{1,\dots, N\} \}$. \gm{As considered in the linear case \cite{CJM}, we emphasize that actually we can include finitely many leads, yielding analogous results.}

On each lead the dynamics is determined by the one-dimensional discrete Laplacian operator with Dirichlet boundary condition, which is denoted by $\Delta\su{D}$ and its definition is recalled below.
Let $t_c>0$ be the hopping constant, for every $\psi\in \ell^2(\N )$ one has that
\begin{equation} 
\label{eqn:lap}
\begin{aligned}
\(\Delta\su{D}\psi\)(n)&:=t_c\(\psi(n+1)+\psi(n-1)\)\text{ for all $n\geq 1$}\\
\(\Delta\su{D}\psi\)(0)&:=t_c\,\psi(1).
\end{aligned}
\end{equation} 
The operator $\Delta\su{D}$ is called the \emph{Dirichlet Laplacian}. We denote by $h_1$ and $h_2$ the Dirichlet Laplace operators acting on the first and on the second lead respectively.
\begin{figure}
\centering
\includegraphics[scale=0.19]{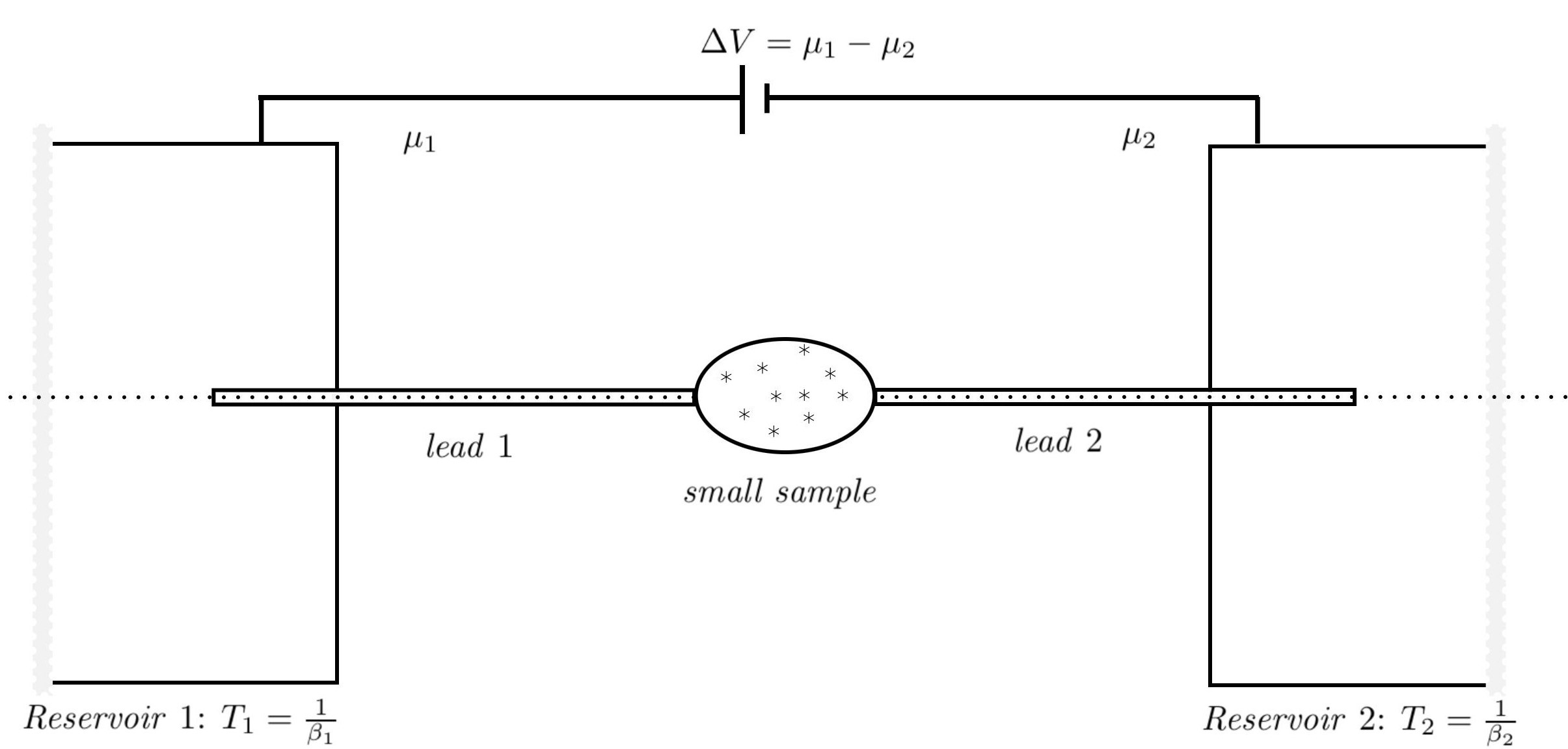} 
\caption{The generic setup.}
\label{apr13}
\end{figure}

\subsection{The initial equilibrium state of the sample}

We denote by $\mathcal{L}\(\C^N\)$ the set of the linear operators acting on $\C^N$. In the \virg{linear} case, the dynamics is given by some self-adjoint operator $h_s\in \mathcal{L}\(\C^N\)$. We assume that the sample is in contact with an energy reservoir fixing its temperature at value $1/\beta_s$ with $\beta_s>0$ and its mean number of particles is $0<\mathcal{N}<N$. In the non-interacting/linear case the equilibrium one-particle density matrix is
\[
\rho\su{non-int}_s:=\frac{1}{\ex^{\beta_s(h_s-\mu_s)}  +1 },
\]
where $\mu_s$ is the unique real solution of the equation
\begin{equation}
\label{gm1}
\Tr_{\C^N}\(\frac{1}{\ex^{\beta_s(h_s-x)}  +1 }\)=\mathcal{N},\qquad x\in\R,
\end{equation}
since the quantity on the left-hand side is increasing with $x$ and its range equals $(0,N)$.

We will now construct a whole class of self-consistent equilibrium density matrices. Fix $\nu_{jk}\in\R$ with $1\leq j,k\leq N$. Given some density matrix $\gamma\geq 0$ acting on $\C^N$, a non-linearity of Hartree type is given by a self-adjoint \virg{potential}
\begin{equation}
\label{eqn:V}
V_\lambda\{\gamma\}:=\lambda \sum_{j=1}^N\(  \sum_{k=1}^N \nu_{jk} \scal{\zeta_k}{\gamma \,\zeta_k}\)\ket{\zeta_j}\bra{\zeta_j}, \qquad \lambda\geq 0.
\end{equation}
We may also allow an \virg{exchange} non-linearity of the type 
$$\lambda  \sum_{1\leq j\neq k\leq N} \eta_{jk} \scal{\zeta_j}{\gamma \,\zeta_k}\ket{\zeta_k}\bra{\zeta_j},\quad \eta_{jk}=\overline{\eta_{kj}}\in \C,$$
but in order to simplify notation and because no extra mathematical challenges appear, we choose to only work with the Hartree term. \hc{We note that this type of self-consistent one-body effective potentials naturally appear from quartic many-body self-interactions when one performs a \virg{Wick partial contraction}, see for example Section 3.5 in \cite{CMP2}.  }

The self-consistent  but still linear sample Hamiltonian will be $h_s+V_\lambda\{\gamma\}$. The average number $\mathcal{N}$ of particles in the sample is chosen to stay fixed, hence the chemical potential of the sample has to solve the following equation:  
\[
\Tr_{\C^N}\(\frac{1}{\ex^{\beta_s(h_s+V_\lambda\{\gamma\}-x)}  +1 }\)=\mathcal{N},\qquad x\in\R.
\]
For a fixed $\gamma$, reasoning as in \eqref{gm1}, a solution (denoted by $\mu_s(\gamma)$) exists and is unique. We will prove the following result in the next section: 

\begin{lemma}\label{lemmahc10}
For every $\beta_s>0$, $\lambda\geq 0$ and  $\mathcal{N}\in (0,N)$, there exists at least one density matrix $\rho_s$ such that 
$$\rho_s=\frac{1}{\ex^{\beta_s(h_s+V_\lambda\{\rho_s\}-\mu_s(\rho_s))}  +1 },\quad {\rm Tr}(\rho_s)=\mathcal{N}.$$
In particular, $\rho_s$  commutes with $h_s+V_\lambda\{\rho_s\}$.
\gm{Moreover, there exists $\lambda_*>0$ such that if $0\leq \lambda\leq\lambda_*$ then such density matrix $\rho_s$ is unique.}
\end{lemma}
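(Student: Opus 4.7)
The plan is to recast the self-consistent equation as a fixed-point problem on the convex compact set
\[
K:=\bigl\{\gamma\in\mathcal{L}(\C^N):\gamma^*=\gamma,\ 0\leq\gamma\leq\id,\ \Tr(\gamma)=\mathcal{N}\bigr\}
\]
and apply Brouwer's theorem for existence, followed by a contraction argument for uniqueness at small $\lambda$. For each $\gamma\in K$ the chemical potential $\mu_s(\gamma)$ is well-defined and unique by the same strict-monotonicity argument recalled after \eqref{gm1}, now applied to $h_s+V_\lambda\{\gamma\}$ in place of $h_s$. I then introduce
\[
F:K\to K,\qquad F(\gamma):=\bigl(\ex^{\beta_s(h_s+V_\lambda\{\gamma\}-\mu_s(\gamma))}+1\bigr)^{-1}.
\]
By construction $F(\gamma)$ is self-adjoint, its spectrum sits in $(0,1)$, and its trace equals $\mathcal{N}$, so $F$ does map $K$ into itself. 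Any fixed point of $F$ is the desired $\rho_s$, and the commutation $[\rho_s,h_s+V_\lambda\{\rho_s\}]=0$ is automatic because $\rho_s$ is then a function of $h_s+V_\lambda\{\rho_s\}$ in the functional-calculus sense.

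To invoke Brouwer I need continuity of $F$. The map $\gamma\mapsto V_\lambda\{\gamma\}$ is linear in the matrix elements of $\gamma$, hence continuous. For $\mu_s$ I will apply the implicit function theorem to
\[
g(\gamma,x):=\Tr\bigl((\ex^{\beta_s(h_s+V_\lambda\{\gamma\}-x)}+1)^{-1}\bigr)-\mathcal{N}.
\]
The partial derivative $\partial_x g$ is strictly positive, and standard \emph{a priori} estimates confine $\mu_s(\gamma)$ to a bounded interval as $\gamma$ varies in $K$; consequently $\partial_x g$ is uniformly bounded below by a positive constant, so $\gamma\mapsto\mu_s(\gamma)$ is smooth. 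Composition with the smooth functional calculus on finite-dimensional self-adjoint matrices then yields the continuity of $F$, and Brouwer produces a fixed point $\rho_s\in K$.

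For uniqueness at small $\lambda$, I will show that $F$ is a $\lambda$-Lipschitz contraction. Writing $V_\lambda\{\gamma\}=\lambda W\{\gamma\}$ with $W$ linear in $\gamma$ and uniformly bounded on $K$, differentiating $g(\gamma,\mu_s(\gamma))\equiv 0$ gives
\[
d\mu_s(\gamma)=\lambda\,\frac{\Tr\bigl(f'_{\beta_s}(h_s+\lambda W\{\gamma\}-\mu_s(\gamma))\,W\{d\gamma\}\bigr)}{\Tr\bigl(f'_{\beta_s}(h_s+\lambda W\{\gamma\}-\mu_s(\gamma))\bigr)},\qquad f_{\beta_s}(x):=(\ex^{\beta_s x}+1)^{-1}.
\]
Because $f'_{\beta_s}<0$ and the operator inside stays in a fixed compact interval as $\gamma$ varies in $K$ and $\lambda\in[0,1]$, the denominator is uniformly bounded away from zero, which yields $|\mu_s(\gamma_1)-\mu_s(\gamma_2)|\leq C_1\lambda\,\|\gamma_1-\gamma_2\|$. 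Combined with the Lipschitz character of $A\mapsto f_{\beta_s}(A)$ on bounded self-adjoint matrices in finite dimension, this gives $\|F(\gamma_1)-F(\gamma_2)\|\leq C_2\lambda\,\|\gamma_1-\gamma_2\|$ for a constant $C_2$ independent of $\lambda\in[0,1]$. Setting $\lambda_*:=1/(2C_2)$ makes $F$ a strict contraction for $0\leq\lambda\leq\lambda_*$, so the fixed point is unique in $K$.

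The main obstacle I anticipate is the uniformity bookkeeping: ensuring that the lower bound on $|\Tr(f'_{\beta_s}(\cdots))|$ and the Lipschitz constant of the functional calculus are taken uniformly over $\gamma\in K$ and $\lambda\in[0,\lambda_*]$. This reduces to the uniform \emph{a priori} bound on $|\mu_s(\gamma)|$ mentioned above, so that all the relevant spectra remain inside a fixed compact set on which $f_{\beta_s}$ is smooth with uniform control on its derivatives.
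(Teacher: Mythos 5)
Your proposal is correct and follows essentially the same route as the paper: you identify the same compact convex set of density matrices (your extra constraint $\gamma\le\id$ is automatically preserved by $F$, so it changes nothing substantive), prove continuity of $\gamma\mapsto\mu_s(\gamma)$ via the implicit function theorem, invoke Brouwer for existence, observe the commutation from functional calculus, and then make $F$ a contraction for small $\lambda$ by bounding $|\mu_s(\gamma_1)-\mu_s(\gamma_2)|$ by $C\lambda\|\gamma_1-\gamma_2\|$ together with a Lipschitz bound for the Fermi--Dirac functional calculus. The paper's only cosmetic difference is that it establishes smoothness of $T(\gamma,x)=f_{\beta_s}(h_s+V_\lambda\{\gamma\}-x)$ via a Cauchy contour-integral representation and regular perturbation of the resolvent rather than by quoting smoothness of the finite-dimensional functional calculus directly, and it organizes the contraction estimate as a two-term telescoping split of $\|F(\gamma_1)-F(\gamma_2)\|$; your explicit formula for $d\mu_s$ from differentiating $g(\gamma,\mu_s(\gamma))\equiv 0$ delivers the same $O(\lambda)$ Lipschitz constant.
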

Any such $\rho_s$, parameterized by $\beta_s$, $\lambda$ and $\mathcal{N}$, can be our initial state in the sample. We will though see, that the constructed steady state will not depend on $\rho_s$ at all.

\subsection{The dynamics of the system}
The initial state $\rho_i$ of the full system is partitioned and defined as
\begin{equation}
\label{eqn:rhoi}
\rho_i:=\frac{1}{\ex^{\beta_1(h_1-\mu_1)}  +1 }\oplus \frac{1}{\ex^{\beta_2(h_2-\mu_2)}  +1 }\oplus \rho_s,
\end{equation}
where  $0<\beta_j\leq \infty$ and $\mu_j\in\R$ are constants. If $\beta_j=\infty$, the corresponding Fermi--Dirac distribution is replaced by $\chi_{\mu_j}(h_j)$, where $\chi_{\mu}$ is the indicator function of the interval $(-\infty,\mu]$.
\gm{We denote by $\LH$ the set of all linear and bounded operators from $\Hi$ to itself.}
Hereinafter, the Hartree potential $V_\lambda\{\,\cdot\,\}$ is understood as a map acting on non-negative density operators in $\LH$, by extending the definition given in \eqref{eqn:V} in the following way; the scalar product $\scal{\zeta_k}{\cdot\zeta_k}$ in the sample Hilbert space is extended to the one in the system Hilbert space and the projection $\ket{\zeta_j}\bra{\zeta_j}$ is seen as a projection on the vector $\zeta_j$ in the full system as well.
 
The stationary dynamics of the decoupled system is given by (we write $+$ instead of $\oplus$ from now on)
\begin{equation}\label{eqn:HD}
H_{D,\lambda}:=\gm{h_1+h_2+h_s+ V_\lambda\{\rho_i\}}.
\end{equation}
The coupling between the leads and the sample is realized through  a finite-rank \emph{tunneling Hamiltonian} of the type
\begin{equation}\label{dc6}
h_\tau:=\tau \sum_{j=1}^2\(  \ket{S_j}\bra{L_j}+ \ket{L_j}\bra{S_j}  \),\qquad \tau>0,
\end{equation}
\hc{where each $\ket{L_j}$ is  compactly supported in the $j$-th lead, and $\ket{S_j}$ is supported in the sample.}  
The operator describing the dynamics in the semi-infinite leads is denoted by
\begin{equation}\label{dc7}
H_L:=h_1+h_2.
\end{equation}
At $t=0$ we couple the leads to the sample. Let 
\begin{equation}
\label{eqn:H}
\gm{H:=h_1+h_2+h_s+h_\tau}
\end{equation}
be the linear coupled one-particle Hamiltonian. The time-dependent density operator will be given by the solution of the  Cauchy problem associate with the following non-linear Liouville equation:
\begin{gather} 
\label{gm2}
\begin{cases}
\iu\, \frac{\di}{\di t}\rho(t)=\big[ H+ V_\lambda \{\rho(t)\} , \rho(t)      \big],\quad t>0 \\
\rho(0)=\rho_i.
\end{cases}
\end{gather}
Defining the corresponding generator $G\colon \LH\to \LH$ such that
\begin{equation}
\label{eqn:G}
G(A):= H A+ V_\lambda \{A\, \rho_i\, A^*\} A,\qquad\text{for every $A\in \LH$}, 
\end{equation}
the above Cauchy problem boils down to the following one: find the differentiable (in operator norm topology) family of unitary operators $U(t)$ such that 
\begin{equation} 
\label{eqn:U}
\begin{cases}
\iu \frac{\di}{\di t}U(t)=G(U(t)),\qquad t>0 \\
U(0)=\Id.
\end{cases}
\end{equation}
Observe that in view of $\frac{\di}{\di t} \(   U^*(t)\)={\(\frac{\di}{\di t}    U(t)\)}^*$, we can write an equivalent Cauchy problem for $U^*(t)$:
\begin{equation} 
\label{eqn:U*}
\begin{cases}
-\iu \frac{\di}{\di t}U^*(t)={\(G(U(t))\)}^*=U^*(t)H+U^*(t)V_\lambda \{U(t)\, \rho_i\, U^*(t)\} ,\qquad t>0 \\
U^*(0)=\Id.
\end{cases}
\end{equation}
We will show in Appendix \ref{Ap1} that both \eqref{eqn:U} and \eqref{eqn:U*} have global solutions which are inverse to each other, hence $U(t)$ is unitary and the (unique) solution to the non-linear Liouville equation \eqref{gm2} is $\rho(t)=U(t)\rho_i U^*(t)$. Note that $U(t)$ also depends on $\rho_i$ in a non-trivial way. 

\gm{
\begin{remark}
\label{rem:expvalues} 
Notice that in \eqref{eqn:rhoi} the lead components of the initial state $\rho_i$ are bounded operators but not trace class. Indeed, if $\frac{1}{\ex^{\beta_j(h_j-\mu_j)}  +1 }$ was trace class, then it would be a compact operator with discrete spectrum, which would mean that $\Delta\su{D}$ also has discrete eigenvalues, which is in contradiction with Lemma \ref{lem:resolventlaplacian}. This is why in order to have a well-posed problem (see \eqref{eqn:mainq} below) we have to consider trace class observables.
\end{remark}
}

Hereinafter we denote by $S_1(\Hi)$ the trace class operators equipped with the norm 
$
\norm{O}_1:=\Tr(\sqrt{O^*O})$. Now we can finally formulate the main question we would like to answer. Given any trace class, self-adjoint observable $O$, does the following \emph{ergodic limit} exist:
\begin{equation}
\label{eqn:mainq}
\lim_{T\to \infty}\frac{1}{T}\int_0^T\di t\, \Tr\big (\rho(t) \, O\big ). 
\end{equation}

{ We are not able to answer this question in full generality, and we need to add some further assumptions on the system. The most important one is as follows:
\vspace{0.2cm}

\begin{assumption}
\label{as:main}  For every compactly supported functions $f,g\in \Hi$ we have $$\int_\R \di t\, \abs{\scal{f}{\ex^{\iu t H}g}}<\infty,$$ where $H$ is given in  \eqref{eqn:H}.
\end{assumption}

\vspace{0.2cm}

This also implies that $H$ has purely absolutely continuous spectrum because we automatically have a limiting absorption principle if $f$ has compact support: 
$$\scal{f}{(H-x-\iu\, 0_+)^{-1}f}:=\lim_{\eps\to 0^+}\scal{f}{(H-x-\iu\, \eps)^{-1}f}=\iu \int_0^{\infty} \di t\, \scal{f}{\ex^{-\iu \, t\, H}f}\, \ex^{\iu t x},$$ 
for every $x\in \R$.

\begin{figure}
\centering
\includegraphics[scale=0.2]{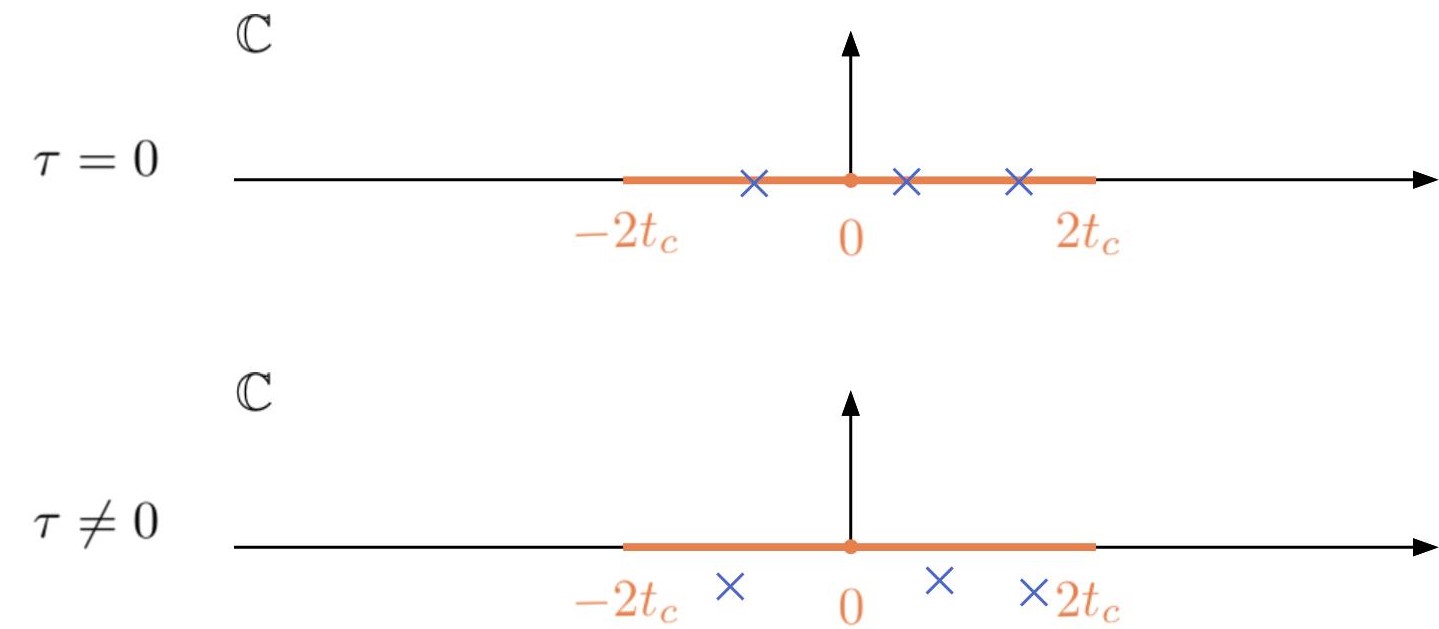}
\caption{The orange segment represents the absolutely continuous spectrum of the leads, while the blue crosses stand for the eigenvalues of $h_s$ when $\tau=0$, and the resonances of $H$ when $\tau\neq 0$.}
\label{apr14}
\end{figure}

The dispersive estimate of Assumption \ref{as:main} is not something one would normally encounter  in the physics literature, where the existence of a steady state is usually taken for granted.  Physicists are nevertheless aware that persistent oscillations may create problems, which are avoided when the coupled one-particle system is  \virg{fully resonant}, which means that the eigenvalues of $h_s$ are embedded in the spectrum of the leads and they become resonances when the tunneling $h_\tau$ is turned on, see Figure \ref{apr14}. In Appendix \ref{Ap2} we give some sufficient  conditions under which this rather strong assumption is satisfied. In particular, a detailed threshold analysis \cite{JK} is needed. The Dirichlet boundary conditions for the semi-infinite leads are important because they generate milder threshold singularities compared to the infinite leads, and the propagation estimates look as if we were in three dimensions for the free Laplacian.

In the following, we denote by $P\sub{ac}(H_L)$ the projection onto the absolutely continuous subspace associated with $H_L$, which is nothing but the projection on the two leads. This subspace coincides with the absolutely continuous subspace associated with the decoupled Hamiltonian $H_{D,\lambda}$  and is independent of $\lambda$. We also notice that $P\sub{ac}(H)$ (the projection onto the absolutely continuous subspace associated with $H$) equals the identity operator under our Assumption \ref{as:main}.

The next technical lemma is a direct consequence of \cite[Theorem 1 \& Corollary 2, \S 2, Ch. 6]{Y}, but it can be directly proved using the dispersive estimates from Appendix \ref{Ap2}: 
\begin{lemma}
\label{lem:waveop}
 The wave operators 
\begin{equation}
\label{eqn:waveop}
W_-(H,H_{L})=\slim_{t\to\infty} \ex^{-\iu t H} \ex^{\iu t H_{L}}P\sub{ac}(H_L),\quad W_-(H_{L},H)=\slim_{t\to\infty}\ex^{-\iu t H_{L}} \ex^{\iu t H}P\sub{ac}(H)
\end{equation}
exist, are complete and thus $W_-(H_{L},H)={\(W_-(H,H_{L})\)}^*$.
\end{lemma}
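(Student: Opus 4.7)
The plan is to use Cook's method, exploiting that the perturbation $H-H_L=h_s+h_\tau$ is finite-rank with range spanned by the compactly supported vectors $\{\ket{\zeta_k}\}_{k=1}^N\cup\{\ket{L_j},\ket{S_j}\}_{j=1}^2$. Consequently there exist finitely many compactly supported $\psi_k,\phi_k\in\Hi$ such that $(H-H_L)g=\sum_k\scal{\psi_k}{g}\phi_k$, so for every self-adjoint $A$ and every $f\in\Hi$,
\begin{equation*}
\norm{(H-H_L)\ex^{\iu t A}f}\le C\sum_k\abs{\scal{\psi_k}{\ex^{\iu t A}f}}.
\end{equation*}

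I would then prove existence of both wave operators on the dense set of compactly supported vectors via the Cook estimate
\begin{equation*}
\norm{\ex^{-\iu t B}\ex^{\iu t A}f-\ex^{-\iu s B}\ex^{\iu s A}f}\le\int_s^t\norm{(B-A)\ex^{\iu\sigma A}f}\,\di\sigma.
\end{equation*}
For $W_-(H,H_L)$ take $(A,B)=(H_L,H)$: the integrability reduces to $L^1(\di t)$-decay of $\scal{\psi_k}{\ex^{\iu tH_L}f}$ between compactly supported vectors, which will be established in Appendix \ref{Ap2} using that $H_L=h_1+h_2$ is a direct sum of Dirichlet half-line Laplacians enjoying a $t^{-3/2}$ dispersive bound. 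For $W_-(H_L,H)$ take $(A,B)=(H,H_L)$: the integrability is now a direct consequence of Assumption \ref{as:main} applied to the compactly supported $f$ and $\psi_k$.

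Bilateral existence then automatically yields completeness and the adjoint identity. The intertwining/chain rule gives $W_-(H_L,H)W_-(H,H_L)=P\sub{ac}(H_L)$ and $W_-(H,H_L)W_-(H_L,H)=P\sub{ac}(H)=\Id$, the latter because Assumption \ref{as:main} forces $P\sub{ac}(H)=\Id$. Hence $\ran W_-(H,H_L)=\Hi$ (completeness), and for $f\in P\sub{ac}(H_L)\Hi$ and $g\in\Hi$,
\begin{equation*}
\scal{W_-(H,H_L)f}{g}=\lim_{t\to\infty}\scal{f}{\ex^{-\iu tH_L}\ex^{\iu tH}g}=\scal{f}{W_-(H_L,H)g},
\end{equation*}
which yields $W_-(H_L,H)=(W_-(H,H_L))^*$. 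The only delicate input is the $L^1(\di t)$-integrability of the $H_L$ matrix elements, which requires a careful threshold analysis at the band edges of the leads; this is what motivates the choice of Dirichlet rather than mixed boundary conditions in \eqref{eqn:lap}. Everything else in the argument is standard bookkeeping.
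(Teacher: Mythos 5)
Your Cook's-method argument is correct and essentially fleshes out the alternative proof route that the paper only mentions in passing. The paper itself does not prove the lemma directly: it invokes Yafaev \cite{Y}, i.e.\ the trace-class/Kato--Birman theory, which applies immediately because $H-H_L=h_s+h_\tau$ is finite rank, hence trace class, so existence and completeness of both wave operators follow at once without any propagation estimates. The paper then notes that the result \emph{can also} be proved from the dispersive bounds of Appendix~\ref{Ap2}; this is exactly what you carry out. A small imprecision: Appendix~\ref{Ap2} establishes the $t^{-3/2}$ decay only for the coupled $H$ (Proposition~\ref{prop:timedecay}), not for $H_L$. For $H_L$ the bound you need is the analogous decay for the half-line Dirichlet Laplacian, which is standard and can be read off from Lemma~\ref{lem:resolventlaplacian} (it is actually simpler than the estimate for $H$, since there is no threshold resonance issue), but strictly speaking it is not proved in that appendix. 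With that caveat noted, the rest of your argument -- reducing the Cook integrand to finitely many matrix elements $\scal{\psi_k}{\ex^{\iu tA}f}$ with $\psi_k,f$ compactly supported, using bilateral existence plus $P\sub{ac}(H)=\Id$ to get $W_-(H,H_L)W_-(H_L,H)=\Id$ and $W_-(H_L,H)W_-(H,H_L)=P\sub{ac}(H_L)$, and hence completeness and the adjoint identity -- is correct. The Kato--Birman route is shorter and makes no use of Assumption~\ref{as:main}; your route is self-contained and reuses machinery the paper needs anyway, so both are reasonable choices here.
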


\vspace{0.2cm}

\subsection{The main results}
\begin{theorem}
\label{thm:main}
Let $\rho(t)\equiv \rho_\lambda(t)$ be the solution of \eqref{gm2}. Suppose that Assumption \ref{as:main} holds true. We define 
\begin{align}
\label{gm3}
M:=\max_{1\leq j,n\leq N}\int_0^\infty  \di s\, \abs{\scal{\zeta_j}{\ex^{\iu s H}\zeta_n}},\quad \norm{\nu}_1:=\sum_{j,k=1}^N \abs{\nu_{jk}},\quad 
\lambda_0:= \frac{1}{12 \norm{\nu}_1 M} 
.
\end{align}
Then for every $0\leq \lambda<\lambda_0$ there exists a family of steady states $\omega_\lambda: S_1(\Hi)\mapsto \C$  having a density operator $\rho_{\lambda,\infty} $ such that: 

\noindent {\rm (a)} $\omega_\lambda(O):=\lim_{t\to\infty}{\rm Tr}\big (\rho_\lambda (t)\, O\big )=: {\rm Tr}\big (\rho_{\lambda,\infty}\, O\big )$, see \eqref{dc20} for an explicit formula of $\rho_{\lambda,\infty}$.

\noindent {\rm (b)} $\rho_{\lambda,\infty}$ does not depend on the component $\rho_s$ from the initial state $\rho_i$.

\noindent {\rm (c)} The operator $H+V_\lambda\{\rho_{\lambda,\infty}\}$ commutes with $\rho_{\lambda,\infty}$.

\noindent {\rm (d)} Assume that $\beta_1=\beta_2=\beta$ and $\mu_1=\mu_2=\mu$. Let $\Pi_k$ denote the projection on lead $k$ and define the current intensity operator through lead $k$ as $I_k:=\iu [H,\Pi_k]$. 

Let $f_{\rm FD}(x)=1/(\ex^{\beta(x-\mu)}+1)$. Then 
\begin{equation}\label{dc25}
\rho_{\lambda,\infty}=f_{\rm FD}\big (H+V_\lambda\{\rho_{\lambda,\infty}\}\big )\quad \text{and}\quad \omega_\lambda(I_k)=0\, .
\end{equation}

\noindent {\rm (e)} $\rho_{0,\infty}=W_-(H,H_{L})\, \rho_i\, W_-(H_{L},H)$. 
\end{theorem}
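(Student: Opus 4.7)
The plan is to reformulate \eqref{gm2} as a fixed-point problem in the interaction picture with respect to $H$ and use Assumption \ref{as:main} to establish a contraction for $\lambda<\lambda_0$. Setting $\tilde U(t):=\ex^{\iu tH}U(t)$, equation \eqref{eqn:U} yields
\begin{equation*}
\tilde U(t)=\Id-\iu\int_0^t\ex^{\iu sH}\,V_\lambda\{\rho(s)\}\,\ex^{-\iu sH}\,\tilde U(s)\,\di s,\qquad \rho(s):=U(s)\rho_i U^*(s).
\end{equation*}
Since $V_\lambda\{\rho(s)\}$ is a rank-$N$ operator supported on the sample basis $\{\zeta_j\}$ whose coefficients are bounded by $\lambda\norm{\nu}_1$ (the diagonal entries $\scal{\zeta_j}{\rho(s)\zeta_j}$ lie in $[0,1]$ because $0\leq\rho(s)\leq\Id$), each matrix element of the integrand against two compactly supported vectors is integrable in $s$ by Assumption \ref{as:main}. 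This is the mechanism that will produce a strong limit $\tilde U_\infty$ on a dense set, so heuristically $U(t)\sim\ex^{-\iu tH}\tilde U_\infty$ as $t\to\infty$.

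I would implement this as a Banach-space fixed point with the natural unknown being the sample-sample block of the evolved density, the only quantity the Hartree potential sees. On the space $\mathcal{X}$ of bounded continuous matrix-valued trajectories $\gamma:[0,\infty)\to\mathcal{L}(\C^N)$ with sup norm, define $\Phi$ by solving the \emph{linear} time-dependent Schr\"odinger equation with frozen potential $V_\lambda\{\gamma(t)\}$, obtaining a propagator $U_\gamma(t)$, and returning $\Phi(\gamma)(t):=\{\scal{\zeta_j}{U_\gamma(t)\rho_i U_\gamma^*(t)\zeta_k}\}_{j,k=1}^N$. A fixed point of $\Phi$ reproduces the solution of \eqref{gm2} restricted to the sample. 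Expanding $U_\gamma$ by Duhamel against $\ex^{-\iu tH}$ and invoking \eqref{gm3} yields a Lipschitz estimate $\norm{\Phi(\gamma_1)-\Phi(\gamma_2)}_\infty\leq c\,\lambda\norm{\nu}_1 M\,\norm{\gamma_1-\gamma_2}_\infty$ with an explicit combinatorial $c$; the threshold $\lambda_0=(12\norm{\nu}_1 M)^{-1}$ is precisely what is needed to drive this constant strictly below one after the iterated Dyson expansion required by the quadratic dependence on the propagator.

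Once the fixed point is available and $\tilde U(t)\to\tilde U_\infty$ on compactly supported vectors, parts (a) and (b) follow by writing $\Tr(\rho(t)O)=\Tr(\rho_i U^*(t)OU(t))$ for trace class $O$ and splitting $\rho_i=\rho_L+\rho_s$: the sample contribution $\Tr(\rho_s U^*(t)OU(t))$ vanishes because $U^*(t)OU(t)\sim\ex^{\iu tH}\tilde U_\infty^* O\tilde U_\infty\ex^{-\iu tH}$ and $H$ has purely absolutely continuous spectrum, so by RAGE-type arguments matrix elements die out on the finite-rank range of $\rho_s$. For (c), pass to the weak limit in \eqref{gm2} using continuity of $\gamma\mapsto V_\lambda\{\gamma\}$ (it depends only on $N$ numbers) to identify the vanishing commutator with $H+V_\lambda\{\rho_{\lambda,\infty}\}$. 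For (d), the Gibbs state $f_{\rm FD}(H+V_\lambda\{\rho_{\lambda,\infty}\})$ also solves the fixed-point equation when $\beta_1=\beta_2=\beta$, $\mu_1=\mu_2=\mu$, since by intertwining it coincides with the wave-operator image of $f_{\rm FD}(H_L)=\rho_L$, and uniqueness for $\lambda<\lambda_0$ identifies it with $\rho_{\lambda,\infty}$; the current vanishes because, using (c), $\iu\Tr(\rho_{\lambda,\infty}[H,\Pi_k])=-\iu\Tr([V_\lambda\{\rho_{\lambda,\infty}\},\rho_{\lambda,\infty}]\Pi_k)=0$, the Hartree potential being sample-supported hence orthogonal to the lead projection $\Pi_k$. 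Finally (e) is the $\lambda=0$ specialization: the fixed-point equation becomes linear and on $\ran P\sub{ac}(H_L)$ the operator $\tilde U_\infty$ coincides with $W_-(H,H_L)$ by Lemma \ref{lem:waveop}.

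The main obstacle will be closing the contraction. Because the Hartree coupling is quadratic in the propagator, a single Duhamel iteration only produces a Lipschitz constant of order $\lambda\norm{\nu}_1 M\cdot\norm{\gamma}_\infty$, which does not shrink uniformly in the candidate ball. One must iterate the Dyson expansion twice (or thrice), carefully tracking which compactly supported vectors are fed into the dispersive input of Assumption \ref{as:main} and how the $\norm{\nu}_1$ combinatorics propagate through the nested time integrals; the constant $1/12$ in $\lambda_0$ is what this bookkeeping yields, and is also what ensures that $\Phi$ stabilizes a suitable invariant ball in $\mathcal{X}$.
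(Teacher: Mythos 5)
Your proposal has the right high‐level ingredients (a Duhamel/interaction‐picture reformulation, a Banach fixed point, wave operators to kill $\rho_s$), but it contains two genuine gaps that the paper's argument is precisely designed to close.

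\textbf{Gap 1: the $t\to\infty$ convergence is assumed, not proved.} Your fixed‐point lives in a space of bounded continuous \emph{trajectories} with the sup norm. Banach's theorem there only produces well‐posedness (existence and uniqueness of the solution $\rho(t)$ — this is in fact what Appendix~\ref{Ap1} of the paper does). It does not tell you that the trajectory has a limit as $t\to\infty$; a fixed point of a sup‐norm contraction can perfectly well oscillate forever. Your assertion that the integrability from Assumption~\ref{as:main} ``will produce a strong limit $\tilde U_\infty$'' is also not right as stated: if you expand $\tilde U(t)=\ex^{\iu tH}U(t)$ by Duhamel, the increment $\tilde U(t)-\tilde U(t')$ applied to a compactly supported $g$ is $\iu\int_{t'}^t c_j(s)\scal{\zeta_j}{U^*(s)g}\,\ex^{\iu sH}\zeta_j\,\di s$, whose integrand has a bounded but non‐decaying scalar coefficient times a unit vector, so the Cauchy property in norm fails. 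The paper instead works with $A_\lambda(t):=\ex^{-\iu tH}U^*(t)$ and, crucially, massages the Duhamel identity until the \emph{integrable scalar} $\scal{\zeta_j}{\ex^{\iu(t-s)H}\zeta_n}$ appears \emph{in the kernel} (see \eqref{hm1}); it then closes a fixed‐point for the $N$ vectors $a_{\lambda,n}(t)=A_\lambda(t)\zeta_n$, proves by induction that \emph{each Picard iterate} $a^p_\lambda(t)$ has a $t\to\infty$ limit via dominated convergence, shows the limits themselves satisfy a second fixed‐point equation on $B_2(0)\subset\Hi^N$, and finally exchanges $p\to\infty$ and $t\to\infty$ using uniformity in the sup norm. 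This two‐level argument (contraction at finite time \emph{and} contraction at $t=\infty$) is the heart of the proof and is absent from your proposal.

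\textbf{Gap 2: the vanishing‐current argument in (d) is a formal trace manipulation that cannot be correct.} You claim $\iu\Tr\big(\rho_{\lambda,\infty}[H,\Pi_k]\big)=-\iu\Tr\big([V_\lambda\{\rho_{\lambda,\infty}\},\rho_{\lambda,\infty}]\Pi_k\big)$ and then $=0$ because $\Pi_k V_\lambda\{\cdot\}=0$. The second equality is fine, but the first implicitly cycles $H$ past $\rho_{\lambda,\infty}\Pi_k$, which is not allowed because neither $\rho_{\lambda,\infty}H\Pi_k$ nor $H\rho_{\lambda,\infty}\Pi_k$ is trace class individually; only the commutator combination is. If this cyclicity were valid, part (c) — which holds for \emph{arbitrary} $\beta_j,\mu_j$ — would immediately give $\omega_\lambda(I_k)=0$ in full generality, contradicting Corollary~\ref{coro2}. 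The paper avoids this trap by approximating $f_{\rm FD}(\tilde H)$ with polynomials, telescoping $[\tilde H^{n+1},\Pi_k]=\sum_j\tilde H^j[\tilde H,\Pi_k]\tilde H^{n-j}$, and using that $\tilde H^{n+1}-H_L^{n+1}$ is finite rank together with $[H_L,\Pi_k]=0$; there the cyclicity is applied only to bona fide trace‐class commutators. Similarly, the first half of (d) — that $\rho_{\lambda,\infty}=f_{\rm FD}(\tilde H)$ — is not a direct consequence of ``intertwining with $W_-(H,H_L)$'' as you write (that gives $f_{\rm FD}(H)$, not $f_{\rm FD}(\tilde H)$); it requires the isometry and intertwining properties of $A_{\lambda,\infty}$ (Lemma~\ref{lemmahc4} and Lemma~\ref{lemmagm}), which your construction never produces.

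Your proposed unknown (the $N\times N$ sample block of the evolved density) is plausibly an alternative route to well‐posedness, but the distinctive difficulty of this theorem is the long‐time limit and the intertwining identities for the limiting operator $A_{\lambda,\infty}$, and these are where the proposal does not yet contain a proof.
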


\vspace{0.2cm}

\begin{remark} \label{remark1.6} A few comments are in place: 
\begin{itemize}

\item Due to the fact that $H$ does not have bound states and $\lambda$ is small, \gm{we obtain a pointwise convergence as $t\to\infty$ in (a) and thus it is not necessary to employ a C{\'e}saro limit like in \eqref{eqn:mainq}}. Such an  average is nevertheless needed in the non-interacting case when bound states are present, due to the persistent oscillations induced by the discrete eigenvalues  \cite{AJPP2, CGZ, CNZ, CNWZ}.

\item  The cases when $\lambda\neq 0$  and either one-particle bound states for $H$ are present, or the self-interaction is strong, remain two widely open problems. Some preliminary results for the self-interacting many-body problem were obtained in \cite{CM}, where the tunneling Hamiltonian $h_\tau$ was considered to be a perturbation to the decoupled self-interacting Hamiltonian. 

\item \hc{The existence of self-consistent steady-states was also investigated in \cite{Ni} for a continuum model in the partition-free approach. In that setting, the configuration space typically looks like a number of semi-infinite cylinders (which model the leads) connected through a bounded \virg{central} region. At $t<0$, the one-particle Hamiltonian $H_0$ is a Schr\"odinger like operator whose scalar potential might have different constant values inside the cylinders, which could model different chemical potentials. We emphasize the fact that no Dirichlet walls between the cylinders and the central region are present at any time.  Hypothesis 3.6 in \cite{Ni} demands the initial state $\rho_0$ not only to commute with $H_0$, but also, when expressed in the spectral representation of $H_0$, the initial state is assumed to be supported away from the possible eigenvalues of $H_0$ and from its scattering thresholds (as defined in  Mourre's commutator theory \cite{Mou}). At time $t=0$ a Hartree-type non-linearity is added. For $t>0$, the state $\rho(t)$ solves a self-consistent Liouville equation,  which under further assumptions, it is shown to admit a global solution in time. The existence of a self-consistent limit of $\rho(t)$ when $t\to\infty$ is not considered in  \cite{Ni}. Nevertheless, the existence of self-consistent steady states is shown in Theorem 6.4 using a Leray-Schauder fixed-point argument. Very roughly speaking, that result is a much more sophisticated and technically demanding version of what we do when we show the existence of a self-consistent $\rho_s$ in Lemma \ref{lemmahc10} using Brouwer's fixed-point theorem.
\item There are at least three other conceptual differences between our approach and that of \cite{Ni}. The first one is that in our case, we perturb an already self-consistent partitioned equilibrium state by turning on a coupling between leads and the small sample. The second one is that if we would require our initial state $\rho_i$ to \virg{not see} the eigenstates and thresholds of our initial decoupled Hamiltonian, then $\rho_s$ must equal zero and no particles are allowed in the small sample at $t<0$. Thirdly, our main interest is to derive Landauer-B\"uttiker-like formulas for the current intensity. A very interesting question is whether one could also derive such formulas in the framework of \cite{Ni}, and the answer is probably yes.}
\end{itemize}

\end{remark}

Now let us make the connection with the Landauer--B\"uttiker formalism and write down some relatively explicit formulas for the current intensity, which involve the transmission coefficient between the leads.

The operator describing the current intensity through lead $1$ is given by (see \eqref{dc6}): 
    $$I_1=\iu [H,\Pi_1]=\iu [h_\tau,\Pi_1]=\iu \, \tau\, \big (\ket{S_1}\bra{L_1}-\ket{L_1}\bra{S_1}\big).$$ 
\begin{corollary}\label{coro2}We employ the same hypotheses as in Theorem \ref{thm:main}. Then there exists a real-valued  function $\mathcal{T}_\lambda\in L^1([-2t_c,2t_c])$ \hc{(written in \eqref{gc13}, while $t_c$ is the hopping constant introduced in  \eqref{eqn:lap})} such that the steady state current intensity through lead 1 equals:
   \begin{equation}\label{hd1}
   \omega_\lambda(I_1)=2\pi \int_{-2t_c}^{2t_c} \Big ( \frac{1}{\ex^{\beta_2 (E-\mu_2)}+1}-\frac{1}{\ex^{\beta_1 (E-\mu_1)}+1}\Big )\, \mathcal{T}_\lambda(E)\, \di E.
   \end{equation}
    Moreover, the density $\mathcal{T}_\lambda$ seen as an element of $L^1([-2t_c,2t_c])$ has the following properties: 
   \begin{itemize}
   \item It admits a convergent expansion in powers of $\lambda$. The coefficient $\mathcal{T}_0$ (corresponding to $\lambda=0$) is independent of $\beta$'s and $\mu$'s, and equals the transmittance scattering coefficient between the two leads \hc{(see \eqref{march2})}. The other coefficients may depend on $\beta$'s and $\mu$'s (see also Corollary \ref{Rhc1} below).

\item 
  It is continuous with respect to  $\beta$'s and $\mu$'s, including the case where one or both $\beta$'s equal infinity (i.e. the Fermi--Dirac distribution is replaced by the corresponding indicator function).  

\item  Under the stronger assumptions of Lemma \ref{lemmahc1}, there  exists $0<\lambda_1\leq \lambda_0$ such that if $0\leq \lambda\leq \lambda_1$, then the density $\mathcal{T}_\lambda(\cdot)$ is actually continuous as a function of $E$. 
  
\end{itemize}

\end{corollary}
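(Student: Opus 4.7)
My plan is to reduce the self-consistent problem to a non-interacting Landauer--B\"uttiker computation with an effective one-particle Hamiltonian. By Theorem \ref{thm:main}\,(a) and (c), $\omega_\lambda(I_1)=\Tr(\rho_{\lambda,\infty}I_1)$ with $\rho_{\lambda,\infty}$ commuting with $H_\lambda:=H+V_\lambda\{\rho_{\lambda,\infty}\}$. Since $V_\lambda\{\rho_{\lambda,\infty}\}$ is finite-rank and sample-supported, $H_\lambda-H_L=h_\tau+V_\lambda\{\rho_{\lambda,\infty}\}$ is trace class. Combined with the extension of Assumption \ref{as:main} from $H$ to $H_\lambda$ (controlled by the fixed-point estimates of Section \ref{sec2} for $\lambda<\lambda_0$, since the defining Dyson-type series for the sample-projected resolvent of $H_\lambda$ converges), this yields existence and completeness of $W_\pm(H_\lambda,H_L)$ and $P\sub{ac}(H_\lambda)=\Id$.

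Next, I would substitute the explicit formula \eqref{dc20} for $\rho_{\lambda,\infty}$ produced by Theorem \ref{thm:main} into $\Tr(\rho_{\lambda,\infty}I_1)$. Passing to the spectral representation of $H_L$ (each $h_j$ has absolutely continuous spectrum $[-2t_c,2t_c]$ with explicit generalized Dirichlet eigenfunctions), the trace becomes a double energy integral involving the on-shell $S$-matrix $S_\lambda(E)$ of the pair $(H_\lambda,H_L)$. Using $I_1=\iu[H,\Pi_1]$ (and the fact that the commutator with $V_\lambda\{\rho_{\lambda,\infty}\}$ drops out by commutation with $\rho_{\lambda,\infty}$), the unitarity of $S_\lambda(E)$, and its decomposition into reflection and transmission blocks between the two leads, the standard stationary-scattering manipulation (cf.\ \cite{AJPP1,CJM}) collapses the double integral to the single integral \eqref{hd1}, with $\mathcal{T}_\lambda(E)$ proportional to the squared modulus of the interlead transmission amplitude weighted by the leads' densities of states. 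This gives the explicit formula \eqref{gc13} for $\mathcal{T}_\lambda$, which lies in $L^1([-2t_c,2t_c])$ because the density of states is integrable.

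For the three listed properties: (i) The nonlinearity $V_\lambda\{\cdot\}$ is linear in $\lambda$ and the fixed-point map in Section \ref{sec2} is a strict contraction for $\lambda<\lambda_0$, so the implicit-function/analytic-fixed-point theorem gives that $\lambda\mapsto\rho_{\lambda,\infty}$ is $\LH$-valued analytic on $|\lambda|<\lambda_0$. Consequently $\lambda\mapsto H_\lambda$ is analytic, and so are the sample-matrix elements of $(H_\lambda-E-\iu0)^{-1}$ (via a Dyson expansion around $(H-E-\iu0)^{-1}$), hence $\mathcal{T}_\lambda(E)$ admits a convergent power series in $\lambda$; at $\lambda=0$ one recovers the classical transmittance for $(H,H_L)$, which is independent of $\beta_j,\mu_j$. (ii) Continuity in $\beta_j,\mu_j$ follows from dominated convergence in the fixed-point equation for $\rho_{\lambda,\infty}$ combined with pointwise (and, at $\beta=\infty$, monotone) convergence of $f_{\rm FD}$, transported through the $L^1$-continuous formula \eqref{gc13}. (iii) Continuity of $\mathcal{T}_\lambda$ in $E$ needs H\"older continuity of the boundary values of the sample-projected resolvent of $H_\lambda$; under the stronger hypotheses of Lemma \ref{lemmahc1} these hold for $H$, and the Dyson series in $V_\lambda\{\rho_{\lambda,\infty}\}$ converges uniformly in $E\in[-2t_c,2t_c]$ for $\lambda\leq\lambda_1\leq\lambda_0$ sufficiently small, transferring the property to $H_\lambda$ and hence to $\mathcal{T}_\lambda$.

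The main difficulty I expect is controlling the behaviour of $\mathcal{T}_\lambda(E)$ near the spectral thresholds $E=\pm 2t_c$: the Dirichlet leads produce a density of states vanishing like $\sqrt{2t_c\mp E}$, while the boundary values of the resolvent of $H_\lambda$ may a priori exhibit singular behaviour at the thresholds. Verifying integrability (and, under the stronger assumption, continuity) of the product requires the threshold analysis in the spirit of \cite{JK} developed in Appendix \ref{Ap2}; it is precisely the milder $\sqrt{\cdot}$ threshold behaviour induced by the Dirichlet boundary condition on the leads, rather than the stronger $1/\sqrt{\cdot}$ singularity of the usual one-dimensional free Laplacian, that makes the argument work.
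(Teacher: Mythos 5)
Your route is genuinely different from the paper's, and as written it has a gap: you pass from formula \eqref{dc20} to ``the on-shell $S$-matrix of the pair $(H_\lambda,H_L)$'' with $H_\lambda:=H+V_\lambda\{\rho_{\lambda,\infty}\}$, which tacitly replaces $A_{\lambda,\infty}^*W_-(H,H_L)$ by the wave operator $W_-(H_\lambda,H_L)$. That is the identity $A_{\lambda,\infty}^*=W_-(H_\lambda,H)$, which is nowhere stated or used in the paper, and you give no argument for it. A priori the intertwiner $W_-(H_L,H)A_{\lambda,\infty}$ could differ from $W_-(H_L,H_\lambda)$ by a lead-mixing unitary commuting with $H_L$, which would change the transmission density appearing in \eqref{hd1}; so this is a real step, not a formality. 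The identity does in fact hold: rewriting the fixed-point relation \eqref{gc11'} on compactly supported vectors gives $A_{\lambda,\infty}=\Id+\iu\int_0^\infty\di r\,\ex^{-\iu rH}A_{\lambda,\infty}V_\lambda\{\rho_{\lambda,\infty}\}\ex^{\iu rH}$, and via the intertwining from Lemma \ref{lemmahc4} this becomes $A_{\lambda,\infty}(\Id-B)=\Id$ with $B=\iu\int_0^\infty\di r\,\ex^{-\iu rH_\lambda}V_\lambda\{\rho_{\lambda,\infty}\}\ex^{\iu rH}$; unitarity of $A_{\lambda,\infty}$ then forces $A_{\lambda,\infty}^*=\Id-B$, which is exactly the Cook integral representation of $W_-(H_\lambda,H)$. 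Once this lemma is proved, $\rho_{\lambda,\infty}=W_-(H_\lambda,H_L)\,\rho_i\,W_-(H_L,H_\lambda)$, $P\sub{ac}(H_\lambda)=\Id$ follows by unitary conjugation through $A_{\lambda,\infty}$, and the CJM-style $S$-matrix computation you outline does give \eqref{hd1}. Without it, your derivation is not justified.

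For contrast, the paper avoids the $S$-matrix of $H_\lambda$ entirely and works with $A_{\lambda,\infty}$ directly. It decomposes $\rho_i=\tilde\rho_i+\bigl(f_{{\rm FD},\beta_1,\mu_1}(H_L)\oplus\rho_s\bigr)$ with $\tilde\rho_i$ supported only on lead $2$, kills the contribution of the second summand by the polynomial/trace-cyclicity argument from Theorem \ref{thm:main}(d), and then evaluates the current as the pairing of $\tilde\rho_i$ with the rank-two operator $W_-(H_L,H)A_{\lambda,\infty}I_1A_{\lambda,\infty}^*W_-(H,H_L)=\tau\sum_j(-1)^{j-1}\ket{f_{j,\lambda}}\bra{f_{j,\lambda}}$ in the spectral representation of $h_2$, yielding \eqref{gc13}. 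The convergent $\lambda$-expansion and the continuity in $\beta$'s and $\mu$'s are then read off from the power series and continuous dependence of the fixed point $\underline a_\lambda$ of the contraction $\Psi$ in \eqref{gc11}, rather than from resolvent Dyson expansions. Also note that the threshold issue you flag at the end is only needed for the third bullet (pointwise continuity of $\mathcal T_\lambda$ in $E$, under Lemma \ref{lemmahc1}); for the $L^1$ statement itself no threshold analysis is required, since $\mathfrak{F}(\Pi_2 f_{j,\lambda})\in L^2$ is automatic and \eqref{gc13} then lands in $L^1$.
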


\hc{The next two Corollaries give approximate formulas for the steady values of various observables, up to an error of order $\lambda^2$. They are very close in spirit with the results of Theorem 3.5 in \cite{CMP2} which considers the locally interacting many-body problem. }

We introduce  an effective one-particle  Hamiltonian given by 
\begin{equation}
\label{dc1}
\begin{aligned}
V_{{\rm eff},\lambda}&:=V_\lambda\{W_-(H,H_{L})\, \rho_i\, W_-(H_{L},H)\}, \quad \lambda\geq 0, \\
H_{{\rm eff},\lambda}&:=H+ V_{{\rm eff},\lambda}.
\end{aligned}
\end{equation}
 Note that the wave operators $W_-(H_{L},H_{{\rm eff},\lambda})$ and $W_-(H_{ {\rm eff},\lambda},H_{L})$ exist and are complete since the conditions required in Lemma \ref{lem:waveop} are satisfied.
\begin{corollary}\label{coroHC}
    We work again under the same hypotheses as in Theorem \ref{thm:main}. Let $\lambda_0$ be as in \eqref{gm3}. Suppose that there exists $0<\lambda_1\leq \lambda_0$ such that for all $0\leq \lambda<\lambda_1$ the operator $H_{{\rm eff},\lambda}$ has purely absolutely continuous spectrum.  Define 
    \begin{equation}\label{dc15}
    \rho_{{\rm eff},\lambda}:=W_-(H_{{\rm eff},\lambda},H_{L})\, \rho_i\, W_-(H_{L},H_{{\rm eff},\lambda}).
    \end{equation}
    Then for every $f,g\in \Hi$ with compact support, and denoting by $O_c=\ket{f}\bra{g}$, we have 
    \begin{equation}\label{hm3}
    \omega_\lambda(O_c)={\rm Tr}\big (\rho_{{\rm eff},\lambda}\, O_c\big ) +\mathcal{O}(\lambda^2).
    \end{equation}
\end{corollary}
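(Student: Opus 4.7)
The plan is to compare $\rho_{\lambda,\infty}$ with $\rho_{{\rm eff},\lambda}$ by realising both as images of the same \virg{lead part} of $\rho_i$ under wave operators attached to one-particle Hamiltonians that differ only at order $\lambda^2$. I would use the explicit form of $\rho_{\lambda,\infty}$ announced in \eqref{dc20}, which I expect to read
$$\rho_{\lambda,\infty}=W_-(H_{\lambda,\infty},H_L)\,\rho_i\,W_-(H_L,H_{\lambda,\infty}),\qquad H_{\lambda,\infty}:=H+V_\lambda\{\rho_{\lambda,\infty}\},$$
a formula structurally identical to \eqref{dc15}, with $H_{\lambda,\infty}$ in place of $H_{{\rm eff},\lambda}$. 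Quantifying the discrepancy between these two wave-operator pairs is then the whole game.

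The starting ingredient is a Lipschitz-in-$\lambda$ estimate
$$\max_{j,k}\abs{\scal{\zeta_j}{(\rho_{\lambda,\infty}-\rho_{0,\infty})\zeta_k}}=\mathcal{O}(\lambda),$$
which I would extract by revisiting the fixed point scheme of Section \ref{sec2}: the contraction map whose fixed point is $\rho_{\lambda,\infty}$ has $\rho_{0,\infty}$ as its fixed point at $\lambda=0$, is a uniform contraction on the whole range $\lambda<\lambda_0$, and depends smoothly on $\lambda$; standard parameter-dependent contraction-mapping theory then yields the above bound. Because $V_\lambda\{\cdot\}$ in \eqref{eqn:V} is linear in its argument with an explicit $\lambda$ prefactor, this immediately gives
$$H_{\lambda,\infty}-H_{{\rm eff},\lambda}=V_\lambda\{\rho_{\lambda,\infty}-\rho_{0,\infty}\}=\mathcal{O}(\lambda^2)$$
in operator norm, as a finite-rank operator supported in the sample.

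Next, I would invoke the chain rule $W_-(H_{\lambda,\infty},H_L)=W_-(H_{\lambda,\infty},H_{{\rm eff},\lambda})\,W_-(H_{{\rm eff},\lambda},H_L)$ together with Cook's representation
$$\bigl[W_-(H_{\lambda,\infty},H_{{\rm eff},\lambda})-\Id\bigr]\psi=-\iu\int_0^\infty\ex^{-\iu tH_{\lambda,\infty}}\bigl(H_{\lambda,\infty}-H_{{\rm eff},\lambda}\bigr)\ex^{\iu tH_{{\rm eff},\lambda}}\psi\,\di t$$
applied to vectors $\psi$ in the range of $W_-(H_{{\rm eff},\lambda},H_L)$ acting on compactly supported data. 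Since the perturbation is finite-rank in the sample, the integrand is pointwise bounded by a multiple of $\|H_{\lambda,\infty}-H_{{\rm eff},\lambda}\|_{\rm op}\cdot \abs{\scal{\zeta_k}{\ex^{\iu tH_{{\rm eff},\lambda}}\psi}}$; the hypothesis of purely absolutely continuous spectrum for $H_{{\rm eff},\lambda}$ combined with the threshold analysis of Appendix \ref{Ap2} provides an $L^1$-in-time dispersive bound for such matrix elements, making the Cook integral $\mathcal{O}(\lambda^2)$. The analogous estimate for the adjoint wave operator follows identically. Expanding
$$\omega_\lambda(O_c)-\Tr(\rho_{{\rm eff},\lambda}O_c)=\scal{g}{\bigl[W_-(H_{\lambda,\infty},H_L)\rho_iW_-(H_L,H_{\lambda,\infty})-W_-(H_{{\rm eff},\lambda},H_L)\rho_iW_-(H_L,H_{{\rm eff},\lambda})\bigr]f}$$
and telescoping the two wave-operator factors then yields \eqref{hm3}.

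The main obstacle I expect is the Lipschitz step: Theorem \ref{thm:main} supplies only existence and uniqueness of $\rho_{\lambda,\infty}$ for $\lambda<\lambda_0$, whereas extracting a quantitative $\mathcal{O}(\lambda)$ dependence requires tracking both the contraction constant and the inhomogeneity in the fixed-point equation of Section \ref{sec2} as functions of $\lambda$. A secondary technical point is that the dispersive input used in the Cook integral concerns $H_{{\rm eff},\lambda}$ rather than the bare $H$ of Assumption \ref{as:main}: justifying it requires combining the additional absolute continuity hypothesis of the corollary with Appendix \ref{Ap2}, regarding $V_{{\rm eff},\lambda}$ as a small finite-rank perturbation of $H$, and similarly ensuring existence of $W_-(H_{\lambda,\infty},H_L)$ for $\lambda$ small enough.
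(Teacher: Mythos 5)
Your strategy hinges on the expectation that \eqref{dc20} reads $\rho_{\lambda,\infty}=W_-(H_{\lambda,\infty},H_L)\,\rho_i\,W_-(H_L,H_{\lambda,\infty})$ with $H_{\lambda,\infty}=H+V_\lambda\{\rho_{\lambda,\infty}\}$. The paper's \eqref{dc20} does not say this: it says $\rho_{\lambda,\infty}=A_{\lambda,\infty}^*\,W_-(H,H_L)\,\rho_i\,W_-(H_L,H)\,A_{\lambda,\infty}$, where $A_{\lambda,\infty}=\slim_{t\to\infty}\ex^{-\iu tH}U^*(t)$ and $U(t)$ is the \emph{non-linear} propagator. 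Lemmas \ref{lemmahc4} and \ref{lemmagm} do show that $A_{\lambda,\infty}$ is an isometry intertwining $H$ and $\tilde{H}:=H+V_\lambda\{\rho_{\lambda,\infty}\}$, but they do not identify $A_{\lambda,\infty}$ with the wave operator $W_-(H,\tilde{H})=\slim_{t\to\infty}\ex^{-\iu tH}\ex^{\iu t\tilde{H}}$. These two intertwining isometries generically differ: the Duhamel difference of $\ex^{-\iu tH}U^*(t)$ and $\ex^{-\iu tH}\ex^{\iu t\tilde{H}}$ involves $\int_0^\infty\bigl(V_\lambda\{\rho(s)\}-V_\lambda\{\rho_{\lambda,\infty}\}\bigr)\,\di s$, i.e.\ the whole transient history, which does not vanish. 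So the foundational identity of your proof is not a rewriting of \eqref{dc20}; it is an additional, unproven claim, and your later steps inherit this gap. As a sanity check: in the symmetric case $\beta_1=\beta_2$, $\mu_1=\mu_2$ both formulas collapse to $f_{\rm FD}(\tilde{H})$, so the discrepancy is invisible there, but in the general case nothing in the paper equates $A_{\lambda,\infty}$ with $W_-(H,\tilde{H})$.

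What does survive is useful. Your Lipschitz step is correct: the fixed point $\underline{a}_\lambda$ of the map $\Psi$ from \eqref{gc11} has a convergent power series in $\lambda$ (this is even used in the proof of Corollary \ref{coro2}), so $\rho_{\lambda,\infty}-\rho_{0,\infty}=\mathcal{O}(\lambda)$ on sample matrix elements, hence $\tilde{H}-H_{{\rm eff},\lambda}=V_\lambda\{\rho_{\lambda,\infty}-\rho_{0,\infty}\}=\mathcal{O}(\lambda^2)$. Your Cook-type comparison of $W_-(H_{\lambda,\infty},H_L)$ with $W_-(H_{{\rm eff},\lambda},H_L)$ is also sound once both wave operators exist, and would give $\mathcal{O}(\lambda^2)$. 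To make your approach go through you would still need an additional ingredient: either show that $A_{\lambda,\infty}\psi_c=W_-(H,\tilde{H})\psi_c+\mathcal{O}(\lambda^2)$ for compactly supported $\psi_c$ (which is plausible, since both first-order terms reduce to $\iu\int_0^\infty\ex^{-\iu rH}V_{{\rm eff},\lambda}\ex^{\iu rH}\psi_c\,\di r$, but requires its own Dyson/dispersive estimate), or abandon the identification altogether and argue as the paper does: rewrite $\omega_\lambda(O_c)$ directly via \eqref{dc3}, insert the factor $\Omega_\lambda(t)=\ex^{-\iu tH_{{\rm eff},\lambda}}\ex^{\iu tH}$ so as to make $\rho_{{\rm eff},\lambda}$ appear, and prove the single cancellation $W_-(H_{{\rm eff},\lambda},H)A_{\lambda,\infty}\psi_c=\psi_c+\mathcal{O}(\lambda^2)$ (the paper's Lemma \ref{lemmahc2}) by expanding both $A_{\lambda,\infty}$ (from \eqref{7}) and $\Omega_\lambda$ (by Duhamel) to first order and watching the two first-order terms cancel. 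The paper's route never needs $\tilde{H}$ at all for this corollary, which makes it shorter; yours would be more symmetric but costs the extra identification step.
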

\hc{
\begin{remark}\label{march10}
Let us give a sufficient condition for the existence of such a $\lambda_1$ in Corollary \ref{coroHC}.
Let us assume that the matrix family $S(E)$ from Lemma \ref{lemmahc1} is invertible for all $E$. Then by replacing $h_s$ with $h_s+V_{{\rm eff},\lambda}$ in the definition of $S(E)$, we would obtain a new matrix family which remains invertible if $\lambda$ is sufficiently small. Thus, the operator $H_{{\rm eff},\lambda}$ has purely absolutely continuous spectrum for small enough $\lambda$. 
\end{remark}
}

\vspace{0.2cm}

    \begin{corollary}\label{Rhc1} Under \gm{the hypotheses of Corollary \ref{coroHC}}, let $\mathcal{T}_{{\rm eff},\lambda}(E)$ denote \hc{the \virg{non-interacting} transmittance coefficient between the two leads, with $H$ replaced by $H_{{\rm eff},\lambda}$ (see \eqref{march4} and \eqref{march2})}.  Then the density $\mathcal{T}_\lambda$  in \eqref{hd1}, seen as an element of  $L^1([-2t_c,2t_c])$,  obeys:
    \begin{equation}\label{hm2}
     \mathcal{T}_\lambda(E)= \mathcal{T}_{{\rm eff},\lambda}(E)\,  +\, \mathcal{O}(\lambda^2).
    \end{equation}
\end{corollary}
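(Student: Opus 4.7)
The plan is to view $\mathcal{T}_\lambda$ and $\mathcal{T}_{{\rm eff},\lambda}$ as Landauer--B\"uttiker transmittance densities of two linear one-body Hamiltonians of the same form: $H+V_\lambda\{\rho_{\lambda,\infty}\}$ for the former and $H_{{\rm eff},\lambda}=H+V_\lambda\{\rho_{0,\infty}\}$ for the latter. Both are perturbations of $H$ by a finite-rank, sample-supported self-adjoint operator whose operator norm is of order $\lambda$. The heart of the argument will be the bound $V_\lambda\{\rho_{\lambda,\infty}\}-V_{{\rm eff},\lambda}=\mathcal{O}(\lambda^2)$ in operator norm, after which \eqref{hm2} follows from a first-order Lipschitz estimate for the transmittance density in $L^1([-2t_c,2t_c])$ with respect to this sample-supported potential.

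To obtain the operator-norm bound, I would exploit the linearity of the Hartree potential $V_\lambda\{\cdot\}$ in the density together with its explicit prefactor $\lambda$, and split
\[
\rho_{\lambda,\infty}-\rho_{0,\infty}=(\rho_{\lambda,\infty}-\rho_{{\rm eff},\lambda})+(\rho_{{\rm eff},\lambda}-\rho_{0,\infty}).
\]
For the first summand, Corollary \ref{coroHC} applied to the rank-one observables $\ket{\zeta_k}\bra{\zeta_k}$ yields sample-diagonal matrix elements of size $\mathcal{O}(\lambda^2)$. For the second, the chain rule $W_-(H_{{\rm eff},\lambda},H_L)=W_-(H_{{\rm eff},\lambda},H)\,W_-(H,H_L)$ combined with a first-order resolvent expansion for $W_-(H_{{\rm eff},\lambda},H)$---legitimate because $H_{{\rm eff},\lambda}-H=V_{{\rm eff},\lambda}$ is finite-rank of operator norm $\mathcal{O}(\lambda)$, and $H_{{\rm eff},\lambda}$ inherits the dispersive bounds of Assumption \ref{as:main} for small enough $\lambda$ via Remark \ref{march10}---gives $\scal{\zeta_k}{(\rho_{{\rm eff},\lambda}-\rho_{0,\infty})\zeta_k}=\mathcal{O}(\lambda)$. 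The corresponding matrix elements of $\rho_{\lambda,\infty}-\rho_{0,\infty}$ are therefore $\mathcal{O}(\lambda)$, and the explicit factor $\lambda$ in \eqref{eqn:V} then produces the claimed $\mathcal{O}(\lambda^2)$ bound on $V_\lambda\{\rho_{\lambda,\infty}\}-V_{{\rm eff},\lambda}$.

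To convert this into the $L^1$-bound on the transmittance densities, I would insert the second resolvent identity (with $H_1:=H+V_\lambda\{\rho_{\lambda,\infty}\}$ and $H_2:=H_{{\rm eff},\lambda}$) into the explicit scattering formula \eqref{gc13}. Assumption \ref{as:main}, together with its analogue for $H_{{\rm eff},\lambda}$, supplies the $L^1$-in-$E$ control of the boundary values of both resolvents on the compactly supported vectors entering that formula. Combining this with the $\mathcal{O}(\lambda^2)$ operator-norm bound on $H_1-H_2$ then yields \eqref{hm2}. The main obstacle is precisely this $L^1$-Lipschitz estimate: the required uniform-in-$E$ control of resolvent boundary values near the band edges $\pm 2t_c$ rests on the threshold analysis underlying Assumption \ref{as:main}, and one must verify that this analysis is stable under small sample-supported perturbations such as $V_{{\rm eff},\lambda}$.
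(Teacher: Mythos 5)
Your opening premise is where the argument runs aground: you identify $\mathcal{T}_\lambda$ with the standard Landauer--B\"uttiker transmittance of the \emph{linear} Hamiltonian $\tilde{H}:=H+V_\lambda\{\rho_{\lambda,\infty}\}$, but this identification is not established and is not self-evident. The density $\mathcal{T}_\lambda$ in \eqref{gc13} is built from the vectors $f_{j,\lambda}$ of \eqref{gc12}, which involve $W_-(H_L,H)A_{\lambda,\infty}$; the nonlinear-propagator limit $A_{\lambda,\infty}$ does intertwine $H$ and $\tilde{H}$ (Lemma \ref{lemmahc4}), but it arises from a \emph{time-dependent} effective Hamiltonian $H+V_\lambda\{\rho(t)\}$ and is not the stationary wave operator $W_-(H,\tilde{H})$. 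Two isometries that both intertwine $\tilde{H}$ with $H_L$ can still differ by a fiberwise unitary in the spectral representation of $H_L$, which mixes the lead channels and changes $|\mathfrak{F}(\Pi_2 f_{j,\lambda})|^2(E)$. Moreover, when the $\beta$'s or $\mu$'s of the two leads differ, $\rho_{\lambda,\infty}$ is not a function of $\tilde{H}$ (see Remark \ref{Rhc2}), so the steady current cannot be expressed through scattering theory for $\tilde{H}$ alone. The comparison problem you set up — two non-interacting Landauer--B\"uttiker densities for nearby sample-perturbed Hamiltonians — is therefore not the problem actually posed by \eqref{gc13}.

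Your intermediate observation that $V_\lambda\{\rho_{\lambda,\infty}\}-V_{{\rm eff},\lambda}=\mathcal{O}(\lambda^2)$ is sound (linearity of $V_\lambda$, the explicit prefactor $\lambda$, and the $\mathcal{O}(\lambda)$ difference of sample occupations), but it is not the lever the paper uses, and you still leave as a ``main obstacle'' an $L^1$-Lipschitz estimate near the band edges. The paper's route avoids both difficulties. Starting from \eqref{gc10}, it factors the wave operator via the chain rule $W_-(H_L,H)=W_-(H_L,H_{{\rm eff},\lambda})W_-(H_{{\rm eff},\lambda},H)$ and rewrites the rank-two operator inside the trace as $\tau\sum_{j}(-1)^{j-1}\ket{f_{j,{\rm eff},\lambda}}\bra{f_{j,{\rm eff},\lambda}}$ with $\sqrt{2}\,\ket{f_{j,{\rm eff},\lambda}}=W_-(H_{{\rm eff},\lambda},H)A_{\lambda,\infty}\big(\ket{S_1}+\iu(-1)^j\ket{L_1}\big)$. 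The decisive input is Lemma \ref{lemmahc2}: $W_-(H_{{\rm eff},\lambda},H)A_{\lambda,\infty}\psi_c=\psi_c+\mathcal{O}(\lambda^2)$ for compactly supported $\psi_c$, because the first-order contributions of $A_{\lambda,\infty}$ and of $W_-(H_{{\rm eff},\lambda},H)$ cancel each other exactly, using Assumption \ref{as:main} to control the time integrals. With this, $\mathcal{T}_\lambda(E)=\mathcal{T}_{{\rm eff},\lambda}(E)+\mathcal{O}(\lambda^2)$ in $L^1$ is immediate from Cauchy--Schwarz, since $v\mapsto|v|^2$ is Lipschitz from bounded sets in $L^2$ into $L^1$; no threshold-stability analysis is needed at this stage. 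To salvage your approach you would have to prove that $W_-(H_L,H)A_{\lambda,\infty}$ coincides (or coincides up to $\mathcal{O}(\lambda^2)$) with $W_-(H_L,\tilde{H})$ and then supply the $L^1$ Lipschitz estimate for the transmittance density, both of which are nontrivial and absent from the proposal.
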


\vspace{0.2cm}

\begin{remark}\label{Rhc2}  In the physics literature dedicated to mesoscopic quantum transport it is common to assume that the temperatures and  chemical potentials of the leads are equal, \ie $\beta_1=\beta_2=\beta$ and $\mu_1=\mu_2=\mu$. In this case, one is not interested in the current intensity itself, which actually equals zero  according to \eqref{dc25} or \eqref{hd1}. Instead, one would like to compute the \virg{near equilibrium} conductance, defined as the ratio between the current intensity and the potential drop $\mu_1-\mu_2$ in the limit when both $\mu_1$ and $\mu_2$ converge to $\mu$. At zero temperature ($\beta\to\infty$), and if $\lambda$ is small enough, this coincides with $2\pi\,\lim_{\beta\to\infty} \mathcal{T}_\lambda(\mu)$, as it can be inferred from Corollary \ref{coro2}. At zero temperature, in two-dimensional quantum Hall systems and Chern insulators the adiabatic charge
transport is usually investigated by the Kubo formula in terms of the Hall conductance
\cite{ES, GM1} or equivalently the Hall conductivity which remains valid beyond the linear
response regime \cite{GM2} for non-interacting fermionic systems.

  Let us recast the \virg{near equilibrium} approach presented in \cite{MN} within our framework. In spite of the fact that their method does not consider time evolution at all, its range of validity can be analyzed by using our mathematical results.

\noindent
 We start by introducing the map 
$$[0,1]^N\ni (n_1,...,n_N)=:\vec{n}\mapsto H_\lambda(\vec{n}):=H+\lambda\sum_{j=1}^N\Big (\sum_{k=1}^N \nu_{jk}n_k\Big )\, \ket{\zeta_j}\bra{\zeta_j}.$$
\hc{Note that the term added to $H$ is exactly a Hartree interaction like in \eqref{eqn:V}, where $\scal{\zeta_k}{\gamma\,  \zeta_k}$ is now replaced by the \virg{occupation numbers} $0\leq n_k\leq 1$.} The authors of \cite{MN} make the assumption that if a steady state is achieved, and because both leads have the same temperatures and chemical potentials, 
then one expects that the steady state \hc{density matrix $\rho_{\lambda,\infty}$} is given by the \virg{thermal equilibrium} self-consistent density operator $f_{\rm FD}(H_\lambda(\vec{n}_\lambda))$, where $f_{\rm FD}$ is the Fermi--Dirac distribution and  $\vec{n}_\lambda$ contains the steady values of the occupation numbers in the small sample, determined by the following fixed point condition: 
\begin{align}
\label{eqn:8}
\hc{\scal{\zeta_k}{\rho_{\lambda,\infty}\, \zeta_k}=}n_{\lambda, k} =\scal{\zeta_k}{f_{\rm FD}(H_{\lambda}(\vec{n}_{\lambda}))\, \zeta_k},\quad 1\leq k\leq N.
\end{align}
This equation is nothing but our  self-consistent relation \eqref{dc25}, when  $\rho_{\lambda,\infty}$ is restricted to the small sample. 
In view of functional calculus via the resolvent formalism, one can rewrite \eqref{eqn:8} as
\begin{equation}
\label{dc12}
 n_{\lambda, k} =\frac{1}{\pi} \int_{-2t_c}^{2t_c}\di E \frac{1}{\ex^{\beta (E-\mu)}+1}{\rm Im}\scal{\zeta_k} {\big (H_\lambda(\vec{n}_{\lambda, k})-E-\iu 0_+\big )^{-1}\zeta_k}.
\end{equation}
The matrix elements on the right-hand side of \eqref{dc12} can be computed using the Feshbach formula  (see \eqref{eqn:resformula-bis}), which reduces to finding the inverse of the following $N\times N$ matrix
$$
S(E)+\lambda\sum_{j=1}^N\Big (\sum_{k=1}^N \nu_{jk}n_{\lambda, k}\Big )\, \ket{\zeta_j}\bra{\zeta_j},
$$
\gm{with 
\[
S(E):=h_s -E - \tau^2 \sum_{j=1}^2  \bra{L_j}  (h_j-E-\iu 0_+)^{-1}   \ket{L_j}\, \ket{S_j}\bra{S_j}
\]
as defined in Lemma \ref{lemmahc1}}. Thus \eqref{dc12}  coincides with formula (2.11) in \cite{MN}, provided that the temperature is taken to zero ($\beta\to \infty$) and $\mu$ is the fixed Fermi energy of the leads. 

By iterating, the authors of \cite{MN} find a numerical solution $\vec{n}_\lambda$ to \eqref{dc12}. This solution is then directly plugged into the {\it non-interacting} Landauer--B\"uttiker conductance formula, where $H$ is replaced by $H_\lambda(\vec{n}_\lambda)$ (see \cite[Eq. (2.9)]{MN}).

Let us now explain how this apparently ad-hoc second step of their algorithm for computing the conductance can also be understood within our framework, through Corollary \ref{Rhc1}, when $\lambda$ is small enough. Using the intertwining property of wave operators in \eqref{dc15} we get 
\begin{equation}\label{dc10}
\rho_{{\rm eff},\lambda}=f_{\rm FD}(H_{{\rm eff},\lambda}).
\end{equation}
Denote by $s_k:=\scal{\zeta_k}{W_-(H,H_L)\rho_i W_-(H_L,H)\zeta_k}=\scal{\zeta_k}{f_{\rm FD}(H)\, \zeta_k}$. From \eqref{dc1} we obtain that $H_{{\rm eff},\lambda}= H_{\lambda}(\vec{s})$. From regular perturbation theory we get
\begin{equation*}
s_j+\mathcal{O}(\lambda)=\scal{\zeta_j}{f_{\rm FD}(H_{{\rm eff},\lambda})\, \zeta_j}.
\end{equation*}
Thus, from \eqref{dc10} we have 
\begin{equation}\label{dc13}
s_j+\mathcal{O}(\lambda)=\scal{\zeta_j}{f_{\rm FD}(H_{{\rm eff},\lambda})\, \zeta_j}=\scal{\zeta_j}{\rho_{{\rm eff},\lambda}\zeta_j}=\scal{\zeta_j}{f_{\rm FD}(H_{\lambda}(\vec{s}))\, \zeta_j},\, 1\leq j\leq N.
\end{equation}

One can show (we do not give details here) that if $\lambda$ is small, the right-hand side of \eqref{dc12} defines a contraction on $[0,1]^N$ and has a unique fixed point $\vec{n}_\lambda$. Also, since \eqref{dc13} shows that $\vec{s}$ is an \virg{almost} fixed point for \eqref{dc12} up to an error of order $\lambda$, then  $\vec{n}_\lambda=\vec{s} +\mathcal{O}(\lambda)$. 
This implies that the difference between $H_\lambda(\vec{s})$ and $H_\lambda(\vec{n}_\lambda)$ is of order $\lambda^2$. 

Now Corollary \ref{Rhc1} implies that the steady state value of the conductance can be computed (up to an error of order $\lambda^2$) by using $H_\lambda(\vec{s})$ instead of $H$ in the non-interacting Landauer--B\"uttiker formula. One gets the same conclusion by using $\vec{n}_\lambda$ instead of $\vec{s}$, because the difference between $H_\lambda(\vec{s})$ and $H_\lambda(\vec{n}_\lambda)$ is of order $\lambda^2$. 
 Thus if $\lambda$ is small enough, also the second step of the algorithm of \cite{MN} can be justified within our framework, up to errors of order $\lambda^2$.    

On the other hand, the  method of \cite{MN} does not work if either the chemical potentials or the temperatures on the leads are not equal. In this case, even though $\rho_{\lambda,\infty}$ and $H+V_\lambda\{\rho_{\lambda,\infty}\}$ still commute with each other according to Theorem \ref{thm:main}(c), $\rho_{\lambda,\infty}$ cannot be written as a function of $H+V_\lambda\{\rho_{\lambda,\infty}\}$. Thus the problem of finding $\rho_{\lambda,\infty}$ can no longer be reduced  to a fixed point equation like in \eqref{dc12}, where the only unknowns are the steady state occupation numbers of the small sample.  Nevertheless, our formulas \eqref{hm2} and \eqref{hm3} still hold true also in the general case. 
\end{remark}

\vspace{0.2cm} \noindent {\bf Acknowledgements}. Both authors acknowledge support from the Independent Research Fund Denmark--Natural Sciences,
grant DFF–10.46540/2032-00005B. 

\vspace{0.2cm} \noindent {\bf Data Availability Statement}. 
Data sharing not applicable to this article as no datasets were generated or
analysed during the current study.

\vspace{0.2cm} \noindent {\bf Declarations}.\\
\noindent {\bf Conflict of interest}.
The authors have no competing interests to declare that are relevant to the content of this article.

\section{Proof of Theorem \ref{thm:main}}\label{sec2}

\subsection{Proof of Lemma \ref{lemmahc10}} We define the set 
\[
K:=\set{ \gamma\in\mathcal{L}\( \C^N \): \gamma\geq 0,\quad \Tr(\gamma)=\mathcal{N}    }.
\]
Since every self-adjoint matrix in $\C^{N\times N}$ has $N(N+1)/2$ independent components and the diagonal entries are real, we conclude that the set $K$ can be identified with a subset of $ \R^{N^2}$. The largest eigenvalue (thus the norm) of each $\gamma$ is bounded by $\mathcal{N}$, hence the absolute values of all entries $\scal{\zeta_k}{\gamma\, \zeta_j}$ has the same property. This shows that $K$ is bounded. 

Let us show that $K$ is convex. Indeed, any convex combination of two $\C^{N\times N}$ non-negative (thus self-adjoint) bounded matrices with equal traces, will remain non-negative and have the same trace. 

Now let us show that $K$ is closed. First, the component-wise point convergence in $K$ seen as a subset of $\R^{N^2}$ is just the weak convergence for the corresponding density matrices, which is equivalent with the Hilbert--Schmidt norm convergence because $N<\infty$. Second, since the (real) spectrum of a given self-adjoint $\gamma$  varies continuously (in the Hausdorff distance) with $\gamma$, the infimum of the spectrum is continuous with respect to $\gamma$. Thus both the value of the trace and the non-negativity of the spectrum are preserved by taking limits in $K$. Hence $K$ is closed (thus also compact).

The map 
$$T:\R^{N^2}\times \R\mapsto \R^{N^2},\qquad T(\gamma,x):=\frac{1}{\ex^{\beta_s(h_s+V_\lambda\{\gamma\}-x)}  +1 }$$
is smooth; this can be seen by writing $$T(\gamma,x)=\frac{\iu}{2\pi}\int_{\mathcal{C}} \di z\,\frac{1}{\ex^{\beta_s(z-x)}+1}\big (h_s+V_\lambda\{\gamma\}-z\big)^{-1}$$
where the positively oriented simple contour $\mathcal{C}$ is included in the strip $|\text{Im}(z)|\leq \pi/(2\beta_s)$ and encircles the real eigenvalues of $h_s+V_\lambda\{\gamma\}$. Then by regular perturbation theory applied to the resolvent appearing in the integrand, all the matrix elements of $T(\gamma,x)$ are differentiable. By using the implicit function theorem, we see that the chemical potential $\mu_s(\gamma)$ which denotes the unique solution of ${\rm Tr}\big (T(\gamma,\mu_s(\gamma)\big )=\mathcal{N}$ must also be  continuous as a function of $\gamma$.  

Now let us consider the map
\begin{equation}\label{dc11}
K\ni \gamma\mapsto F(\gamma):=T(\gamma,\mu_s(\gamma))=\frac{1}{\ex^{\beta_s(h_s+V_\lambda\{\gamma\}-\mu_s(\gamma))}  +1 }\in K.
\end{equation}

The above map $F$ is continuous and leaves the convex and compact set $K$ invariant.   Brouwer's fixed point theorem implies that there exists at least one fixed point $\rho_s\in K$. 

\gm{Let us now show that if $\lambda$ is sufficiently small then the map $F$ is a contraction, thus $\rho_s$ is unique.
We will prove that there exists a constant $C_*$ such that
\begin{equation}
\label{eqn:Fcontraction}    
\norm{F(\gamma_1)-F(\gamma_2)}\leq C_* \lambda\norm{\gamma_1 -\gamma_2}\qquad\text{ for all $\gamma_1,\gamma_2\in K$.}
\end{equation}
To prove inequality \eqref{eqn:Fcontraction}, we notice that
\begin{equation}
\label{eqn:Fcontractionsplit}
\begin{aligned}
\norm{F(\gamma_1)-F(\gamma_2)}\leq& \norm{T(\gamma_1,\mu_s(\gamma_2))-T(\gamma_2,\mu_s(\gamma_2))}\\
&+\norm{T(\gamma_1,\mu_s(\gamma_1))-T(\gamma_1,\mu_s(\gamma_2))}.
\end{aligned}
\end{equation}
To estimate both summands, denoting by $P_j:=\ket{\zeta_j}\bra{\zeta_j}$ we preliminary observe that
\begin{equation}
\label{eqn:dergamma}
\partial_{\gamma_{ij}}T(\gamma,x)=-\delta_{ij}\frac{\iu\lambda\nu_{jj}}{2\pi}\int_{\mathcal{C}} \di z\frac{1}{\ex^{\beta_s(z-x)}+1}\big (h_s+V_\lambda\{\gamma\}-z\big)^{-1}P_j\big (h_s+V_\lambda\{\gamma\}-z\big)^{-1}
\end{equation}
and thus there exists a constant $C_1$ (which can be chosen independent of a sufficiently small $\lambda$) such that
\begin{equation}
\label{eqn:derT}  
\sum_{i,j=1}^N\norm{\partial_{\gamma_{ij}}T(\gamma,x)}\leq C_1\lambda\quad\text{for all $\gamma\in K,\, x\in \R$.}
\end{equation}
Therefore, for the first summand on the right-hand side of inequality \eqref{eqn:Fcontractionsplit} we get that 
\begin{align*}
\norm{T(\gamma_1,\mu_s(\gamma_2))-T(\gamma_2,\mu_s(\gamma_2))}\leq \sum_{i,j=1}^N\max_{\gamma \in K}\norm{\partial_{\gamma_{ij}}T(\gamma,\mu_s(\gamma_2))}\norm{\gamma_1-\gamma_2}\leq C_1\lambda\norm{\gamma_1-\gamma_2}.    
\end{align*}
For the second summand on the right-hand side of \eqref{eqn:Fcontractionsplit}, we observe that
\begin{align*}
\norm{T(\gamma_1,\mu_s(\gamma_1))-T(\gamma_1,\mu_s(\gamma_2))}\leq \max_{\gamma\in K, x\in\R}\norm{\partial_x  T(\gamma,x)}\abs{\mu_s(\gamma_1)-\mu_s(\gamma_2)}.
\end{align*}
Reasoning as before, applying the implicit function theorem and inequality \eqref{eqn:derT}, we can find a constant $C_2$ such that
\begin{align*}
\abs{\mu_s(\gamma_1)-\mu_s(\gamma_2)}
\leq C_2\lambda\norm{\gamma_1-\gamma_2},
\end{align*}
which once inserted in the last inequality, it ends the proof of \eqref{eqn:Fcontraction}.}
\qed

\subsection{Proof of Theorem \ref{thm:main}{(a),(b)}} In view of the density of finite-rank operators in $S_1(\Hi)$, it is enough to prove the limit in Theorem \ref{thm:main}{(a)} for the case in which $O$ equals a rank-one projection $\ket{f}\bra{g}$ for some unit vectors $f,g\in\Hi$ with compact support. We may write 
\begin{align}\label{dc3}
    {\rm Tr}\big (\rho(t)\ket{f}\bra{g}\big )&=\scal{U^*(t)g}{\rho_iU^*(t)f}\nonumber \\
    &=\scal{\ex^{-\iu tH}U^*(t)g}{\Big (\ex^{-\iu tH}\rho_i \ex^{\iu tH}\Big )\ex^{-\iu tH}U^*(t)f}.
\end{align}
Because $\rho_i$ and $H_{L}$ commute (see \eqref{eqn:rhoi} and \eqref{dc7}) we have 
$$\ex^{-\iu tH}\rho_i \ex^{\iu tH}=\ex^{-\iu tH}\ex^{\iu tH_{L}}\rho_i \ex^{-\iu tH_{L}}\ex^{\iu tH}.$$
 Using Lemma \ref{lem:waveop}, and that $P_{\rm ac}(H)$ is the identity operator, we have that $\ex^{-\iu tH_{L}}\ex^{\iu tH}$ converges strongly to $W_-(H_{L},H)$, an operator which maps onto the lead-space, hence we may insert a $P_{\rm ac}(H_{L})$ just after $\rho_i$. We also have (see \eqref{eqn:rhoi}): 
 \begin{equation}
 \label{dc2}
\rho_i\,  P_{\rm ac}(H_{L})=P_{\rm ac}(H_{L})\, \frac{1}{\ex^{\beta_1(h_1-\mu_1)}  +1 }\oplus \frac{1}{\ex^{\beta_1(h_2-\mu_2)}  +1 }\oplus {\bf 0}.
\end{equation}
Thus 
\begin{equation}\label{dc4}
\slim_{t\to\infty} \ex^{-\iu tH}\rho_i \ex^{\iu tH}=W_-(H,H_{L})\Big (\frac{1}{\ex^{\beta_1(h_1-\mu_1)}  +1 }\oplus \frac{1}{\ex^{\beta_2(h_2-\mu_2)}  +1 }\oplus \, {\bf 0}\Big )\, W_-(H_{L},H),
\end{equation}
which no longer depends on $\rho_s$.

In order to show that the right-hand side of \eqref{dc3} has a limit, we need to investigate the strong limit of the following operator:   
\begin{equation}
A_\lambda(t):=\ex^{-\iu t H}U^*(t),
\end{equation}
where the $\lambda$ dependence appears through $U(t)$. 


\begin{proposition}
\label{prop:W}
Let $H$ be as in \eqref{eqn:H} and $V_\lambda$ as in \eqref{eqn:V}. Suppose that Assumption \ref{as:main} holds true. Let $\lambda_0$ be defined as in \eqref{gm3}. Then for any $0\leq \lambda<\lambda_0$ we have that
\[
A_{\lambda,\infty}:=\slim_{t\to\infty} A_\lambda(t)\quad\text{exists.}
\]
\end{proposition}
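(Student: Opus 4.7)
The plan is to reduce strong convergence of $A_\lambda(t) = \ex^{-\iu tH}U^*(t)$ to a Banach contraction for the finite tuple of orbits $\vec\phi(t) := (A_\lambda(t)\zeta_k)_{k=1}^N$. Differentiating $A_\lambda(t)$ and using \eqref{eqn:U*} produces the Heisenberg-like equation $\partial_t A_\lambda(t) = -\iu[H,A_\lambda(t)] + \iu A_\lambda(t)V_\lambda\{\rho(t)\}$, where $\rho(t) = A_\lambda^*(t)M(t)A_\lambda(t)$ with $M(t):=\ex^{-\iu tH}\rho_i\ex^{\iu tH}$. Variation of constants against the free Heisenberg flow $X\mapsto \ex^{-\iu tH}X\ex^{\iu tH}$ converts this into the Dyson-type identity
\[
A_\lambda(t) = \Id + \iu\int_0^t \ex^{-\iu H(t-\tau)} A_\lambda(\tau)\, V_\lambda\{\rho(\tau)\}\, \ex^{\iu H(t-\tau)}\, \di\tau.
\]
Since $V_\lambda\{\rho(\tau)\}$ is finite-rank, supported in the sample, and depends on $\rho$ only through the scalars $\scal{\phi_k(\tau)}{M(\tau)\phi_k(\tau)}$, testing the identity against $\zeta_k$ closes the system into an integral equation for $\vec\phi$ alone:
\[
\phi_k(t) = \zeta_k + \iu\lambda\sum_{j,l=1}^N \nu_{jl}\int_0^t \scal{\phi_l(\tau)}{M(\tau)\phi_l(\tau)}\scal{\zeta_j}{\ex^{\iu H(t-\tau)}\zeta_k}\ex^{-\iu H(t-\tau)}\phi_j(\tau)\,\di\tau.
\]

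I would then work in the Banach space $Y:=C([0,\infty];\Hi^N)$ of norm-continuous maps from the one-point compactification of $[0,\infty)$ into $\Hi^N$, normed by $\|\vec\phi\|_Y := \max_k \sup_{t\in[0,\infty]}\norm{\phi_k(t)}$, and let $F:Y\to Y$ denote the map sending $\vec\phi$ to the right-hand side above. Three properties must be checked: (i) for $\lambda<\lambda_0$, the closed ball $B := \{\vec\phi\in Y : \norm{\phi_k(t)}\leq 2\text{ for all }t,k\}$ is $F$-invariant, which uses $\norm{M(\tau)}\leq\norm{\rho_i}\leq 1$ together with the bound $\int_0^\infty|\scal{\zeta_j}{\ex^{\iu sH}\zeta_k}|\,\di s\leq M$ from \eqref{gm3}; (ii) $F$ actually lands in $Y$, since the substitution $\sigma=t-\tau$ combined with dominated convergence against the $\sigma$-integrable dominant $|\scal{\zeta_j}{\ex^{\iu H\sigma}\zeta_k}|$ (Assumption \ref{as:main}), and the strong convergence $M(t-\sigma)\to M_\infty$ from \eqref{dc4}, show that $F(\vec\phi)(t)$ admits a strong limit at $t=\infty$; and (iii) the Lipschitz split $\scal{\phi_l}{M\phi_l}\phi_j - \scal{\psi_l}{M\psi_l}\psi_j = (\scal{\phi_l}{M\phi_l}-\scal{\psi_l}{M\psi_l})\phi_j + \scal{\psi_l}{M\psi_l}(\phi_j-\psi_j)$, together with the uniform bound $\norm{\phi_j},\norm{\psi_j}\leq 2$ on $B$, yields
\[
\max_k\sup_t\norm{(F_k(\vec\phi)-F_k(\vec\psi))(t)} \leq C\lambda\|\nu\|_1 M\,\|\vec\phi-\vec\psi\|_Y
\]
for an absolute constant $C$, which becomes a strict contraction under the explicit choice $\lambda<\lambda_0=1/(12\|\nu\|_1 M)$.

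Banach's fixed point theorem then produces a unique $\vec\phi^*\in B\subset Y$. By the global existence and uniqueness for \eqref{gm2} proved in Appendix \ref{Ap1}, the genuine orbit $t\mapsto(A_\lambda(t)\zeta_k)_{k=1}^N$ also satisfies the same integral equation and lies in $B$ (it is even of unit norm since $A_\lambda(t)$ is unitary), so $\vec\phi^*(t) = (A_\lambda(t)\zeta_k)_{k=1}^N$, and in particular each $A_\lambda(t)\zeta_k$ has a strong limit as $t\to\infty$. For an arbitrary compactly supported $f\in\Hi$ the vector $A_\lambda(t)f$ obeys the analogous Dyson identity with $f$ in place of $\zeta_k$, now with the already-constructed $\vec\phi^*$ as the only nonlinear ingredient in the integrand; the same dominated-convergence argument, this time using $\int_0^\infty|\scal{\zeta_j}{\ex^{\iu H\sigma}f}|\,\di\sigma<\infty$ from Assumption \ref{as:main}, gives strong convergence of $A_\lambda(t)f$. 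Density of compactly supported vectors in $\Hi$ together with the uniform bound $\norm{A_\lambda(t)}=1$ promotes this to strong convergence of $A_\lambda(t)$ on all of $\Hi$ by a routine $\varepsilon/3$-argument, producing $A_{\lambda,\infty}$.

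The main obstacle is precisely the commutator $[H,A_\lambda(t)]$ inside the differential equation for $A_\lambda$: it is not a small perturbation, so a direct Cauchy estimate on $\int_s^t\norm{\partial_\tau A_\lambda(\tau)f}\,\di\tau$ is simply not available. The Dyson reformulation above is what absorbs the commutator into the free Heisenberg flow $\ex^{-\iu H(t-\tau)}(\cdot)\ex^{\iu H(t-\tau)}$, leaving a genuinely nonlinear remainder whose finite-rank structure — inherited from $V_\lambda$ — and integrable dispersive kernel — provided by Assumption \ref{as:main} — are the two ingredients that the contraction on $Y$ needs.
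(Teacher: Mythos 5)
Your proof follows the same core strategy as the paper's: restrict $A_\lambda(t)$ to the $N$ sample vectors, close the resulting system into a nonlinear Volterra integral equation whose kernel $\scal{\zeta_j}{\ex^{\iu H(t-\tau)}\zeta_n}$ is integrable by Assumption \ref{as:main}, and run a Banach--Caccioppoli argument on the ball of radius $2$, with the explicit threshold $\lambda_0 = 1/(12\|\nu\|_1 M)$. The paper works in $C_b([0,\infty),\Hi^N)$ and then extracts the limit at $t=\infty$ by a two-step inductive argument: each Picard iterate $a_\lambda^p(t)$ is shown (by dominated convergence) to converge as $t\to\infty$, the resulting sequence of limits $\underline{a^p}_\lambda$ is itself the Picard iteration of a second contraction $\Psi$, and the two limits are exchanged using the uniformity of the Picard convergence. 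Your alternative — working directly in $C([0,\infty];\Hi^N)$, the space of maps continuous on the one-point compactification — cleverly folds the limit-at-infinity into the ambient space, so that a single fixed-point argument suffices once you have checked that $F$ preserves this smaller space (your step (ii)). This is a tidier organization of the same idea.

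There is, however, one genuine logical gap in your identification step. You define $B:=\{\vec\phi\in Y:\norm{\phi_k(t)}\le 2\}$ with $Y = C([0,\infty];\Hi^N)$, apply Banach in $Y$ to obtain $\vec\phi^*\in B$, and then assert that the genuine orbit $t\mapsto(A_\lambda(t)\zeta_k)_k$ "lies in $B$." But $B\subset Y$, and membership in $Y$ means precisely that the orbit has a strong limit at $t=\infty$ — which is the very statement you are trying to prove. The global well-posedness from Appendix \ref{Ap1} only puts the orbit in $C_b([0,\infty),\Hi^N)$ with norm $\le 1$, not in $Y$. As written, the argument is circular. The fix is straightforward but must be made explicit: run the same Lipschitz and invariance estimates in the larger ball $B' := \{\vec\phi\in C_b([0,\infty),\Hi^N):\norm{\phi_k(t)}\le 2\}$, which does contain the genuine orbit, and conclude by uniqueness in $B'$ that the genuine orbit is the fixed point of $F$ there. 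Then observe that $B\cap Y$ is a closed, $F$-invariant subset of $B'$ — invariance is exactly your dominated-convergence step (ii) — so the unique fixed point, obtained as the limit of Picard iterates started from the constant tuple $(\zeta_1,\dots,\zeta_N)\in B\cap Y$, must lie in $B\cap Y$. This recovers the conclusion that each $A_\lambda(t)\zeta_k$ has a strong limit, and the rest of your argument (convergence on compactly supported vectors via a second dominated-convergence step, then density plus uniform boundedness $\norm{A_\lambda(t)}=1$) goes through unchanged.
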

\begin{proof}
Since $A_\lambda(t)$ is unitary, it is enough to prove the existence of a strong limit on compactly supported functions. By using the fundamental theorem of calculus and Cauchy problem \eqref{eqn:U*} for $U^*(t)$ (whose existence and uniqueness of solution is guaranteed by Proposition \ref{prop:exunU}), we get that
\begin{equation}\label{dc21}
U^*(t)\ex^{-\iu t H}=\Id +\iu \lambda\sum_{1\leq j,k\leq N}\nu_{jk}\int_0^t \di s \,\scal{\zeta_k}{U(s)\,\rho_i\, U^*(s) \zeta_k} U^*(s)\ket{\zeta_j}\bra{\zeta_j}\ex^{-\iu s H}.
\end{equation}
By multiplying the above equality with $\ex^{-\iu t H}$ on the left-hand side and with $\ex^{\iu t H}$ on the right-hand side, we have that
\begin{align}\label{hm1}
&A_{\lambda}(t)=\Id \\
&\quad \nonumber +\iu\lambda\sum_{1\leq j,k\leq N} \nu_{jk}\int_0^t \di s\, \scal{A_{\lambda}(s)\zeta_k}{\ex^{-\iu s H} \,\rho_i\, \ex^{\iu s H} A_{\lambda}(s) \zeta_k} \ex^{\iu (s-t
) H} A_{\lambda}(s)\ket{\zeta_j}\bra{\zeta_j}\ex^{\iu (t-s) H}.
\end{align}
The right-hand side of \eqref{hm1}  only involves the restriction of   $A_\lambda(t)$ to the subspace of the small sample. Showing first that this restriction has a limit when $t\to\infty$ is the main idea.   We thus  introduce the map
\[
[0,\infty)\ni t\mapsto a_\lambda(t):=\( A_{\lambda}(t)\ket{\zeta_1},\dots,A_{\lambda}(t)\ket{\zeta_N}\)\in\Hi^N.
\]
Applying the left-hand side of \eqref{hm1} on $\ket{\zeta_n}$  for all  $1\leq n\leq N$ leads to:
\begin{align*}
& a_{\lambda,n}(t)=\ket{\zeta_n}\\
&\quad +\iu \lambda\sum_{1\leq j,k\leq N}\nu_{jk}\int_0^t \di s\, \scal{a_{\lambda,k}(s)}{\ex^{-\iu s H} \,\rho_i\, \ex^{\iu s H} a_{\lambda,k}(s)}\scal{\zeta_j}{\ex^{\iu (t-s) H}\zeta_n} \ex^{\iu (s-t
) H} a_{\lambda,j}(s).
\end{align*}

Let us consider the space 
\[
C_b([0,\infty), \Hi^N):=\{ f\colon [0,\infty) \to \Hi^N,\,\text{ $f$ is continuous and bounded}  \}
\]
equipped with the norm
\begin{equation}
\label{4}    
\norm{f}_\infty:=\sup_{t\in [0,\infty) } \max_{1\leq n\leq N} \norm{f_n(t)}. 
\end{equation}
Since $\norm{a_\lambda}_\infty=1$, the function $a_\lambda$ is in $C_b([0,\infty), \Hi^N)$ and can been seen as a fixed point of the map $\Phi\colon C_b([0,\infty), \Hi^N) \to C_b([0,\infty), \Hi^N)$  defined as
\begin{equation}
\label{2}
\begin{aligned}
&{\(\Phi(f)\)}_n(t):=\\
&\quad\ket{\zeta_n}+\iu \lambda\sum_{1\leq j,k\leq N}\nu_{jk}\int_0^t \di s\, \scal{f_k(s)}{\ex^{-\iu s H} \,\rho_i\, \ex^{\iu s H} f_k(s)}\scal{\zeta_j}{\ex^{\iu (t-s) H}\zeta_n} \ex^{\iu (s-t) H} f_j(s).
\end{aligned}
\end{equation}
Denoting by $B_2(0)$ the closed ball of radius $2$ centered at $0$ in $\Hi^N$, consider the subspace $C([0,\infty), B_2(0))\subset  C_b([0,\infty), \Hi^N)$. We will show that there exists a positive $\lambda_0$ such that by choosing $\lambda<\lambda_0$ the map $\Phi$ leaves invariant $C([0,\infty), B_2(0))$ and is a contraction on $C([0,\infty), B_2(0))$. 
Indeed, for every $f\in C([0,\infty), B_2(0))$ we get that
\[
\norm{{\(\Phi(f)\)}_n(t)}\leq 1+\lambda \norm{\nu}_1 \norm{f}^3_\infty \int_0^t  \di s\, \abs{\scal{\zeta_j}{\ex^{\iu (t-s) H}\zeta_n}}\leq 1+8\lambda \norm{\nu}_1  M.
\] 
Similarly, one can easily obtain that for any $f,g\in C([0,\infty), B_2(0))$ it holds true that
\begin{align*}
\norm{{\(\Phi(f)\)}_n(t)-{\(\Phi(g)\)}_n(t)}&\leq 12\lambda \norm{\nu}_1 M  \norm{f-g}_\infty.
\end{align*}
Therefore, Banach--Caccioppoli fixed-point theorem implies that $a_\lambda$ is the unique fixed point of $\Phi\colon C([0,\infty), B_2(0)) \to C([0,\infty), B_2(0))$. Hence, $a_\lambda$ can be recovered by iteration and equals the limit of the iterated sequence:
\[
a_\lambda^{p+1}=\Phi(a_\lambda^{p}),\qquad a_\lambda^{0}=\( \ket{\zeta_1},\dots,\ket{\zeta_N}\)\quad\text{ with $p\geq 0$}.
\]
Now we will prove that each iteration $a_\lambda^{p}$ admits limit as $t\to\infty$, by induction. Clearly, $a_\lambda^{0}$ has a limit since it is constant in $t$. Let us assume that $a_\lambda^{p}(t)$ has a limit $\underline{a^{p}}_\lambda$ when $t\to\infty$ for some $p\geq 1$. By implementing the change of variable $r=t-s$ we write:
\begin{equation}
\label{3}
\begin{aligned}
a_{\lambda,n}^{p+1}(t)&= \ket{\zeta_n}+\iu \lambda\sum_{1\leq j,k\leq N}\nu_{jk}\int_0^\infty \di r\, \chi_{[0,t]}(r)\cdot\,\\
&\qquad \cdot\scal{a_{\lambda,k}^{p}(t-r)}{\ex^{-\iu (t-r) H} \,\rho_i\, \ex^{\iu (t-r) H} a_{\lambda,k}^{p}(t-r)}\cdot\\
&\qquad\cdot \scal{\zeta_j}{\ex^{\iu r H}\zeta_n} \ex^{-\iu r H} a_{\lambda,j}^{p}(t-r).
\end{aligned}
\end{equation}

Notice that in \eqref{3} the modulus of the integrand is dominated by $\abs{\scal{\zeta_j}{\ex^{\iu r H}\zeta_n}}$, which belongs to $L^1(\R)$ due to Assumption \ref{as:main}. By the dominated convergence theorem, to compute the limit of $a_\lambda^{p+1}(t)$ as $t\to \infty$, it suffices to calculate the pointwise limit of the integrand in \eqref{3}. 
Reasoning like in \eqref{dc4} we obtain
\begin{align*}
{\(\underline{a^{p+1}}\)}_{\lambda,n}&=\lim_{t\to\infty}a^{p+1}_n(t)\\ 
&= \ket{\zeta_n}+\iu \lambda\sum_{1\leq j,k\leq N}\nu_{jk}\, \scal{\underline{a^{p}}_{\lambda,k}}{W_-(H,H_{L}) \rho_i  W_-(H_{L},H) \underline{a^{p}}_{\lambda,k}}\cdot \\
&\qquad\qquad\qquad \qquad \qquad \cdot\int_0^\infty \di r \,\scal{\zeta_j}{\ex^{\iu r H}\zeta_n} \ex^{-\iu r H} \underline{a^{p}}_{\lambda,j}.
\end{align*}
Thus, for every $p\geq 0$ the limit $\lim_{t\to\infty}a_\lambda^{p}(t)$ exists and is denoted by $\underline{a^{p}}_\lambda$.

\noindent
At this point it is convenient to recognize that $\underline{a^{p+1}}_\lambda$ is the iterative computation of the fixed point of the map $\Psi\colon B_2(0)\to  B_2(0)$ defined for every  $1\leq n\leq N$ as:
\begin{equation}
\label{gc11}
\begin{aligned}
 {\(\Psi(v)\)}_n&:=\ket{\zeta_n}+\iu \lambda\sum_{1\leq j,k\leq N}\nu_{jk}\scal{v_k}{W_-(H,H_{L}) \rho_i  W_-(H_{L},H) v_k}\cdot\\
&\qquad \cdot\int_0^\infty  \di r \scal{\zeta_j}{\ex^{\iu r H}\zeta_n} \ex^{-\iu r H} v_j. 
\end{aligned}
\end{equation}
The map $\Psi$ leaves  $B_2(0)$ invariant and becomes a contraction by choosing $\lambda<\lambda_0$ as specified in \eqref{gm3}. Thus, $\lim_{p\to\infty}\underline{a^{p}}_\lambda$ exists and equals the unique fixed point $\underline{a}_\lambda$ of the map $\Psi$ in $B_2(0)$.
Then, finally we get that
\begin{equation} 
\label{6}
\lim_{t\to\infty}a_\lambda(t)=\lim_{t\to\infty}\lim_{p\to\infty}a_\lambda^p(t)=\lim_{p\to\infty}\underline{a^{p}}_\lambda=\underline{a}_\lambda,
\end{equation}
where the exchange of limits is possible thanks to the fact that the limit as $p\to\infty$ is performed with respect to norm \eqref{4}. Thus we have the identity 
\begin{equation}
\label{gc11'}
\begin{aligned}
 \underline{a}_{\lambda,n}&=\ket{\zeta_n}+\iu \lambda\sum_{1\leq j,k\leq N}\nu_{jk}\scal{\underline{a}_{\lambda,k}}{W_-(H,H_{L}) \rho_i  W_-(H_{L},H) \underline{a}_{\lambda,k}}\cdot\\
&\qquad \cdot\int_0^\infty  \di r \scal{\zeta_j}{\ex^{\iu r H}\zeta_n} \ex^{-\iu r H} \underline{a}_{\lambda,j}. 
\end{aligned}
\end{equation}

Finally, for any compactly supported function $\psi_c$ in $\Hi$, by using \eqref{6} and again the dominated convergence theorem together with Assumption \ref{as:main}, we have that
\begin{equation}\label{7}
\begin{aligned}
A_{\lambda,\infty}\psi_c&=\lim_{t\to\infty}A_\lambda(t)\psi_c\\ &= \psi_c+ \iu\lambda\sum_{1\leq j,k\leq N} \nu_{jk} \scal{\underline{a}_{\lambda,k}}{W_-(H,H_L)\,\rho_i\, W_-(H_{L},H)\underline{a}_{\lambda,k}}\cdot\\
&\qquad \qquad \qquad \qquad \cdot\int_0^\infty \di r\,\scal{\zeta_j}{\ex^{\iu r H} \psi_c}  
 \ex^{-\iu r H}\underline{a}_{\lambda,j}.
\end{aligned}
\end{equation}
\end{proof}

At this moment we can get back to \eqref{dc3}, and use \eqref{dc4} and Proposition \ref{prop:W} in order to conclude that 
\begin{align}\label{dc5}
\omega_\lambda\big (\ket{f}\bra{g}\big )
    &=\bscal{A_{\lambda,\infty}\, g}{\Big (W_-(H,H_{L})\,\rho_i\, W_-(H_{L},H)\Big )A_{\lambda,\infty}\, f}.
\end{align}
Both in \eqref{7} and \eqref{dc5}, the initial density operator $\rho_i$ only  appears sandwiched between wave operators, which project the component $\rho_s$ out. Thus the steady state only depends on the initial datum on the leads. This concludes the proof of Theorem \ref{thm:main}{(a)},{(b)}, where 
\begin{equation}\label{dc20}
    \rho_{\lambda,\infty}:=A_{\lambda,\infty}^*\, W_-(H,H_{L})\,\rho_i\, W_-(H_{L},H)\, A_{\lambda,\infty}.
\end{equation}

\subsection{Proof of Theorem \ref{thm:main}{(c)}}   We start with a lemma: 
\begin{lemma}\label{lemmahc4}
    Under the conditions of Proposition \ref{prop:W}, we have: 
    \begin{equation}\label{dc24}
    H \, A_{\lambda,\infty} =A_{\lambda,\infty}\, \big (H+V_\lambda \{\rho_{\lambda,\infty}\}\big ),\quad A_{\lambda,\infty}^* \, H   = \big (H+V_\lambda \{\rho_{\lambda,\infty}\}\big )\, A_{\lambda,\infty}^*. 
     \end{equation}
\end{lemma}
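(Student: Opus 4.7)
My plan is to derive the first intertwining in \eqref{dc24} directly from the integral representation \eqref{7} of $A_{\lambda,\infty}\psi_c$ obtained in the proof of Proposition \ref{prop:W}, via an integration by parts in the time variable $r$, and then to obtain the second identity by taking adjoints. As a preliminary, I rewrite \eqref{7} in a more structural form: since $\underline{a}_{\lambda,j}=A_{\lambda,\infty}\zeta_j$, and by \eqref{dc20} we have $\scal{\underline{a}_{\lambda,k}}{W_-(H,H_L)\rho_iW_-(H_L,H)\underline{a}_{\lambda,k}}=\scal{\zeta_k}{\rho_{\lambda,\infty}\zeta_k}$, the operator $\lambda\sum_j\big(\sum_k\nu_{jk}\scal{\zeta_k}{\rho_{\lambda,\infty}\zeta_k}\big)\ket{\zeta_j}\bra{\zeta_j}$ coincides with $V_\lambda\{\rho_{\lambda,\infty}\}$. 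Hence for every compactly supported $\psi_c\in\Hi$, equation \eqref{7} is equivalent to
\begin{equation*}
A_{\lambda,\infty}\psi_c=\psi_c+\iu\int_0^\infty\di r\,\ex^{-\iu r H}A_{\lambda,\infty}V_\lambda\{\rho_{\lambda,\infty}\}\ex^{\iu r H}\psi_c. \qquad (\star)
\end{equation*}

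I then apply $H$ to both sides of $(\star)$ and integrate by parts in $r$ inside the integral, using $\iu H\ex^{-\iu rH}=-\partial_r\ex^{-\iu rH}$. The boundary term at $r=\infty$ vanishes because $V_\lambda\{\rho_{\lambda,\infty}\}\ex^{\iu rH}\psi_c$ is a finite linear combination of the $\zeta_j$'s weighted by the scalar products $\scal{\zeta_j}{\ex^{\iu rH}\psi_c}$, each of which tends to $0$ as $r\to\infty$ by Riemann--Lebesgue applied to the spectral measure of $H$ (which is absolutely continuous under Assumption \ref{as:main}). The boundary term at $r=0$ contributes $A_{\lambda,\infty}V_\lambda\{\rho_{\lambda,\infty}\}\psi_c$. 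The surviving interior term carries an extra factor $\partial_r\ex^{\iu rH}\psi_c=\iu\ex^{\iu rH}(H\psi_c)$, and $H\psi_c$ is itself compactly supported because $H=h_1+h_2+h_s+h_\tau$ has finite hopping range. Applying $(\star)$ once more with $\psi_c$ replaced by $H\psi_c$, I identify this interior integral with $A_{\lambda,\infty}(H\psi_c)-H\psi_c$. Summing the contributions cancels the two copies of $H\psi_c$ and yields $HA_{\lambda,\infty}\psi_c=A_{\lambda,\infty}(H+V_\lambda\{\rho_{\lambda,\infty}\})\psi_c$.

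Since compactly supported vectors are dense in $\Hi$ and every operator involved is bounded, this extends by continuity to the whole Hilbert space, giving the first intertwining of \eqref{dc24}. The second identity then follows by taking the adjoint, using that $H$ and $V_\lambda\{\rho_{\lambda,\infty}\}$ are both self-adjoint (the latter being a real linear combination of the orthogonal projections $\ket{\zeta_j}\bra{\zeta_j}$). The one delicate point that I expect to require the most care is the justification of the integration by parts together with the vanishing of the boundary term as $r\to\infty$; both rest entirely on Assumption \ref{as:main}, first for the $L^1$-integrability of the integrand in $(\star)$ and of its $H$-derivative, and second for the pointwise decay of the matrix elements $\scal{\zeta_j}{\ex^{\iu rH}\psi_c}$ via the absolute continuity of the spectrum of $H$.
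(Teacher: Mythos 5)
Your proof is correct, and it takes a genuinely different route from the paper's. The paper establishes \eqref{dc24} by a finite-difference (Taylor expansion) argument: starting from the product identity $A_\lambda(t+\delta)=\ex^{-\iu\delta H}A_\lambda(t)U(t)U^*(t+\delta)$, it expands $U(t)U^*(t+\delta)=\Id+\iu\delta\big(H+V_\lambda\{\rho(t)\}\big)+R_\delta(t)$, shows that $A_\lambda(t)R_\delta(t)\psi$ is $\mathcal{O}(\delta^2)$ uniformly in $t$, takes $t\to\infty$ in the identity, expands in $\delta$, and equates the coefficient of $\delta$ to zero. You instead work directly with the closed-form integral representation \eqref{7} obtained at the end of the proof of Proposition \ref{prop:W}, recast it (correctly, using $\underline{a}_{\lambda,j}=A_{\lambda,\infty}\zeta_j$ and \eqref{dc20}) as the neat fixed-point identity $(\star)$, and then integrate by parts in $r$. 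The two approaches both rest on Assumption \ref{as:main}: the paper uses it through the existence of the limit $A_{\lambda,\infty}$ and the uniform $\mathcal{O}(\delta^2)$ control of the remainder, whereas you use it for the $L^1$-integrability of the kernel in $(\star)$ (which lets you pass the bounded $H$ inside the Bochner integral and apply $(\star)$ to $H\psi_c$, which is again compactly supported by finite hopping range) and for the vanishing of the boundary term at $r=\infty$ (via Riemann--Lebesgue for the absolutely continuous spectral measure of $H$; alternatively, one can note that $r\mapsto\scal{\zeta_j}{\ex^{\iu rH}\psi_c}$ is $L^1$ and uniformly Lipschitz, which also forces decay to zero). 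Your argument is more self-contained in the sense that it never returns to finite-time quantities once $A_{\lambda,\infty}$ has been constructed, whereas the paper's argument requires a small detour back to the Cauchy problem \eqref{eqn:U*}; on the other hand, the paper's argument does not require recasting \eqref{7} into the operator form $(\star)$ and may be perceived as more elementary. Both buy the same intertwining at comparable cost, and the adjoint step and density argument are handled identically.
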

\begin{proof}
     It is enough to prove the first identity in \eqref{dc24} because the other one follows by taking the adjoint of the first one.  From Proposition \ref{prop:W} we have that $A_\lambda(t)=\ex^{-\iu tH}U^*(t)$ converges strongly to $A_{\lambda,\infty}$. Let $\delta>0$. Then 
     \begin{equation}\label{dc23}
     A_\lambda(t+\delta)=\ex^{-\iu \delta H}A_{\lambda}(t)\,  U(t)\, U^*(t+\delta).
     \end{equation}
Using \eqref{eqn:U*} \gm{and recalling that $\rho(s)=U(s)\rho_i U(s)^*$ we have:
\begin{align*}
  U(t)\, U^*(t+\delta)&=\Id +  U(t)\, \big (U^*(t+\delta)-U^*(t)\big )\\
  &=\Id +\iu \,  U(t)\int_t^{t+\delta} \di s \, U^*(s)\, \big (H+V_\lambda\{\rho(s)\}\big ) \\
  &=\Id +\iu \delta \big (H+V_\lambda\{\rho(t)\}\big ) + R_\delta(t),
\end{align*}
where we have introduced the operator
\begin{equation}
\label{eqn:defnR}    
R_\delta(t):=\iu U(t)\int_t^{t+\delta}\di s\int_t^{s}\di s_1\,\frac{\di}{\di s_1}\left[U^*(s_1) \( H+V_\lambda\{\rho(s_1)\} \)\right].
\end{equation}
By plugging the above identity in \eqref{dc23} and taking the limit $t\to\infty$ we obtain that for every $\psi\in\Hi$:
\begin{equation}
\label{eqn:R(t)}
A_{\lambda,\infty}\psi=\ex^{-\iu \delta H}A_{\lambda,\infty}\psi +\iu \delta\ex^{-\iu \delta H}A_{\lambda,\infty}\(H+V_\lambda\{\rho_{\lambda,\infty}\}\)\psi+\ex^{-\iu \delta H}\lim_{t\to\infty} A_{\lambda}(t)R_\delta(t)\psi,
\end{equation}
where the limit of the vector $A_{\lambda}(t) R_\delta(t)\psi$ is given from the existence of all the other limits.
Expanding the right-hand side of \eqref{eqn:R(t)} in $\delta$ and noticing that $A_{\lambda}(t) R_\delta(t)\psi$ is of order $\delta^2$ uniformly in $t$, we have that
\begin{equation*}
A_{\lambda,\infty}\psi=A_{\lambda,\infty}\psi +\iu \delta\Big (  A_{\lambda,\infty}\(H+V_\lambda\{\rho_{\lambda,\infty}\}\)- HA_{\lambda,\infty}\Big )\psi +\Or(\delta^2).  
\end{equation*}
Equating the linear term in $\delta$ with zero, the conclusion follows.}
\end{proof}

Now let us finish the proof of Theorem \ref{thm:main}{(c)}. Due to the intertwining property $HW_-(H,H_{L})=W_-(H,H_{L})H_L$ and because $H_L$ and $\rho_i$ commute, we have that $H$ commutes with $W_-(H,H_{L})\,\rho_i\, W_-(H_{L},H)$. Finally, \eqref{dc24} is the last ingredient for showing that $H+V_\lambda\{\rho_{\lambda,\infty}\}$ commutes with $\rho_{\lambda,\infty}$ from \eqref{dc20}.

\subsection{Proof of Theorem \ref{thm:main}{(d)}} 
Since $A_{\lambda,\infty}$ is the strong limit of a sequence of unitary operators, it follows that $A_{\lambda,\infty}$ is an isometry and 
\begin{equation}\label{gc5}
    A_{\lambda,\infty}^* A_{\lambda,\infty}=\Id.
\end{equation}
We start with a general intertwining result which holds true even if the $\beta$'s and $\mu$'s of the two leads are different. 
\begin{lemma}\label{lemmagm}
Let $\tilde{H}:=H+V_\lambda\{\rho_{\lambda,\infty}\}$ and $f\in L^\infty(\R)$. Then $f(\tilde{H})=A_{\lambda,\infty}^*\, f(H)\, A_{\lambda,\infty}.$
\end{lemma}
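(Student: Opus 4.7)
The plan is to exploit the intertwining relations from Lemma \ref{lemmahc4} together with the isometry identity \eqref{gc5}, and then propagate the result from polynomials (or resolvents) to arbitrary $L^\infty$ functions via the standard Borel functional calculus. I would work with the resolvent form since it is slightly cleaner, but iterating $H^n A_{\lambda,\infty}=A_{\lambda,\infty}\tilde H^n$ would be equally fine because $H$ is bounded on $\Hi$ in the tight-binding setup.

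First, the intertwining \eqref{dc24} yields $(H-z)A_{\lambda,\infty}=A_{\lambda,\infty}(\tilde H-z)$ for every $z\in\C\setminus\R$. Both $H$ and $\tilde H$ are bounded self-adjoint, so $(H-z)^{-1}$ and $(\tilde H-z)^{-1}$ exist, and we may rewrite the intertwining as $A_{\lambda,\infty}(\tilde H-z)^{-1}=(H-z)^{-1}A_{\lambda,\infty}$. Multiplying on the left by $A_{\lambda,\infty}^*$ and using $A_{\lambda,\infty}^*A_{\lambda,\infty}=\Id$ from \eqref{gc5}, we obtain
\begin{equation*}
A_{\lambda,\infty}^*\,(H-z)^{-1}\,A_{\lambda,\infty}=(\tilde H-z)^{-1},\qquad z\in\C\setminus\R.
\end{equation*}
This is the statement for the specific family $f(x)=(x-z)^{-1}$.

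Next I would extend this to all continuous functions vanishing at infinity. Since $\sigma(H)$ and $\sigma(\tilde H)$ are compact subsets of a common bounded real interval $[-R,R]$, Stone--Weierstrass (applied either to the algebra generated by the resolvents, via the Stone--Weierstrass argument for $C_0$-functional calculus, or directly by approximating $f\in C([-R,R])$ uniformly by polynomials) gives sequences $f_n$ with $f_n(H)\to f(H)$ and $f_n(\tilde H)\to f(\tilde H)$ in operator norm, preserving the identity $A_{\lambda,\infty}^*f(H)A_{\lambda,\infty}=f(\tilde H)$ for every $f\in C([-R,R])$.

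Finally, for arbitrary $f\in L^\infty(\R)$, I would choose a uniformly bounded sequence $f_n$ of continuous functions converging to $f$ pointwise $\mu_H$- and $\mu_{\tilde H}$-almost everywhere, where $\mu_H$ and $\mu_{\tilde H}$ denote the spectral measures. By the bounded/dominated convergence theorem of the spectral calculus, $f_n(H)\to f(H)$ and $f_n(\tilde H)\to f(\tilde H)$ strongly, and the identity survives the strong limit. The only mild subtlety, which is the step I expect to be the most delicate, is ensuring that one can choose a single approximating sequence $f_n$ that is simultaneously a good approximation with respect to both spectral measures; this is handled by taking $f_n$ continuous and bounded by $\|f\|_\infty$ and converging pointwise to $f$ on all of $\R$, which makes the $\mu_H$- and $\mu_{\tilde H}$-a.e. convergence automatic and closes the argument.
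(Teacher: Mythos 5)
Your argument follows the same overall strategy as the paper: use the intertwining relations from Lemma \ref{lemmahc4} together with the isometry relation $A_{\lambda,\infty}^*A_{\lambda,\infty}=\Id$ from \eqref{gc5}, establish the identity on a dense algebra, then pass to general $f\in L^\infty(\R)$. Your resolvent computation $A_{\lambda,\infty}^*(H-z)^{-1}A_{\lambda,\infty}=(\tilde H-z)^{-1}$ is correct and is simply the resolvent analogue of the paper's $\tilde H^n = A_{\lambda,\infty}^*H^n A_{\lambda,\infty}$.

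The step you phrase as ``preserving the identity'' under Stone--Weierstrass hides a genuine point, and it is exactly the place where the paper invokes the commutation relation \eqref{gc1}. The map $f\mapsto A_{\lambda,\infty}^*f(H)A_{\lambda,\infty}$ is not a $*$-algebra homomorphism, because $A_{\lambda,\infty}$ is only an isometry and $A_{\lambda,\infty}A_{\lambda,\infty}^*$ is merely a projection; knowing the identity for each individual resolvent does not by itself give it for products of resolvents, hence not for the generated algebra. The natural repair, consistent with your route, is to keep the intertwining in its one-sided form $A_{\lambda,\infty}(\tilde H-z)^{-1}=(H-z)^{-1}A_{\lambda,\infty}$: this form propagates multiplicatively through products of resolvents, hence by norm density yields $A_{\lambda,\infty}\,f(\tilde H)=f(H)\,A_{\lambda,\infty}$ for all continuous $f$, and only \emph{then} should you left-multiply by $A_{\lambda,\infty}^*$. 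The paper achieves the same goal by first deriving $A_{\lambda,\infty}A_{\lambda,\infty}^*H=H A_{\lambda,\infty}A_{\lambda,\infty}^*$ and running a polynomial induction. Finally, in the $L^\infty$ step, your proposed sequence of continuous functions ``converging pointwise to $f$ on all of $\R$'' is not available for a general element of $L^\infty$; what makes a bounded Lebesgue-a.e.\ approximation work is that $H$ has purely absolutely continuous spectrum and so does $\tilde H$ (its scalar spectral measures are pull-backs under the isometry $A_{\lambda,\infty}$ of those of $H$), so a.e.\ convergence controls both sides simultaneously. This is precisely the content of the paper's opening sentence of the proof.
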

\begin{proof} Because $H$ has purely absolutely continuous spectrum, it is enough to prove the equality when $f$ is continuous. 
 By left multiplying  the first identity of \eqref{dc24} with $A_{\lambda,\infty}^* $ we obtain $\tilde{H}=A_{\lambda,\infty}^*\, H \, A_{\lambda,\infty}$. Also, using both identities in \eqref{dc24} we get: 
\begin{equation}\label{gc1}
A_{\lambda,\infty}\, A_{\lambda,\infty}^*\, H=A_{\lambda,\infty}\,\tilde{H}\, A_{\lambda,\infty}^*=H\, A_{\lambda,\infty}\, A_{\lambda,\infty}^*.
\end{equation}
Let us prove that $\tilde{H}^n=A_{\lambda,\infty}^*\, H^n \, A_{\lambda,\infty}$ for all $n\geq 1$. We already know it for $n=1$, hence for any $n\geq 2$ we have by induction:
\begin{align*}
\tilde{H}^n&=\big (A_{\lambda,\infty}^*\, H^{n-1} \, A_{\lambda,\infty}\big )\, A_{\lambda,\infty}^*\, H\,A_{\lambda,\infty}\, = \, A_{\lambda,\infty}^*\, H^{n-1} \, \big ( A_{\lambda,\infty}\, A_{\lambda,\infty}^*\, H\big ) \,A_{\lambda,\infty}\,
\\ &= A_{\lambda,\infty}^*\, H^{n-1} \,  \big (H\, A_{\lambda,\infty}\, A_{\lambda,\infty}^*\big ) \, A_{\lambda,\infty}\,=A_{\lambda,\infty}^*\, H^{n} \, A_{\lambda,\infty},
\end{align*}
where in the third equality we used \eqref{gc1} and in the last one \eqref{gc5}. By the Stone--Weierstrass theorem, the operator $ f\big ({H}\big )$ can be arbitrarily well approximated in the norm topology with  polynomials in ${H}$, and we are done. 
\end{proof}

If the temperatures and the chemical potentials of the two leads are equal, then the \virg{leads} component of $\rho_i$ from \eqref{eqn:rhoi} equals $f_{\rm FD}(H_L)$. The intertwining property of the operator $W_-(H,H_L)$ gives $\rho_{\lambda,\infty}=A_{\lambda,\infty}^*\, f_{\rm FD}(H)\, A_{\lambda,\infty}$. 
Then Lemma \ref{lemmagm} proves the first identity in \eqref{dc25}. 

Now let us show the second identity in \eqref{dc25}. The method is inspired by   \cite{3-tedesco} and \cite{BES}.  
Because $I_k=\iu [H,\Pi_k]=\iu [\tilde{H}\,,\, \Pi_k]=\iu [h_\tau,\Pi_k]$, in the steady state the current intensity through lead $k$ equals:
$$\omega_\lambda(I_k)={\rm Tr}\Big ( f_{\rm FD}\big (\tilde{H}\big )\, \iu [\tilde{H}\, ,\, \Pi_k]\Big ).$$
Since the operator $ f_{\rm FD}\big (\tilde{H}\big )$ can be arbitrarily well approximated in the norm topology by polynomials in $\tilde{H}$, the current intensity equals zero if we can prove that 
 $${\rm Tr}\big (\tilde{H}^n\, [\tilde{H},\Pi_k]\big )=0,\quad n\geq 0.$$
  If $n\geq 0$ then  
 $$[\tilde{H}^{n+1},\Pi_k]=\sum_{j=0}^{n}\tilde{H}^j\, [\tilde{H},\Pi_k]\, \tilde{H}^{n-j},$$
 hence by trace cyclicity 
 $${\rm Tr}\big (\tilde{H}^n\, [\tilde{H},\Pi_k]\big )=\frac{1}{n+1}{\rm Tr}\big ([\tilde{H}^{n+1},\Pi_k]\big ).$$
 Since $\tilde{H}^{n+1}=H_L^{n+1}+\text{(a finite rank operator)}$, for all $n\geq 0$, and because $H_L$ commutes with $\Pi_k$, the right-hand side of the above equality equals zero from trace cyclicity.  


\subsection{Proof of Theorem \ref{thm:main}{(e)}} 
Point {(e)} is implied by \eqref{dc20} and \eqref{hm1}.

\section{Proofs of Corollaries}
\label{sec3}
\subsection{Proof of Corollary \ref{coro2}}
 
Let 
$$\tilde{\rho}_i:={\bf 0}\, \oplus \Big (\frac{1}{\ex^{\beta_2(h_2-\mu_2)}+1}-\frac{1}{\ex^{\beta_1(h_2-\mu_1)}+1}\Big )\, \oplus \, {\bf 0}.$$ 

We see that $\tilde{\rho}_i$ also commutes with $H_L$. Denote by 
$$f_{{\rm FD},\beta_1,\mu_1}(H_L):=\frac{1}{\ex^{\beta_1(h_1-\mu_1)}+1}\, \oplus \, \frac{1}{\ex^{\beta_1(h_2-\mu_1)}+1}.$$
Since $\rho_i=\tilde{\rho}_i +f_{{\rm FD},\beta_1,\mu_1}(H_L)\,\oplus  \rho_s$ then
\begin{align*}
W_-(H,H_L)\, {\rho}_i \, W_-(H_L,H)&=W_-(H,H_L)\, \tilde{\rho}_i \, W_-(H_L,H)\, \\
&\qquad +W_-(H,H_L)\,\Big ( f_{{\rm FD},\beta_1,\mu_1}(H_L)\,\oplus  \rho_s\Big )\, W_-(H_L,H)\\
&=W_-(H,H_L)\, \tilde{\rho}_i \, W_-(H_L,H)\,+\, f_{{\rm FD},\beta_1,\mu_1}(H).
\end{align*} 
According to \eqref{dc20} and using the intertwining properties of $A_{\lambda,\infty}$ from Lemma \ref{lemmagm} we have 
\begin{align*}
\rho_{\lambda,\infty}=A_{\lambda,\infty}^*W_-(H,H_L)\, \tilde{\rho}_i \, W_-(H_L,H)A_{\lambda,\infty}+f_{{\rm FD},\beta_1,\mu_1}\big (H+V_\lambda\{\rho_{\lambda,\infty}\}\big ).
\end{align*}
Mimicking the proof of Theorem \ref{thm:main}(d), the second right-hand term from above  will not contribute to the charge current intensity, hence 
\begin{equation}\label{gc10}
\begin{aligned}
\omega_\lambda(I_1)&={\rm Tr} \big ( A_{\lambda,\infty}^*W_-(H,H_L)\, \tilde{\rho}_i \, W_-(H_L,H)A_{\lambda,\infty}\, I_1\big )\\
&={\rm Tr} \big ( \tilde{\rho}_i\, W_-(H_L,H)A_{\lambda,\infty}\, I_1\, A_{\lambda,\infty}^*\, W_-(H,H_L)\big ).
\end{aligned}
\end{equation}
 The operator $I_1=\tau\,  \iu \, \big (\ket{S_1}\bra{L_1}-\ket{L_1}\bra{S_1}\big )$ can be identified with $\tau \sigma_2$ where $\sigma_2$ is the second Pauli matrix. The operator $W_-(H_L,H)A_{\lambda,\infty}\, I_1\, A_{\lambda,\infty}^*\, W_-(H,H_L)$ is self-adjoint, has rank $2$, thus it may be written as 
\begin{equation}\label{hm4}
W_-(H_L,H)A_{\lambda,\infty}\, I_1\, A_{\lambda,\infty}^*\, W_-(H,H_L)=\tau\, \sum_{j=1}^2 (-1)^{j-1}\, \ket{f_{j,\lambda}}\bra{f_{j,\lambda}},
\end{equation}  
where 
\begin{equation}\label{gc12}
\sqrt{2}\, \ket{f_{j,\lambda}}=W_-(H_L,H)A_{\lambda,\infty}\,\ket{S_1}+\iu \, (-1)^{j} W_-(H_L,H)A_{\lambda,\infty}\,\ket{L_1}.
\end{equation}
Thus 
$$\omega_\lambda(I_1)=\tau \sum_{j=1}^2 (-1)^{j-1}\, \scal{f_{j,\lambda}}{\tilde{\rho}_i\, f_{j,\lambda}}.$$
The above scalar products can be expressed with the help of the absolutely continuous spectral measure of the lead Hamiltonian $h_2$. In fact, since $h_2$ is the Dirichlet Laplacian on the half-line, it is diagonalized by a unitary Fourier-like transform $\mathfrak{F}$, \hc{see \eqref{march1}}. Thus 
$$\omega_\lambda(I_1)= \int_{-2t_c}^{2t_c} \Big (\frac{1}{\ex^{\beta_2(E-\mu_2)}+1}-\frac{1}{\ex^{\beta_1(E-\mu_1)}+1}\Big ) \,\tau \sum_{j=1}^2 (-1)^{j-1}\,|\mathfrak{F}(\Pi_2 \, f_{j,\lambda})|^2(E)\, \di E,$$
which implies \eqref{hd1} with 
\begin{equation}\label{gc13}
2\pi\, \mathcal{T}_\lambda(E)=\tau \sum_{j=1}^2 (-1)^{j-1}\,|\mathfrak{F}(\Pi_2\, f_{j,\lambda})|^2(E).
\end{equation}
 Moreover, since $\ket{S_1}$ and $\ket{L_1}$ have compact support, we see from \eqref{7} that $A_{\lambda,\infty}\,\ket{S_1}$ and $A_{\lambda,\infty}\,\ket{L_1}$ have convergent expansions in $\lambda$ (because the fixed point $\underline{a}_\lambda$ of the contraction $\Psi$ from \eqref{gc11} has one), which via \eqref{gc12} it implies that the $\ket{f_{j,\lambda}}$'s have the same property. Thus from \eqref{gc13} we conclude that $\mathcal{T}_\lambda$ also has a convergent expansion in $\lambda$, seen as an element of $L^1([-2t_c,2t_c])$.  Under the assumptions of Lemma \ref{lemmahc1}, if $\lambda$ is small enough and  $\psi_c$ has compact support, then Appendix \ref{Ap3} implies that functions of the type $\mathfrak{F}_L W_-(H_L,H) A_{\lambda,\infty}\psi_c$ and $\mathfrak{F}_L W_-(H_L,H) \psi_c$  are actually continuous functions in the energy variable $E$. Starting from this, one can also show that the power series expansion in $\lambda$ of $\mathcal{T}_\lambda$ has coefficients which are continuous functions of $E$.

  The function $\mathcal{T}_\lambda$ depends on  $\beta$'s and $\mu$'s via $A_{\lambda,\infty}$ (see \eqref{7}). Let us investigate its continuity with respect to these parameters. We fix some state $\rho_i'$ given by  $\beta_1',\beta_2',\mu_1',\mu_2'$ where $\beta_1'$ and $\beta_2'$ may be equal to infinity. 
 
 By inspecting \eqref{gc13} and \eqref{gc12},  we observe that we only need to prove the continuity of $A_{\lambda,\infty}\psi_c$, where $\psi_c$ is either $\ket{S_1}$ or $\ket{L_1}$.  Going back to \eqref{7}, we see that this is implied by two things. The first one is the continuity in $\beta$'s and $\mu$'s of $W_-(H,H_L)\,\rho_i\, W_-(H_{L},H)$ in the weak topology. The second one is the  continuity  of the fixed point $\underline{a}_\lambda$ of the map $\Psi$ defined in \eqref{gc11}.  

  Let us prove this continuity. Denote by $\Psi'$ the map in \eqref{gc11} where $\rho_i$ is replaced by $\rho_i'$. Both $\Psi$ and $\Psi'$ are contractions, with a contraction constant $\alpha\leq \lambda/\lambda_0<1$ with $\lambda_0$ defined in \eqref{gm3}, and which is  independent of $\rho_i$. Let $\underline{a}_\lambda$ be the fixed point of $\Psi$, and let $\underline{a}'_\lambda$ be the fixed point of $\Psi'$. For every $1\leq k\leq N$ we have 
 $$\lim_{\beta\to\beta',\mu\to\mu'} \scal{\underline{a}_{\lambda,k}'}{W_-(H,H_L)\,\rho_i\, W_-(H_{L},H)\underline{a}_{\lambda,k}'}=\scal{\underline{a}_{\lambda,k}'}{W_-(H,H_L)\,\rho_i'\, W_-(H_{L},H)\underline{a}_{\lambda,k}'}, $$
because $W_-(H,H_L)\rho_iW_-(H_L,H)$ converges to $W_-(H,H_L)\rho_i' W_-(H_L,H)$ in the strong operator topology due to the absolute continuity of the spectrum of $H_L$.

 This implies that 
$\Psi(\underline{a}'_\lambda)=\Psi'(\underline{a}'_\lambda)+o(1)=\underline{a}'_\lambda +o(1)$, which means that $\underline{a}'_\lambda$ is an \virg{almost} fixed point for $\Psi$. Then: 
\begin{align*}
\Vert \underline{a}_\lambda-\underline{a}'_\lambda\Vert =\Vert \Psi(\underline{a}_\lambda)-\Psi(\underline{a}'_\lambda)+o(1)\Vert \leq \alpha\, \Vert \underline{a}_\lambda-\underline{a}'_\lambda\Vert +o(1),
\end{align*}
hence $\Vert \underline{a}_\lambda-\underline{a}'_\lambda \Vert =o(1)$, because $0\leq \alpha<1$ is uniform in $\beta$'s and $\mu$'s.

\subsection{Proof of Corollary \ref{coroHC}}

{ 
From the proof of Theorem \ref{thm:main}, by employing \eqref{dc3} and Proposition \ref{prop:W}, we know that  
$$\omega_\lambda(O_c)=\lim_{t\to\infty}\bscal{A_{\lambda,\infty}\, g}{\Big (\ex^{-\iu tH}   \,\rho_i\, \ex^{\iu tH}\Big )A_{\lambda,\infty}\, f}.$$
Denote by $\Omega_\lambda(t):=\ex^{-\iu t H_{{\rm eff},\lambda}}\ex^{\iu t H}$. Since $V_{{\rm eff},\lambda}$ lives in the space of the sample, by using Assumption \ref{as:main} 
  one can show that $\Omega_\lambda(t)$ converges strongly to $W_-(H_{{\rm eff},\lambda},H)$. Then we may write: 
$$
\omega_\lambda(O_c)=\lim_{t\to\infty}\bscal{\Omega_\lambda(t)A_{\lambda,\infty}\, g}{\Big (\ex^{-\iu t H_{{\rm eff},\lambda}}\,\rho_i\, \ex^{\iu t H_{{\rm eff},\lambda}}\Big )\Omega_\lambda(t)A_{\lambda,\infty}\, f}.$$
Reasoning like in \eqref{dc4}, we obtain that $\ex^{-\iu t H_{{\rm eff},\lambda}}\,\rho_i\, \ex^{\iu t H_{{\rm eff},\lambda}}$ converges strongly to $\rho_{{\rm eff},\lambda}$ defined in \eqref{dc15}. 
Thus 
\begin{equation}\label{hm11}
\omega_\lambda(O_c)=\bscal{W_-(H_{{\rm eff},\lambda},H)\,A_{\lambda,\infty}\, g}{\rho_{{\rm eff},\lambda}\, W_-(H_{{\rm eff},\lambda},H)\,A_{\lambda,\infty}\, f}.
\end{equation}

The final step is contained in the following lemma.
\begin{lemma}\label{lemmahc2}
  For every compactly supported function $\psi_c$ we have 
  $$\lim_{t\to \infty}\Omega_\lambda(t)A_{\lambda,\infty}\,\psi_c=W_-(H_{{\rm eff},\lambda},H)\,A_{\lambda,\infty}\,\psi_c=\psi_c+\mathcal{O}(\lambda^2).$$
\end{lemma}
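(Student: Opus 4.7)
The first equality follows immediately from the strong convergence $\Omega_\lambda(t)\to W_-(H_{\rm eff,\lambda},H)$ established in the preceding paragraphs, applied to the fixed vector $A_{\lambda,\infty}\psi_c$. For the second equality, the plan is to expand both $W_-(H_{\rm eff,\lambda},H)$ and $A_{\lambda,\infty}$ to first order in $\lambda$ and to observe a structural cancellation. Starting from the Cook--Dyson identity
\begin{equation*}
\ex^{-\iu tH_{\rm eff,\lambda}}\ex^{\iu tH}\phi - \phi = -\iu\int_0^t \ex^{-\iu sH_{\rm eff,\lambda}}V_{\rm eff,\lambda}\ex^{\iu sH}\phi\,\di s,
\end{equation*}
since $V_{\rm eff,\lambda}$ is finite-rank and supported in the sample subspace, Assumption \ref{as:main} combined with the explicit formula \eqref{7} (which expresses $\ex^{\iu sH}A_{\lambda,\infty}\psi_c$ through propagator matrix elements of the form $\scal{\zeta_j}{\ex^{\iu rH}\psi_c}$ and $\scal{\zeta_j}{\ex^{\iu rH}\zeta_k}$, both integrable in $r$) allows passing to the limit $t\to\infty$ under the integral with $\phi=A_{\lambda,\infty}\psi_c$.

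Moreover, since the fixed point $\underline{a}_\lambda$ of the contraction $\Psi$ defined in \eqref{gc11} depends analytically on $\lambda$ near $\lambda=0$ with $\underline{a}_0 = (\ket{\zeta_1},\ldots,\ket{\zeta_N})$, formula \eqref{7} yields the expansion
\begin{equation*}
A_{\lambda,\infty}\psi_c = \psi_c + \iu\lambda\sum_{j,k=1}^N \nu_{jk}\scal{\zeta_k}{W_-(H,H_L)\rho_i W_-(H_L,H)\zeta_k}\int_0^\infty\di r\,\scal{\zeta_j}{\ex^{\iu rH}\psi_c}\,\ex^{-\iu rH}\zeta_j + \mathcal{O}(\lambda^2).
\end{equation*}
By the very definition of $V_{\rm eff,\lambda}$ in \eqref{dc1}, replacing $\ex^{-\iu tH_{\rm eff,\lambda}}$ by $\ex^{-\iu tH}$ (via Duhamel) and $A_{\lambda,\infty}\psi_c$ by $\psi_c$ inside the wave-operator integral produces only $\mathcal{O}(\lambda^2)$ errors, and the remaining $\mathcal{O}(\lambda)$ contribution reads
\begin{equation*}
-\iu\lambda\sum_{j,k}\nu_{jk}\scal{\zeta_k}{W_-(H,H_L)\rho_i W_-(H_L,H)\zeta_k}\int_0^\infty\di t\,\scal{\zeta_j}{\ex^{\iu tH}\psi_c}\,\ex^{-\iu tH}\zeta_j,
\end{equation*}
which is the exact negative of the $\mathcal{O}(\lambda)$ correction in the expansion of $A_{\lambda,\infty}\psi_c$. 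The two first-order contributions cancel, leaving $W_-(H_{\rm eff,\lambda},H)A_{\lambda,\infty}\psi_c = \psi_c + \mathcal{O}(\lambda^2)$.

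The main technical obstacle is the careful bookkeeping of the $\mathcal{O}(\lambda^2)$ remainders uniformly in the time integral. Neither $\ex^{-\iu tH_{\rm eff,\lambda}}-\ex^{-\iu tH}$ nor $A_{\lambda,\infty}\psi_c - \psi_c$ is $\mathcal{O}(\lambda)$ in operator norm; each is only small in a time-smeared $L^1$ sense. The strategy is to expand the propagator difference via a Duhamel formula and to control the resulting nested time-integrals by the product of two $L^1$-dispersive factors $\int_0^\infty|\scal{\zeta_j}{\ex^{\iu rH}\zeta_k}|\,\di r\leq M$ from Assumption \ref{as:main}. The smallness condition $\lambda<\lambda_0=(12\norm{\nu}_1 M)^{-1}$ then guarantees that the resulting geometric series in $\lambda\norm{\nu}_1 M$ converge, in complete analogy with the contraction arguments used in the proof of Proposition \ref{prop:W}.
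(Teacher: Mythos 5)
Your proposal is correct and follows essentially the same route as the paper: a Cook--Dyson (Duhamel) expansion of $\Omega_\lambda(t)$, the first-order expansion of $A_{\lambda,\infty}\psi_c$ from the fixed-point iteration, the observation that the two $\mathcal{O}(\lambda)$ contributions cancel exactly, and the use of the dispersive estimate from Assumption \ref{as:main} to dominate the $\mathcal{O}(\lambda^2)$ nested time integrals. The paper's write-up just makes the bookkeeping slightly more explicit by iterating the Duhamel formula once and tracking the two double integrals separately, but the underlying argument is the same.
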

\begin{proof}
First we need to compute $A_{\lambda,\infty}\,\psi_c$ up to the first order in $\lambda$. For this we need to replace $\underline{a}_{k,\lambda}$'s from \eqref{7} by their zero-th order iterates, which are nothing but the elements of the standard basis $\zeta_k$ in the sample subspace. Thus \eqref{7} and \eqref{dc1} imply: 
\begin{equation}\label{dc9}
  A_{\lambda,\infty}\,\psi_c=\psi_c +\iu \int_0^\infty \, \di r\, \ex^{-\iu r H}\, V_{{\rm eff},\lambda}\, \ex^{\iu r H}\, \psi_c +\mathcal{O}(\lambda^2). 
  \end{equation}
By differentiating and integrating back, then iterating once, we have the identity
\begin{equation}
\label{dc8}
\begin{aligned}
    &\Omega_\lambda(t)=\Id -\iu \int_0^t \di s\, \Omega_\lambda(s) \, \ex^{-\iu sH} V_{{\rm eff},\lambda}\, \ex^{\iu sH} \\
    &=\Id -\iu \int_0^t \di s\,  \ex^{-\iu sH} V_{{\rm eff},\lambda}\, \ex^{\iu sH}- \int_0^t \di s\,  \int_0^s \di u\, \Omega_\lambda(u) \, \ex^{-\iu u H} \, V_{{\rm eff},\lambda}\, \ex^{\iu (u-s)H} V_{{\rm eff},\lambda}\, \ex^{\iu sH}.
\end{aligned}
\end{equation}
Using \eqref{dc9} and the fact that $\Omega_\lambda(t)$ is unitary we get 
$$
\Omega_\lambda(t)A_{\lambda,\infty}\psi_c=\Omega_\lambda(t)\( \psi_c +\iu \int_0^\infty \, \di r\, \ex^{-\iu r H}\, V_{{\rm eff},\lambda}\, \ex^{\iu r H}\, \psi_c\) +\mathcal{O}(\lambda^2).
$$
Using the second equality in \eqref{dc8} we have 
\begin{align*}
\Omega_\lambda(t)\psi_c =& \psi_c -\iu \int_0^t \di s\,  \ex^{-\iu sH} V_{{\rm eff},\lambda}\, \ex^{\iu sH}\psi_c\\
&- \int_0^t \di s\,  \int_0^s \di u\, \Omega_\lambda(u) \, \ex^{-\iu u H} \, V_{{\rm eff},\lambda}\, \ex^{\iu (u-s)H} V_{{\rm eff},\lambda}\, \ex^{\iu sH}\psi_c.
\end{align*}
By employing the first equality in \eqref{dc8}, we obtain
\begin{align*}
\iu \Omega_\lambda(t) \int_0^\infty \, \di r\, \ex^{-\iu r H}\, &V_{{\rm eff},\lambda}\, \ex^{\iu r H}\, \psi_c=\iu \int_0^\infty \, \di r\, \ex^{-\iu r H}\, V_{{\rm eff},\lambda}\, \ex^{\iu r H}\, \psi_c\\
&+ \int_0^t \di s\int_0^\infty \di r\,\Omega_\lambda(s) \, \ex^{-\iu sH} V_{{\rm eff},\lambda}\, \ex^{\iu (s-r)H} \, V_{{\rm eff},\lambda}\, \ex^{\iu r H}\, \psi_c.
\end{align*}
Then we have  
\begin{align*}
\Omega_\lambda(t)A_{\lambda,\infty}\,\psi_c&=\psi_c -\iu \int_0^t \di s\,  \ex^{-\iu sH} V_{{\rm eff},\lambda}\, \ex^{\iu sH}\psi_c +\iu \int_0^\infty \, \di r\, \ex^{-\iu r H}\, V_{{\rm eff},\lambda}\, \ex^{\iu r H}\, \psi_c\\
&\quad -\int_0^t \di s\,  \int_0^s \di u\, \Omega_\lambda(u) \, \ex^{-\iu u H} \, V_{{\rm eff},\lambda}\, \ex^{\iu (u-s)H} V_{{\rm eff},\lambda}\, \ex^{\iu sH}\, \psi_c\\
&\quad +\int_0^t \di s\, \int_0^\infty \di r\, \Omega_\lambda(s) \, \ex^{-\iu sH} V_{{\rm eff},\lambda}\, \ex^{\iu (s-r)H}V_{{\rm eff},\lambda}\, \ex^{\iu r H}\, \psi_c +\mathcal{O}(\lambda^2).
\end{align*}
The first above integral contains the factor $V_{{\rm eff},\lambda}\, e^{\iu sH}\psi_c$ which due to Assumption \ref{as:main} will be dominated by an $L^1$ function of $s$. Taking $t\to\infty$, the first integral will cancel the second one.  The integrand of the first double integral contains the factor 
$$V_{{\rm eff},\lambda}\, \ex^{\iu (u-s)H} V_{{\rm eff},\lambda}\, \ex^{\iu sH}\, \psi_c$$
which is dominated by $\lambda^2$ times products of $L^1$ functions in $u-s$ and $s$, hence by taking $t\to \infty$, this double integral will behave like $\lambda^2$. The second double integral can be bounded in a similar way. 
\end{proof}

\subsection{Proof of Corollary \ref{Rhc1}}
When we compute the charge intensity current where the observable $O$ equals $I_1$, we may use \eqref{gc10}. Define 
\begin{equation}\label{hm10}
    \tilde{\rho}_{{\rm eff},\lambda}:=W_-(H_{{\rm eff},\lambda},H_{L})\, \tilde{\rho_i}\, W_-(H_{L},H_{{\rm eff},\lambda}).
    \end{equation}
Using the composition rule of wave operators in \eqref{gc10} we may write
\begin{align*}
\omega_\lambda(I_1)&=
{\rm Tr} \big ( \tilde{\rho}_{{\rm eff},\lambda}\, W_-(H_{{\rm eff},\lambda},H)A_{\lambda,\infty}\, I_1\, A_{\lambda,\infty}^*\, W_-(H,H_{{\rm eff},\lambda})\big ).
\end{align*}
The operator $W_-(H_{{\rm eff},\lambda},H)A_{\lambda,\infty}\, I_1\, A_{\lambda,\infty}^*\, W_-(H,H_{{\rm eff},\lambda})$ is self-adjoint, has rank $2$, and reasoning as in \eqref{hm4}, it can be written as $\tau \sum_{j=1}^2(-1)^{j-1} \ket{f_{j,{\rm eff},\lambda}}\, \bra{f_{j,{\rm eff},\lambda}}$, where 
$$\sqrt{2}\, \ket{f_{j,{\rm eff}, \lambda}}=W_-(H_{{\rm eff},\lambda},H)A_{\lambda,\infty}\,\ket{S_1}+\iu\, (-1)^{j}\, W_-(H_{{\rm eff},\lambda},H)A_{\lambda,\infty}\,\ket{L_1}.$$
    
The operator $\tilde{\rho}_{{\rm eff},\lambda}$ in \eqref{hm10} is diagonalized by a generalized Fourier transform associated with $H_{{\rm eff},\lambda}$, given by $\mathfrak{F}_{{\rm eff}, \lambda}=\mathfrak{F}_L\, W_-(H_{L}, H_{{\rm eff},\lambda}) $ where $\mathfrak{F}_L$  diagonalizes $H_L$ \hc{(see \eqref{march1} and \eqref{march0})}.   
Because $\tilde{\rho}_i$ has a non-zero component only on lead $2$, we have: 
$$\omega_\lambda(I_1)= \int_{-2t_c}^{2t_c} \Big (\frac{1}{\ex^{\beta_2(E-\mu_2)}+1}-\frac{1}{\ex^{\beta_1(E-\mu_1)}+1}\Big ) \,\tau \sum_{j=1}^2 (-1)^{j-1}\,|\Pi_2\,\mathfrak{F}_{{\rm eff}, \lambda}(f_{j,{\rm eff},\lambda})|^2(E)\, \di E.$$
Using Lemma \ref{lemmahc2} we see that 
$$\sqrt{2}\, \ket{f_{j,{\rm eff}, \lambda}}=\ket{S_1}+\iu\, (-1)^{j}\, \ket{L_1} +\mathcal{O}(\lambda^2),$$
\hc{hence the current density, seen  as an element of $L^1([-2t_c,2t_c])$, can be approximated up to an error of order $\lambda^2$ by the current density coming from  $\tilde{\rho}_{{\rm eff},\lambda}$ alone:
\begin{align*}
    2\pi \mathcal{T}_\lambda(E)&=\tau \sum_{j=1}^2 (-1)^{j-1}\,|\mathfrak{F}_{{\rm eff}, \lambda}^{(2)}(f_{j,{\rm eff},\lambda})|^2(E)\\
    &=2^{-1}\tau \sum_{j=1}^2 (-1)^{j-1}\,\left |\mathfrak{F}_{{\rm eff}, \lambda}^{(2)}(S_1+\iu\, (-1)^{j}\, L_1 )\right |^2(E)+\mathcal{O}(\lambda^2),
    \end{align*}
    where the approximation 
\begin{equation}\label{march4}
\mathcal{T}_{{\rm eff},\lambda}(E):=2^{-1}\tau \sum_{j=1}^2 (-1)^{j-1}\,\left |\mathfrak{F}_{{\rm eff}, \lambda}^{(2)}\big (S_1+\iu\, (-1)^{j}\, L_1 \big )\right |^2(E)
\end{equation}
does not contain any self-consistent terms and equals a \virg{non-interacting} Landauer-B\"uttiker transmittance where $H$ is replaced by $H_{{\rm eff},\lambda}$. }

\appendix

\section{Existence of a unique global solution for {\it U(t)}}\label{Ap1}
This Appendix is devoted to show the existence of a unique global solution for $U(t)$ solving \eqref{eqn:U}. We introduce the following norm for $\{\nu_{jk}\}_{1\leq j,k\leq N}$ entering in the definition of the Hartree potential $V_\lambda$.
Let $\Vert\nu\Vert_1$ be as in \eqref{gm3}. Then:
\begin{lemma}
\label{lem:propG}
Let $G$ be as in \eqref{eqn:G}. Then we have that
\begin{enumerate}[label=(\roman*), ref=(\roman*)]
\item \label{it:diffG}
\begin{equation}
\label{eqn:diffG}
\norm{{G}(A_1)-{G}(A_2)}\leq \norm{A_1-A_2}\(\norm{H}+\lambda\norm{\nu}_1 {\(\norm{A_1} +\norm{A_2}\)}^2\)
\end{equation}
for every operators $A_1,A_2\in\LH$.
\item \label{it:pointG}
\begin{equation}
\label{eqn:pointG}
\norm{G(A)}\leq \norm{A}\(\norm{H} +\lambda \norm{\nu}_1\norm{A}^2\)
\end{equation}
for every operator $A\in\LH$.
\end{enumerate}
\end{lemma}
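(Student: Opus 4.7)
The plan is to reduce both estimates to a single observation: the Hartree map $\gamma\mapsto V_\lambda\{\gamma\}$ is bounded linearly by $\lambda\|\nu\|_1\|\gamma\|$, since each diagonal coefficient $\scal{\zeta_k}{\gamma\,\zeta_k}$ obeys $|\scal{\zeta_k}{\gamma\,\zeta_k}|\le\|\gamma\|$ and the projections $\ket{\zeta_j}\bra{\zeta_j}$ are mutually orthogonal, so
\[
\norm{V_\lambda\{\gamma\}}\le \lambda\sum_{j,k=1}^N|\nu_{jk}|\,|\scal{\zeta_k}{\gamma\,\zeta_k}|\le \lambda\|\nu\|_1\,\|\gamma\|.
\]
Together with the fact that $\|\rho_i\|\le 1$ (each direct summand in \eqref{eqn:rhoi} is a Fermi--Dirac operator, hence has norm at most one), this gives $\norm{V_\lambda\{A\rho_i A^*\}}\le \lambda\|\nu\|_1\|A\|^2$ for any $A\in\LH$.

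For item \ref{it:pointG}, I would then just apply the triangle inequality to $G(A)=HA+V_\lambda\{A\rho_i A^*\}A$, giving
\[
\norm{G(A)}\le \norm{H}\,\norm{A}+\lambda\|\nu\|_1\|A\|^2\,\|A\|,
\]
which is exactly \eqref{eqn:pointG}.

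For item \ref{it:diffG}, the key step is to telescope the difference:
\[
G(A_1)-G(A_2)=H(A_1-A_2)+V_\lambda\{A_1\rho_i A_1^*\}(A_1-A_2)+\big(V_\lambda\{A_1\rho_i A_1^*\}-V_\lambda\{A_2\rho_i A_2^*\}\big)A_2.
\]
The first two terms contribute at most $\|A_1-A_2\|\bigl(\|H\|+\lambda\|\nu\|_1\|A_1\|^2\bigr)$ by the bound above. For the last term, the linearity of $V_\lambda\{\cdot\}$ lets me write $A_1\rho_i A_1^*-A_2\rho_i A_2^*=(A_1-A_2)\rho_i A_1^*+A_2\rho_i(A_1^*-A_2^*)$, so the above observation produces
\[
\norm{V_\lambda\{A_1\rho_i A_1^*\}-V_\lambda\{A_2\rho_i A_2^*\}}\le \lambda\|\nu\|_1\|A_1-A_2\|\bigl(\|A_1\|+\|A_2\|\bigr).
\]
Multiplying by $\|A_2\|$ and adding everything up gives a bound involving $\|A_1\|^2+\|A_1\|\|A_2\|+\|A_2\|^2\le (\|A_1\|+\|A_2\|)^2$, which yields \eqref{eqn:diffG}.

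I do not anticipate any real obstacle here: the entire argument rests on the elementary fact that $V_\lambda\{\cdot\}$ is linear in its argument with operator-norm continuity constant $\lambda\|\nu\|_1$, combined with $\|\rho_i\|\le 1$; the only care needed is the symmetric splitting of $A_1\rho_i A_1^*-A_2\rho_i A_2^*$ so that the Lipschitz constant comes out in the clean symmetric form $(\|A_1\|+\|A_2\|)^2$ rather than something like $2\max(\|A_1\|,\|A_2\|)^2$.
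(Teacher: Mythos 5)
Your proof is correct and follows essentially the same route as the paper: the same preliminary operator-norm bound $\|V_\lambda\{\gamma\}\|\le\lambda\|\nu\|_1\|\gamma\|$, the same telescoping of $G(A_1)-G(A_2)$ by adding and subtracting $V_\lambda\{A_1\rho_i A_1^*\}A_2$, and the same symmetric splitting of $A_1\rho_iA_1^*-A_2\rho_iA_2^*$ using $\|\rho_i\|\le1$. Nothing to add.
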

\begin{proof}
First of all, we establish two preliminary inequalities.
The first one is deduced as follows. Notice that $V_\lambda\{\,\cdot\,\}$ is linear by its very definition \eqref{eqn:V} and 
\begin{equation} 
\label{eqn:boundV}
\norm{V_\lambda\{\rho\}}\leq\lambda \norm{\nu}_1 \norm{\rho}\qquad\text{for every $\rho\in\LH$}. 
\end{equation}
The second one is shown below. For any operators $A_1, A_2\in\LH$ we observe that
\begin{equation}
\label{eqn:argV}
\begin{aligned}
\norm{A_1\rho_i A_1^* -A_2\rho_i A_2^*}
&\leq \norm{A_1-A_2}\(   \norm{A_1}+\norm{A_2}\),
\end{aligned}
\end{equation}
where we have used that $\norm{\rho_i}\leq 1$ by its definition \eqref{eqn:rhoi}. Now we are ready to prove inequality \eqref{eqn:diffG}. For every operators $A_1,A_2\in\LH$, we have that
\begin{align*}
\norm{G(A_1)-G(A_2)}&\leq \norm{H}\norm{A_1-A_2}+\norm{V_\lambda \{A_1 \rho_i A_1^*\} A_1- V_\lambda \{A_2 \rho_i A_2^*\} A_2}\\
&\leq\norm{H}\norm{A_1-A_2}+\norm{V_\lambda \{A_1 \rho_i A_1^*\} A_1- V_\lambda \{A_1 \rho_i A_1^*\} A_2}\\
&\phantom{\leq}+\norm{V_\lambda \{A_1 \rho_i A_1^*\} A_2- V_\lambda \{A_2 \rho_i A_2^*\} A_2}\\
&=\norm{H}\norm{A_1-A_2}+\norm{V_\lambda \{A_1 \rho_i A_1^*\} \(A_1-A_2\)}\\
&\phantom{\leq}+\norm{V_\lambda \{A_1 \rho_i A_1^* -A_2 \rho_i A_2^*\} A_2}\\
&\leq \norm{H}\norm{A_1-A_2}+\lambda \norm{\nu}_1\norm{A_1}^2\norm{A_1-A_2}\\
&\phantom{\leq}+\lambda \norm{\nu}_1\norm{A_2} \norm{A_1-A_2}\(   \norm{A_1}+\norm{A_2}\)\\
&\leq\norm{A_1-A_2}\(  \norm{H} + \lambda \norm{\nu}_1{\( \norm{A_1}+\norm{A_2}\)}^2 \)
\end{align*}
where we have used the triangle inequality, \eqref{eqn:boundV}, the linearity of $V_\lambda\{\,\cdot\,\}$ and \eqref{eqn:argV}.
To show inequality \eqref{eqn:pointG}, we observe that
\begin{align*}
\norm{G(A)}&\leq \norm{H}\norm{ A}+\norm{ V_\lambda \{A\, \rho_i\, A^*\}}\norm{ A}
\leq\norm{ A}\( \norm{H}  +\lambda\norm{\nu}_1\norm{A}^2  \),
\end{align*}
where we have used \eqref{eqn:boundV} in the second inequality.
\end{proof}

An immediate consequence of the previous Lemma is the following result.

\begin{corollary}
\label{cor:lipG}
Let $G$ be as in \eqref{eqn:G}. Then we have that $G$ is locally Lipschitz and locally bounded. Specifically, for every $A_0\in\LH$, let $B_r(A_0)$ be the closed ball of radius $r>0$ centered at $A_0$, \ie
\[
B_r(A_0):=\{   A\in\LH: \norm{A-A_0}\leq r\}.
\]
\begin{enumerate}[label=(\roman*), ref=(\roman*)]
\item \label{it:lipG}
Let the local Lipschitz constant be defined as
\begin{equation}
\label{eqn:L}
L_{r+\norm{A_0}}:=\norm{H}+4\lambda \norm{\nu}_1{\(r+\norm{A_0}   \)}^2.
\end{equation}
Then 
\begin{equation}
\label{eqn:lipG}
\norm{G(A_1)-G(A_2)}\leq L_{r+\norm{A_0}}\norm{A_1-A_2}\qquad\text{for every $A_1,A_2\in B_r(A_0)$}.
\end{equation}
\item \label{it:maxG}
Let the local maximum be defined as
\begin{equation}
\label{eqn:M}
M_{r+\norm{A_0}}:={\( r+\norm{A_0}  \)}\( \norm{H}+\lambda\norm{\nu}_1{\(   r+\norm{A_0}\)}^2   \).
\end{equation}
Then
\begin{equation}
\label{eqn:maxG}
\norm{G(A)}\leq M_{r+\norm{A_0}}\qquad\text{for every $A\in B_r(A_0)$}.
\end{equation}
\end{enumerate}
\end{corollary}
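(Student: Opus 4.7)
The plan is to deduce both assertions directly from Lemma \ref{lem:propG} using only the triangle inequality in $\LH$. The single elementary observation needed is that for any $A\in B_r(A_0)$ one has
\[
\norm{A}\leq \norm{A-A_0}+\norm{A_0}\leq r+\norm{A_0},
\]
so the operator norm of every element of the closed ball $B_r(A_0)$ is uniformly controlled by $r+\norm{A_0}$.

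For item \ref{it:lipG}, I would start from \eqref{eqn:diffG} applied to $A_1,A_2\in B_r(A_0)$. Since $\norm{A_1}+\norm{A_2}\leq 2(r+\norm{A_0})$, the factor ${(\norm{A_1}+\norm{A_2})}^2$ is bounded by $4{(r+\norm{A_0})}^2$. Substituting this into \eqref{eqn:diffG} produces exactly the constant $L_{r+\norm{A_0}}=\norm{H}+4\lambda\norm{\nu}_1{(r+\norm{A_0})}^2$ defined in \eqref{eqn:L}, and the local Lipschitz bound \eqref{eqn:lipG} follows at once.

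For item \ref{it:maxG}, I would apply \eqref{eqn:pointG} to $A\in B_r(A_0)$ and again use $\norm{A}\leq r+\norm{A_0}$. Since the function $x\mapsto x(\norm{H}+\lambda\norm{\nu}_1 x^2)$ is monotone increasing on $[0,\infty)$, this turns the bound $\norm{A}\bigl(\norm{H}+\lambda\norm{\nu}_1\norm{A}^2\bigr)$ into $(r+\norm{A_0})\bigl(\norm{H}+\lambda\norm{\nu}_1{(r+\norm{A_0})}^2\bigr)$, which coincides with $M_{r+\norm{A_0}}$ as in \eqref{eqn:M}, giving \eqref{eqn:maxG}.

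There is essentially no obstacle: the corollary is a quantitative restatement of Lemma \ref{lem:propG} on a closed ball, and all the analytic work has already been carried out in that lemma. The only minor bookkeeping point is the constant $4$ appearing in \eqref{eqn:L}, which arises from squaring the uniform bound $\norm{A_1}+\norm{A_2}\leq 2(r+\norm{A_0})$. This quantitative version is what one needs in the subsequent Picard--Lindelöf type argument for the local (and ultimately global) solvability of the Cauchy problem \eqref{eqn:U} carried out in Appendix \ref{Ap1}.
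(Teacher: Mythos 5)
Your proof is correct and follows essentially the same route as the paper: apply the triangle inequality $\norm{A}\leq\norm{A-A_0}+\norm{A_0}\leq r+\norm{A_0}$ for $A\in B_r(A_0)$, then substitute into \eqref{eqn:diffG} and \eqref{eqn:pointG} from Lemma \ref{lem:propG}. The paper's proof is equally brief and your elaboration of where the constant $4$ comes from and the monotonicity remark are accurate bookkeeping details.
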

\begin{proof}
{\it \ref{it:lipG}}
Let $A_1,A_2\in B_r(A_0)$.
In view of the triangle inequality notice that
\[
\norm{A_i}\leq\norm{A_i-A_0}+\norm{A_0}\leq r+ \norm{A_0}\qquad\text{for every $1\leq i\leq 2$}.
\]
Thus estimate \eqref{eqn:diffG} implies inequality \eqref{eqn:lipG}.

{\it \ref{it:maxG}} Let $A\in B_r(A_0)$. By using that $\norm{A}\leq r+\norm{A_0}$ and inequality \eqref{eqn:pointG}, estimate \eqref{eqn:maxG} is obtained.
\end{proof}

\begin{proposition}
\label{prop:exunU}
Let $H$ be as in \eqref{eqn:H} and $V_\lambda$ as in \eqref{eqn:V}. Then there exists a unique map $U\in C^1\([0,\infty),\LH  \)$ solving the Cauchy problem \eqref{eqn:U}, such that $U(t)$ is unitary for all $t\geq 0$.


\end{proposition}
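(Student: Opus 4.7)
The plan is threefold: first to produce a unique local solution via Picard--Lindel\"of, then to show it is automatically unitary by exploiting the self-adjointness of the instantaneous one-body generator $H+V_\lambda\{U(t)\rho_i U^*(t)\}$, and finally to extend the solution globally using the uniform norm bound that unitarity provides. For the first step, I would recast \eqref{eqn:U} as the integral equation $U(t)=\Id-\iu\int_0^t G(U(s))\,\di s$; Corollary \ref{cor:lipG} furnishes local Lipschitz and boundedness constants for $G$ on closed balls in $\LH$, so a standard Banach fixed-point argument in $C([0,T_0], B_r(\Id))$ equipped with the supremum norm yields, for $T_0>0$ sufficiently small depending on $r$, a unique local solution, with $C^1$ regularity inherited from the continuity of $G\circ U$.

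\textbf{Unitarity on the local interval.} The crux is that $K(t):=H+V_\lambda\{U(t)\rho_i U^*(t)\}$ is self-adjoint: $\rho_i$ is self-adjoint by \eqref{eqn:rhoi}, hence $U(t)\rho_i U^*(t)$ is self-adjoint for every $t$ (regardless of whether $U(t)$ itself is unitary), and since $\nu_{jk}\in\R$ the definition \eqref{eqn:V} produces a self-adjoint Hartree potential. Rewriting \eqref{eqn:U} as $\dot U(t)=-\iu K(t)U(t)$ and taking adjoints in operator norm gives $\dot U^*(t)=\iu U^*(t)K(t)$; therefore
\begin{equation*}
\frac{\di}{\di t}\bigl(U^*(t)U(t)\bigr)=\iu U^*(t)K(t)U(t)-\iu U^*(t)K(t)U(t)=0,
\end{equation*}
so $U^*(t)U(t)=\Id$ on $[0,T_0]$. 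For surjectivity I would set $P(t):=U(t)U^*(t)$ and compute $\dot P(t)=-\iu[K(t),P(t)]$ with $P(0)=\Id$; the constant $P(t)\equiv\Id$ solves this linear Cauchy problem, and by uniqueness $U(t)U^*(t)=\Id$, so $U(t)$ is unitary on the local interval.

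\textbf{Global extension.} Since $\|U(t)\|=1$ whenever defined, the local time step $T_0$ supplied by the Picard argument at any restart time $t_0$ depends only on $\|U(t_0)\|=1$ and hence can be chosen uniform in $t_0$. If the maximal interval of existence were some finite $[0,T^*)$, restarting the iteration at a time in $(T^*-T_0,T^*)$ would yield an extension past $T^*$, contradicting maximality. Hence $U\in C^1([0,\infty),\LH)$ and is unitary throughout. The only non-routine step is the observation that $K(t)$ is self-adjoint, a structural feature reflecting the compatibility of the Hartree self-interaction with the Hermitian character of the density operator; the rest combines classical ODE tools.
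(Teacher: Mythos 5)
Your proof is correct and follows essentially the same architecture as the paper's: recast \eqref{eqn:U} as an integral equation, apply the Banach--Caccioppoli theorem locally via Corollary \ref{cor:lipG}, deduce $U^*(t)U(t)=\Id$ from the vanishing of its time derivative, and continue globally using the uniform bound $\norm{U(t)}=1$. The one genuine variation is in the surjectivity step. You establish $U(t)U^*(t)=\Id$ by checking that $P(t):=U(t)U^*(t)$ solves the linear Cauchy problem $\frac{\di}{\di t}P(t)=-\iu[K(t),P(t)]$, $P(0)=\Id$, whose unique solution is the constant $\Id$; this requires invoking continuity of $t\mapsto K(t)$ together with the standard uniqueness theorem for time-dependent linear ODEs in $\LH$. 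The paper instead observes that the fixed point $U_0$ is constructed inside the ball $\norm{U_0(t)-\Id}\leq 1/2$, so $U_0(t)$ is automatically invertible by the Neumann series, and then combines the isometry $U_0^*U_0=\Id$ with this invertibility to identify $U_0^*$ as the two-sided inverse. The paper's route gets invertibility for free from the smallness of the ball already fixed in the Picard argument; yours cleanly isolates the algebraic input (self-adjointness of $K(t)$, which you correctly note holds even without assuming $U(t)$ unitary) at the modest cost of a separate ODE-uniqueness appeal. Both are valid.
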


\begin{proof}
First of all, one rewrites \eqref{eqn:U} as an integral equation
\begin{equation}
\label{eqn:intU}
U(t)=\Id -\iu \int_0^t\di s\, G(U(s)),\qquad t\geq 0
\end{equation}
where $G$ is defined in \eqref{eqn:G}. By the fact that $G$ is locally Lipschitz by Corollary \ref{cor:lipG}\ref{it:lipG}, one has that $G$ is continuous.
Thus, by fundamental theorem of calculus, finding a solution $U\in C^1\([0,\infty),\LH  \)$ of the Cauchy problem \eqref{eqn:U} is equivalent to determine a solution $U\in C\([0,\infty),\LH  \)$ of integral equation \eqref{eqn:intU}. Now, we show that there exists a unique continuous solution satisfying \eqref{eqn:intU}. 

Let $A_0$ be a linear operator with $\norm{A_0}=1$. 
By Corollary \ref{cor:lipG}, we have that the map $G$ restricted to $B_{1/2}(A_0)$ is Lipschitz with Lipschitz constant $L_{3/2}$ and is bounded by $M_{3/2}$. 
Let $0<\delta<\min\(\frac{1}{L_{3/2}},\frac{1}{2M_{3/2}}\)$. Consider the metric space
\begin{align*}
X_\delta&:=\{U\colon [0,\delta]\to \LH,\text{ $U$ continuous: }\,\sup_{t\in[0,\delta]}\norm{U(t)-\Id}\leq 1/2    \}\\
d_\infty(U,V)&:=\sup_{t\in [0,\delta] }\norm{U(t)-V(t)},\qquad \text{for every $U,V\in X_\delta $}.
\end{align*}
The space $\(X_\delta, d_\infty\)$ is complete. 
Consider the following map $\mathcal{G}_0\colon X_\delta \to C([0,\delta],\LH)$, which is defined as
\[
\mathcal{G}_0(U)(t):=\Id -\iu \int_0^t\di s\, G(U(s)),\qquad t\in[0,\delta].
\]
By using \eqref{eqn:maxG} we get that $\mathcal{G}_0$ leaves $X_\delta$ invariant. From \eqref{eqn:lipG} we obtain that 
\begin{align*}
\norm{\mathcal{G}_0(U)(t)  - \mathcal{G}_0(V)(t)}&=\norm{\int_0^t\di s\, \(G(U(s))  - G(V(s))\)}\leq \delta L_{3/2}d_\infty\( U,V \),
\end{align*}
thus $\mathcal{G}_0$ is a contraction.
Therefore, by applying Banach--Caccioppoli fixed-point theorem we conclude that there exists a unique fixed point of $\mathcal{G}_0$; namely, there exists a unique solution $U_0\in X_\delta$ of \eqref{eqn:intU} when $t\in [0,\delta]$.

Since $U_0\in X_\delta$ we have $\norm{\Id -U_0(t) }\leq 1/2$, and because $U_0(t)=\Id-\( \Id -U_0(t)\)$, it turns out that $U_0$ is invertible by the Neumann series.
Moreover, in view of  \eqref{eqn:U} and \eqref{eqn:U*} for $t\in (0,\delta)$ we have that (in operator norm topology) $\frac{\di}{\di t}\( U_0^*(t)U_0(t)  \)=0$.
Thus, by using also the continuity of $U_0^*(t)U_0(t)$, we get that
\[
U_0^*(t)U_0(t)=U_0^*(0)U_0(0)=\Id\qquad \text{for every $t\in [0,\delta]$}.
\]
Since $U_0(t)$ is invertible, we have that $U_0^*(t)\equiv {U_0(t)}^{-1}$ for any $t\in [0,\delta]$. Therefore, the operator $U_0(t)$ is unitary for all $t\in[0,\delta]. $

Analogously, we have that there exists a unique solution $U_1\in C\([\delta,2\delta],\LH  \)$ of the integral equation
\[
U_1(t)=U_0(\delta) -\iu \int_\delta^t\di s\, G(U_1(s)),\qquad t\in[\delta,2\delta].
\]
Let us sketch for completeness the argument being similar to the previous one. Setting
\[
X_{2\delta}:=\{U\colon [\delta,2\delta]\to \LH,\text{ $U$ continuous: }\,\sup_{t\in[\delta,2\delta]}\norm{U(t)-U_0(\delta)}\leq 1/2    \},
\]
we consider the corresponding map $\mathcal{G}_1\colon X_{2\delta} \to C([\delta,2\delta],\LH)$ 
\[
\mathcal{G}_1(U)(t):=U_0(\delta) -\iu \int_0^t\di s\, G(U(s)),\qquad t\in[\delta,2\delta].
\]
Observe that since $\norm{U_0(\delta)}=1$ by the unitarity of $U_0$, if $U\in X_{2\delta}$ then $U(t)\in B_{1/2}(A_0)$ for all $t\in [\delta,2\delta]$. Since the local Lipschitz constant $L_{3/2}$ and local maximum $M_{3/2}$ do depend only on $\norm{A_0}=1$ and the radius $1/2$ of the ball $B_{1/2}(A_0)$, by using $\delta$ as in the hypothesis we obtain that  $\mathcal{G}_1$ leaves invariant $X_{2\delta}$ and is a contraction on $X_{2\delta}$ itself. 
By writing 
$$
U_1(t)=\(\Id -\(U_0(\delta)-U_1(t)\)U_0^{-1}(\delta)    \)U_0(\delta),
$$
we see that the right-hand side is invertible by Neumann series.
Thus, by repeating the same argument as before for the map $U_0$, we deduce that the operator $U_1(t)$ is unitary for all $t\in [\delta,2\delta]$.
In this way we can extend the solution $U(t)$ for all $t\geq 0$.
\end{proof}


\section{Dispersive estimates for the non-interacting Hamiltonian}\label{Ap2}

In this appendix we show a dispersive estimate for the non-interacting Hamiltonian $H$ given in \ref{eqn:H}.

\begin{lemma}
\label{lem:resolventlaplacian}
Let the Dirichlet Laplacian $\Delta\su{D}$  be as in \eqref{eqn:lap}. Let $n,m\in\N$ and $\theta\in (0,\pi)$. Then one has: 
\begin{equation}
\label{eqn:resolventlaplacian}
\lim_{\eps\to 0^+}\bscal{n}{  {(\Delta\su{D}-2t_c\cos\theta -\iu \eps)}^{-1} m}=\frac{1}{2\iu\, t_c\sin(\theta)}\(\ex^{-\iu\theta(n+m+2)}-\ex^{-\iu \theta \abs{n-m}}\).
\end{equation}
If $\theta=0$ we have 
\begin{equation}\label{hc2}
\lim_{\eps\to 0^+}\bscal{n}{  {(\Delta\su{D}-2t_c -\iu \eps)}^{-1} m}=\frac{1} {2t_c}\(\abs{n-m}-(n+m+2)\),
\end{equation}
and if $\theta=\pi$ we have
\begin{equation}\label{hc3}
\lim_{\eps\to 0^+}\bscal{n}{  {(\Delta\su{D}+2t_c-\iu \eps)}^{-1} m}=\frac{(-1)^{n+m}}{2t_c}\(n+m+2-\abs{n-m}\).
\end{equation}
In particular, the map $$[-2t_c,2t_c]\ni E\mapsto G(E):=\lim_{\eps\to 0^+}\bscal{n}{  {(\Delta\su{D}-E -\iu \eps)}^{-1} m}\in \C$$
is continuous and has a continuous  extension to $\R$, being real-valued outside $\sigma(\Delta\su{D})=[-2t_c,2t_c]$. 

\end{lemma}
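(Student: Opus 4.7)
The plan is to reduce the half-line Dirichlet resolvent to the full-line free resolvent by the method of images, then compute the latter explicitly by Fourier analysis and a contour integral, and finally pass to the boundary values.

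First I would observe that if we extend a vector in $\ell^2(\N)$ by odd reflection through the ghost site $-1$, i.e.\ set $\psi(-n-2):=-\psi(n)$ for $n\ge 0$, the half-line Dirichlet Laplacian becomes the restriction of the free discrete Laplacian $\Delta\sub{free}$ on $\ell^2(\Z)$ (defined by $(\Delta\sub{free}\psi)(n)=t_c(\psi(n+1)+\psi(n-1))$). Writing $G_0(n,m;z):=\scal{n}{(\Delta\sub{free}-z)^{-1}m}$ for the full-line Green's function, the image principle then gives, for all $n,m\in\N$ and $z\in\C\setminus[-2t_c,2t_c]$,
\begin{equation*}
\scal{n}{(\Delta\su{D}-z)^{-1}m}=G_0(n,m;z)-G_0(n,-m-2;z).
\end{equation*}
The identity is checked by noting that the right-hand side satisfies the Dirichlet condition at $-1$ (because $G_0$ depends only on $|n-m|$) and obeys the correct inhomogeneous equation on $\N$.

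Next I would compute $G_0$ by diagonalizing $\Delta\sub{free}$ with the Fourier transform: for $\mathrm{Im}(z)>0$ small,
\begin{equation*}
G_0(n,m;z)=\frac{1}{2\pi}\int_{-\pi}^{\pi}\frac{\ex^{\iu k(n-m)}}{2t_c\cos k-z}\,\di k.
\end{equation*}
Writing $z=2t_c\cos\theta+\iu\eps$ with $\theta\in(0,\pi)$ fixed and $\eps>0$ small, the two simple poles of the integrand lie near $\pm\theta$, and a brief computation shows that the pole relevant for the upper boundary value sits in the upper (resp.\ lower) half $k$-plane according to the sign of $n-m$. Closing the contour accordingly and applying the residue theorem (using $\frac{\di}{\di k}(2t_c\cos k)=-2t_c\sin k$) yields, after sending $\eps\downarrow 0^+$,
\begin{equation*}
\lim_{\eps\to 0^+}G_0(n,m;z)=-\frac{\ex^{-\iu\theta|n-m|}}{2\iu t_c\sin\theta}.
\end{equation*}
Substituting this into the image formula and using that $|n-(-m-2)|=n+m+2$ for $n,m\ge 0$ gives precisely \eqref{eqn:resolventlaplacian}.

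For the threshold cases $\theta=0,\pi$ I would simply take the limit $\theta\to 0$ (resp.\ $\theta\to\pi$) in \eqref{eqn:resolventlaplacian}: both numerator and denominator vanish to first order, so a one-line Taylor expansion of the exponentials and of $\sin\theta$ reproduces \eqref{hc2} and \eqref{hc3} (for $\theta=\pi$ one uses $\ex^{-\iu\pi k}=(-1)^k$ and $\sin(\pi-\vartheta)=\sin\vartheta$). This same calculation establishes continuity of $G(E)$ at the two thresholds. Continuity on $(-2t_c,2t_c)$ is manifest from \eqref{eqn:resolventlaplacian}. To extend $G(E)$ continuously to $\R$ I would parameterize $E=2t_c\cosh\eta>2t_c$ by setting $\theta=-\iu\eta$ with $\eta>0$ (and analogously $\theta=\pi+\iu\eta$ for $E<-2t_c$); then $2\iu t_c\sin\theta=2t_c\sinh\eta$ is real, the exponentials $\ex^{-\iu\theta k}=\ex^{-\eta k}$ decay, and the resulting formula is real-valued and reduces to \eqref{hc2}--\eqref{hc3} as $\eta\downarrow 0$.

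The only point that requires genuine care is bookkeeping the sign of $\mathrm{Im}(\theta)$ when selecting the pole — getting this wrong flips the sign of the whole answer and breaks the identification with the limiting absorption principle from above. Once that is handled, the rest of the argument is a direct contour computation plus routine Taylor expansions.
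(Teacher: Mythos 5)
Your overall route — odd reflection/method of images to reduce the Dirichlet resolvent to the free discrete Green's function on $\Z$, then Fourier transform and a contour/residue computation with $z=2t_c\cos\theta+\iu\eps$ — is exactly the paper's argument for $\theta\in(0,\pi)$. The treatment of the spectral endpoints $E=\pm 2t_c$ is where you deviate, and that is where the gap sits.

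The formulas \eqref{hc2} and \eqref{hc3} assert that the limit $\lim_{\eps\to 0^+}\bscal{n}{(\Delta\su{D}-E-\iu\eps)^{-1}m}$ exists at the fixed threshold energies $E=\pm 2t_c$ and has the stated values; this is exactly where the free Green function $g_z^{\Delta}$ itself diverges like $\eps^{-1/2}$ as $\eps\downarrow 0$, so the statement is genuinely about a cancellation in the image pair at the threshold. What you propose instead — Taylor-expanding the interior formula \eqref{eqn:resolventlaplacian} around $\theta=0$ and $\theta=\pi$ — computes the \emph{lateral} limits $\lim_{E\to\pm 2t_c^\mp}G(E)$, not the vertical limits $\lim_{\eps\to 0^+}$ at $E=\pm 2t_c$. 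Your follow-up sentence, "this same calculation establishes continuity of $G(E)$ at the two thresholds," is circular: establishing continuity at $\pm 2t_c$ requires independently showing that $G(\pm 2t_c)$ exists and equals the lateral limit, which is precisely the content of \eqref{hc2}--\eqref{hc3} that you set out to prove. To close this you would either (i) compute the $\eps\to 0^+$ limit at $E=\pm 2t_c$ directly, as the paper does, by expanding the two roots of $w^2-(z/t_c)w+1=0$ in $\sqrt{\eps}$ and verifying that the leading $\eps^{-1/2}$ terms cancel between $g_z^{\Delta}(n-m)$ and $g_z^{\Delta}(n+m+2)$; or (ii) observe that after the image cancellation $g_z^{\Delta\su{D}}(n,m)$ is a \emph{polynomial} in the bounded root $w_-(z)$ (indeed $g_z^{\Delta\su{D}}(n,m)=-t_c^{-1}w_-^{|n-m|+1}\sum_{j=0}^{\min(n,m)}w_-^{2j}$), and that $w_-(z)$ extends continuously to the closed upper half plane including the branch points $\pm 2t_c$; then joint continuity in $z$ follows and the lateral and vertical limits coincide. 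Either way, a further step is needed beyond the Taylor expansion in $\theta$.
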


\begin{proof}
Let $E\in\R$ and $\eps>0$, consider $z=E+\iu \eps$. The \emph{Green function} $g_z^{\Delta\su{D}}(n,m)$ for the Dirichlet Laplacian $\Delta\su{D}$ is defined as
\[
g_z^{\Delta\su{D}}(n,m):=\bscal{n}{  {(\Delta\su{D}-z)}^{-1} m}.
\]
By the so-called \emph{method of electrostatic images}, the expression for $g_z^{\Delta\su{D}}(n,m)$ ca be expressed with the help of the Green function $g_z^{\Delta}(n,m)$ of the discrete Laplacian $\Delta$ on $\Z$ (see \eqref{eqn:greenDDelta}).
Let us recall the definition of the operator $\Delta$. For every $\psi\in \ell^2(\Z)$, 
\[
\(\Delta\psi\)(n):=t_c\(\psi(n+1)+\psi(n-1)\)\text{ for all $n\in\Z$}.
\]
By the translation invariance of the Laplacian $\Delta$, one has that $g_z^{\Delta}(n,m)\equiv g_z^{\Delta}(n-m)$. 
Now, choosing $z=2t_c\cos(\theta)+\iu\eps$ with $\theta\in (0,\pi)$ and $\eps>0$, we shall explicitly compute $g_z^{\Delta}(n-m)$. The operator $\Delta$ can be fibered by using the discrete Fourier transform $\F$, which we briefly recall in the following. For every $\psi\in\ell^2(\Z)$ being compactly supported, one defines
\[
\(\F\psi\)(k):=(2\pi)^{-1/2}\sum_{n\in\Z}\ex^{\iu k n}\psi(n),\text{ for all $k\in \R$}.
\]
The operator $\F$ extends to a unitary map from $\ell^2(\Z)\to L^2([-\pi,\pi])$ and 
\[
\big (\F \Delta\F^{-1}\big )(k)=2t_c \cos k\, .
\]
Thus
\begin{equation*}
g_z^{\Delta}(n-m)=\bscal{n}{  {(\Delta-2t_c\cos\theta -\iu \eps)}^{-1} m}=\frac{1}{2\pi}\int_{-\pi}^{\pi}\di k\,\frac{\ex^{\iu k (m-n)}}{2t_c \cos(k) -2t_c\cos(\theta) -\iu \eps }.
\end{equation*}
Notice that $g_z^{\Delta}(n-m)=g_z^{\Delta}(m-n)$ for all $n,m\in\N$ which can be seen by performing the change of variable $k'=-k$.
One can easily check that 
\begin{equation}
\label{eqn:greenDDelta}
g_z^{\Delta\su{D}}(n,m)=g_z^{\Delta}(n-m)-g_z^{\Delta}(n+m+2)\quad\text{ for all $n,m\in\N$},
\end{equation}
where the Dirichlet boundary condition on the \virg{negative half-line} is formally satisfied by putting $n=-1$ and separately $m=-1$.

Thus in the following explicit computation for $g_z^{\Delta}(n-m)$ we can suppose that $m-n\geq 0$.
Let $\mathbb{S}^1$ be the unit circle. By implementing the change of variable $z:=\ex^{\iu k}$ we obtain that
\begin{align*}
g_z^{\Delta}(n-m)&=\frac{1}{2\pi \iu}\oint_{\mathbb{S}^1}\di z\,\frac{ z^{m-n-1}  }{t_c (z+z^{-1})-2t_c\cos(\theta)-\iu \eps  }\\
&=\frac{1}{2\pi\iu t_c}\oint_{\mathbb{S}^1}\di z\,\frac{ z^{m-n}  }{ z^2- (2\cos(\theta)+\iu \eps t_c^{-1})z +1  }.  
\end{align*}
Notice that $z^2- (2\cos(\theta)+\iu \eps t_c^{-1})z +1=(z-z_+(\theta,\eps))(z-z_-(\theta,\eps))$, where
\begin{equation}
\label{eqn:defzpm}
\begin{aligned}
z_\pm \equiv z_\pm(\theta,\eps):&=\(\cos(\theta)+ \frac{\iu\eps }{2t_c}\)\pm\iu\sin(\theta)\sqrt{1-\frac{\iu\eps\cos(\theta)}{t_c\sin^2(\theta)}+\frac{\eps^2}{4t_c^2\sin^2(\theta)}}\\
&=\cos(\theta)\( 1\pm \frac{\eps}{2t_c\sin(\theta)}  \)+\iu\sin(\theta)\( \frac{\eps}{2t_c\sin(\theta)}\pm 1  \)+\Or(\eps^2),
\end{aligned}
\end{equation}
where in the last equality we have used the Taylor expansion of $\sqrt{1+x}=1+\frac{1}{2}x+\Or(x^2)$ for $x$ close to $0$. Denoting by the unit disk $D_1(0):=\{z\in\C:\quad\abs{z}<1\}$, from \eqref{eqn:defzpm} we have that if $\eps$ is small enough, then  $z_-\in D_1(0) $ and $z_+\notin D_1(0) $. 
By the residue theorem, we get that
\[
g_z^{\Delta}(n-m)
=\frac{1}{t_c}\frac{z_-^{m-n}}{z_- - z_+}
\]
and 
\[
\lim_{\eps\to 0^+}g_z^{\Delta}(n-m)
=\iu \, \frac{\ex^{-\iu \theta \abs{n-m}}}{2t_c\sin(\theta)}\quad\text{ for all $n,m\in\Z$.}
\]
Hence, for all $\theta\in (0,\pi)$ in view of \eqref{eqn:greenDDelta}, we have that for all $n,m\in\N$
\begin{align*}
g_{2t_c\cos(\theta)+\iu0^+}^{\Delta\su{D}}(n,m):=\lim_{\eps\to 0^+}g_{2t_c\cos(\theta)+\iu\eps}^{\Delta\su{D}}(n-m)=\frac{1}{2\iu t_c\sin(\theta)}\(\ex^{-\iu\theta(n+m+2)}-\ex^{-\iu \theta \abs{n-m}}\),
\end{align*}
which proves \eqref{eqn:resolventlaplacian}. The other two limits can be proved with the same residue method, and we only sketch the proof of \eqref{hc3}. Here we need to find the roots of 
$$z^2-(-2+\iu \eps t_c^{-1})\, z +1.$$ 
We have $$z_-=-1+\big (\eps/(2t_c)\big )^{1/2}\, (1-\iu) +\frac{\iu \epsi}{2t_c}+\mathcal{O}(\eps^{3/2})$$ and $$z_+=-1-\big (\eps/(2t_c)\big )^{1/2}\, (1-\iu) +\frac{\iu \epsi}{2t_c}+\mathcal{O}(\eps^{3/2}).$$ For small $\eps>0$ we have $|z_-|<1$ and $|z_+|>1$. Assuming again that $m-n\geq 0$ we have (note that $m-n$ and $m+n$ have the same parity)
\[
g_{-2t_c+\iu\, \eps}^{\Delta}(n-m)
=\frac{1}{t_c}\frac{z_-^{m-n}}{z_- - z_+}=\frac{(-1)^{m+n}\Big (1-\big (\eps/(2t_c)\big )^{1/2}(1-\iu)(m-n)+\mathcal{O}(\eps)\Big )}{2\, t_c \big (\eps/(2t_c)\big )^{1/2}(1-\iu)+\mathcal{O}(\eps^{3/2})},
\]
or 
\[
g_{-2t_c+\iu\, \eps}^{\Delta}(n-m)= \frac{(-1)^{m+n}}{2\, t_c \big (\eps/(2t_c)\big )^{1/2}(1-\iu)}
-\frac{(-1)^{m+n}(m-n)}{2t_c}+\mathcal{O}(\sqrt{\eps}).
\]
Writing a similar expansion for $n+m+2$ we see that the singular terms cancel out and we obtain the result. 

Finally, we may also compute the Green function for the case in which $|\text{Re}(z)|>2t_c$ using the same residue integration, using for example the notation $z=\pm 2t_c \cosh(\theta) +\iu \varepsilon$ with $\theta>0$. The limit $\varepsilon\downarrow 0$ exists also in this case and equals a real number. 


\end{proof}

\begin{lemma}\label{lemmahc1}
    Let us assume that the continuous matrix family 
    \[  S(E):=h_s -E - \tau^2 \sum_{j=1}^2  \bra{L_j}  (h_j-E-\iu 0_+)^{-1}   \ket{L_j}\, \ket{S_j}\bra{S_j} \]
    consists of invertible matrices in $\C^N$ for all $E\in \R$. Then the spectrum of $H$ is absolutely continuous and equals $[-2t_c,2t_c]$. Moreover,    $S(E)^{-1}$ is smooth on $(-2t_c,2t_c)$ and has convergent expansions of the type 
    \begin{equation}\label{hc1}
      S(E)^{-1}=\sum_{n\geq 0} C_n^-\, (E +2t_c)^{n/2},\quad   S(E)^{-1}=\sum_{n\geq 0} C_n^+\, (2t_c-E)^{n/2},
    \end{equation}
    for $2t_c+E>0$ and respectively $2t_c-E>0$ sufficiently small.
\end{lemma}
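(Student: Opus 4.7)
The plan is to use the Feshbach--Schur (Grushin) reduction to express the full resolvent of $H$ in terms of $S(E)^{-1}$, and then feed in the explicit analytic structure of the Dirichlet Green function supplied by Lemma \ref{lem:resolventlaplacian}.

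First I would decompose $\Hi=(\ell^2(\N_1)\oplus\ell^2(\N_2))\oplus\C^N$ and write, for $z\in\C\setminus\R$,
\[
(H-z)^{-1}=\begin{pmatrix}(H_L-z)^{-1}&0\\0&0\end{pmatrix}+\begin{pmatrix}(H_L-z)^{-1}&0\\0&\Id\end{pmatrix}\mathcal{R}(z)\begin{pmatrix}(H_L-z)^{-1}&0\\0&\Id\end{pmatrix},
\]
where, after a standard Schur complement computation with $h_\tau=\tau\sum_j(\ket{S_j}\bra{L_j}+\ket{L_j}\bra{S_j})$, the sample block of $\mathcal{R}(z)$ is precisely $-\widetilde{S}(z)^{-1}$ with
\begin{equation}
\label{eqn:resformula-bis}
\widetilde{S}(z):=h_s-z-\tau^2\sum_{j=1}^{2}\bra{L_j}(h_j-z)^{-1}\ket{L_j}\,\ket{S_j}\bra{S_j}.
\end{equation}
By Lemma \ref{lem:resolventlaplacian}, the diagonal matrix elements $\bra{L_j}(h_j-E-\iu 0_+)^{-1}\ket{L_j}$ exist and extend continuously to all of $\R$, being real-valued outside $[-2t_c,2t_c]$; hence $\widetilde{S}(E+\iu 0_+)=S(E)$ in the notation of the lemma. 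The hypothesis that $S(E)$ is invertible for every $E\in\R$, combined with the Feshbach identity and the limiting absorption principle for the $h_j$'s, yields a bounded limit of $\bra{f}(H-E-\iu\eps)^{-1}\ket{g}$ as $\eps\downarrow 0$, for any compactly supported $f,g$ and every $E\in\R$. Standard arguments (e.g.\ the absence of any isolated or embedded eigenvalues because invertibility of $S$ forbids non-trivial solutions to the eigenvalue equation via Feshbach) then give that $\sigma(H)=\sigma\sub{ac}(H)=[-2t_c,2t_c]$, proving the first assertion.

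For the regularity of $S(E)^{-1}$, I would substitute $E=2t_c\cos\theta$ with $\theta\in(0,\pi)$. From \eqref{eqn:resolventlaplacian}, the finite-rank coefficient $\bra{L_j}(h_j-E-\iu 0_+)^{-1}\ket{L_j}$ is a finite linear combination of terms of the form $(2\iu t_c\sin\theta)^{-1}(\ex^{-\iu\theta(n+m+2)}-\ex^{-\iu\theta|n-m|})$, with $n,m$ running through the support of $L_j$. Using the elementary identity $\ex^{-\iu\theta(n+m+2)}-\ex^{-\iu\theta|n-m|}=-2\iu\sin(\theta\cdot\text{something})\cdot(\text{phase})$ together with $\sin\theta$ in the denominator, each matrix element is a smooth (actually real-analytic) function of $\cos\theta$ on $(-1,1)$, hence a smooth function of $E$ on $(-2t_c,2t_c)$. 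Therefore $S(E)$ is smooth on this interval, and since it is invertible, so is $S(E)^{-1}$ by Cramer's rule.

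Finally, for the threshold expansions \eqref{hc1}, I would expand near $\theta=0$ (the right endpoint $E=2t_c$; the left endpoint is symmetric). Writing $\cos\theta=1-\theta^2/2+\ldots$, one has $\theta=\sqrt{(2t_c-E)/t_c}\cdot\bigl(1+\mathcal{O}(2t_c-E)\bigr)$, so $\theta$ is a convergent power series in $(2t_c-E)^{1/2}$ for small $2t_c-E>0$. Substituting into \eqref{eqn:resolventlaplacian} and using $\sin\theta=\theta(1+\mathcal{O}(\theta^2))$, the leading behaviour of each matrix element $\bra{L_j}(h_j-E-\iu 0_+)^{-1}\ket{L_j}$ is $-\iu/(2t_c\sin\theta)\cdot((n+m+2)-|n-m|)+(\text{analytic in }\theta)$; both the singular prefactor $1/\theta$ and the remaining analytic part are convergent series in $(2t_c-E)^{1/2}$. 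Hence $S(E)$ itself admits a convergent expansion of the form $\sum_{n\geq 0}A_n^+(2t_c-E)^{(n-1)/2}$ (the $n=0$ coefficient is finite rank thanks to the cancellation between the two terms in \eqref{eqn:resolventlaplacian} at $\theta=0$, mirroring the cancellation that produces \eqref{hc2}); actually the structure is cleaner: the sum is of the form $S_0^+ + \sqrt{2t_c-E}\,R^+(E)$ with $S_0^+$ and $R^+(E)$ analytic in $(2t_c-E)^{1/2}$. Invertibility of $S(E)$ at $E=2t_c$, combined with the Neumann series for the inverse of an invertible analytic matrix perturbation, then produces the convergent expansion $S(E)^{-1}=\sum_{n\geq 0}C_n^+(2t_c-E)^{n/2}$ for $2t_c-E$ sufficiently small. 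The argument at $E=-2t_c$ is identical, starting from \eqref{hc3} and the parameterization $E=-2t_c\cos\theta$.

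The main obstacle I anticipate is the careful bookkeeping of the threshold expansion: one must verify that the apparent $1/\sin\theta$ singularity in \eqref{eqn:resolventlaplacian} cancels between the two terms (as happens in \eqref{hc2}--\eqref{hc3}), so that $S(E)$ itself extends continuously to the endpoints and therefore $S(\pm 2t_c)^{-1}$ exists; only then does the Neumann-series inversion deliver a genuine series in $(2t_c\mp E)^{1/2}$ rather than one starting with a negative fractional power.
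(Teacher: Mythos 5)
Your proof follows essentially the same route as the paper's: a Feshbach (Schur complement) reduction expressing the resolvent of $H$ in terms of $S(E)^{-1}$, limiting absorption via Lemma \ref{lem:resolventlaplacian}, the parameterization $E=2t_c\cos\theta$ to obtain smoothness in the interior, and the observation that $\theta$ is a real-analytic function of $\sqrt{2t_c\mp E}$ near the thresholds so that composing with the (analytic-near-$\theta=0$) Green function and then inverting by a Neumann series yields \eqref{hc1}. The overall structure and conclusions are correct.

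A few intermediate statements are worded in a way that could mislead. First, you claim the leading behaviour of $\bra{L_j}(h_j-E-\iu0_+)^{-1}\ket{L_j}$ near $\theta=0$ is $-\iu/(2t_c\sin\theta)\cdot\bigl((n+m+2)-|n-m|\bigr)$ plus something analytic, and you speak of an ``apparent $1/\sin\theta$ singularity'' that must be seen to cancel. In fact the cancellation has already happened at the level of \eqref{eqn:resolventlaplacian}: the numerator $\ex^{-\iu\theta(n+m+2)}-\ex^{-\iu\theta|n-m|}$ vanishes like $\theta$ as $\theta\to 0$, so the ratio with $\sin\theta$ has a removable singularity and is outright analytic near $\theta=0$; its value at $\theta=0$ is precisely \eqref{hc2}. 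There is therefore no genuinely singular piece, and the expansion of $S(E)$ near $2t_c$ starts at order $(2t_c-E)^0$, not $(2t_c-E)^{-1/2}$. Second, ``the $n=0$ coefficient is finite rank'' should read ``finite.'' Third, your formula for $(H-z)^{-1}$ carries a sign discrepancy (the sample block of the resolvent is $+S(z)^{-1}$, not $-\widetilde{S}(z)^{-1}$); this is harmless for the conclusion but worth fixing. None of these affect the validity of the argument once the cleaned-up version ``$S(E)=S_0^++\sqrt{2t_c-E}\,R^+(E)$ with both pieces analytic in $\sqrt{2t_c-E}$, invert by Neumann series using $S(2t_c)$ invertible'' is reached, which is exactly what the paper does (with $\kappa=\sqrt{1-E/(2t_c)}$ and $\tilde\theta(\kappa)=\arcsin(\kappa\sqrt{2-\kappa^2})$).
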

\begin{proof}

We denote by $\Pi_s$ and $\Pi_L$ respectively the projections on the sample subsystem and  on the two-lead subsystem.
We will use the Feshbach formula with respect to the decomposition $\Hi=\ran\Pi_s\oplus\ran\Pi_L$ to compute ${\( H-E-\iu\eps \)}^{-1}$. We introduce the reduced resolvent of both leads $R_L(E+\iu\eps):={\(\Pi_L   \(H-E-\iu\eps\)\Pi_L\) }^{-1}$, which has to be understood as the direct sum of the inverses of the operators $h_j-E-\iu\eps$ in their individual lead space. 

Let us denote by
\[  S(E+\iu \eps):=h_s -E -\iu \eps - \tau^2 \sum_{j=1}^2  \bra{L_j}  (h_j-E-\iu \eps)^{-1}   \ket{L_j}\, \ket{S_j}\bra{S_j},\quad \eps>0. \]

By the Feshbach formula, we obtain that 
\begin{equation}
\label{eqn:resformula-bis}
\begin{aligned}
{\( H-E-\iu\eps \)}^{-1}=
\begin{pmatrix}
A & B\\
C & D
\end{pmatrix}
\end{aligned}
\end{equation}
where 
\begin{equation}\label{eqn:A}
\begin{split}
A&:=S(E+\iu\, \eps)^{-1} \\
B&:=-\tau  \sum_{j=1}^2 S(E+\iu\, \eps)^{-1}\ket{S_j}\bra{L_j}R_L(E+\iu\eps)\\
C&:=-\tau\sum_{j=1}^2R_L(E+\iu\eps) \ket{L_j}\bra{S_j} S(E+\iu\, \eps)^{-1}\\
D&:=R_L(E+\iu\eps)+\tau^2 \sum_{j,k=1}^2 R_L(E+\iu\eps) \ket{L_j}\bra{S_j} S(E+\iu\, \eps)^{-1} \ket{S_k}\bra{L_k} R_L(E+\iu\eps).
\end{split}
\end{equation}
Using lemma \ref{lem:resolventlaplacian} we obtain that $\lim_{\eps\downarrow 0}S(E+\iu\eps)=S(E)$ for all $E\in \R$, and consequently $\lim_{\eps\downarrow 0}S(E+\iu\eps)^{-1}=S(E)^{-1}$.  Using the same Lemma, if $f$ has compact support then the limit  
$$\lim_{\eps\downarrow 0}\bra{f} {( H-E-\iu\eps )}^{-1}\ket{f}$$
exists for all $E\in \R$. This implies that the spectrum of $H$ is absolutely continuous.

Let us prove that $S(E)^{-1}$ is smooth on $(-2 t_c,2 t_c)$. It is enough to show that $S(E)$ is smooth. This amounts to prove that $\bra{L_j}  (h_j-E-\iu 0_+)^{-1}   \ket{L_j}$ is smooth for $1\leq j\leq 2$. Since $L_j$ has compact support, from \eqref{eqn:resolventlaplacian} and setting $E=2t_c \cos(\theta)$ the conclusion follows.

Now let us prove the expansions near $\pm 2t_c$. We do this in detail only  near $2t_c$, and we assume that $\kappa:=\sqrt{1-E/(2t_c)}$ is small. We have 
$$\cos^2(\theta)=E^2/(4t_c^2)=(1-\kappa^2)^2\quad \text{and}\quad \sin(\theta)=\kappa\, \sqrt{2-\kappa^2}.$$
Let $\tilde{\theta}(\kappa):=\arcsin\big (\kappa\, \sqrt{2-\kappa^2}\big )$. It admits a real analytic extension for $\kappa$ near zero. The right-hand side of \eqref{eqn:resolventlaplacian} is meromorphic in $\theta$ and equals an analytic function near $\theta=0$, hence replacing $\theta$ with $\tilde{\theta}(\kappa)$, the matrix element equals an analytic function of $\kappa$ near $\kappa=0$. This implies that all the elements of $S(E)$ and $S(E)^{-1}$ will have the same property, and they equal a convergent power series in $\kappa$ near zero. 
\end{proof}

\begin{remark}\label{remark-april5}
   \hc{ Here is the simplest example where the hypotheses of Lemma \ref{lemmahc1} are satisfied. Let $N=1$, \ie the small sample consists of only one \virg{dot} denoted by $\ket{\zeta}$. In this case, the two leads are coupled to the same vector $\ket{S_1}=\ket{S_2}=\ket{\zeta}$, and we assume that the coupling of the small sample with the leads is realized through $\ket{L_1}=\ket{0_1}$ and $\ket{L_2}=\ket{0_2}$, see \eqref{dc6}. The Hamiltonian of the small sample is of the form $h_s=\alpha\,  \ket{\zeta}\bra{\zeta}$ where we assume that $\alpha\in (-2t_c,2t_c)$. Then
    $$S(E)= f(E) \, \ket{\zeta}\bra{\zeta},\, \text{with}\, f(E)=\alpha -E -2 \tau^2\bscal{0_1}{  {(\Delta\su{D}-E -\iu 0_+)}^{-1} 0_1}.$$
     Hence $S(E)$ is invertible if and only if $f(E)\neq 0$. When $\tau$ is small compared to $|\alpha\pm 2t_c|$, then $f$ could have zeros only when $E\in (-2t_c,2t_c)$. Using \ref{eqn:resolventlaplacian} with $E=2t_c\cos(\theta)$ we have
    $$\bscal{0_1}{  {(\Delta\su{D}-E -\iu 0_+)}^{-1} 0_1}=\frac{-\cos(\theta) +\iu \sin(\theta)}{t_c}=-\frac{E}{2t_c^2} +\iu \frac{\sqrt{4t_c^2-E^2}}{2t_c^2}.$$
    Thus 
    $$f(E)=\alpha-E +\frac{\tau^2\, E}{t_c^2}-\iu \frac{\tau^2\,\sqrt{4t_c^2-E^2} }{t_c^2}\neq 0 \,\,  \text{when}\,\,  E\in (-2t_c,2t_c), $$
    hence $f(E)$ is never zero for real $E$. We can also see that $f(E)$ and $1/f(E)$ are smooth in the variables $\sqrt{2t_c\pm E}$ near the two thresholds, as predicted by $\eqref{hc1}$. Finally, we see that $f$ has an analytic extension to the whole complex strip ${\rm Re}(z)\in (-2t_c,2t_c)$, where $f$ has exactly one simple zero, which in the lowest order of $\tau$ it is well approximated by $\alpha +\frac{\tau^2\, \alpha}{t_c^2}-\iu \frac{\tau^2\,\sqrt{4t_c^2-\alpha^2} }{t_c^2}$. This zero is a resonance for $H$.  }

    \hc{
    If $N>1$, and if $h_s$ has only simple eigenvalues in the interval $(-2t_c,2t_c)$, a similar argument can be made for each eigenvalue by using a further Feshbach reduction to the corresponding  eigenprojection. In this case, in order to turn the eigenvalues into resonances, we need that the eigenvectors of $h_s$ have a non-trivial overlap with at least one of the vectors $\ket{S_j}$. }
\end{remark}

\begin{proposition}
\label{prop:timedecay}
Let the Hamiltonian $H$ be as in \eqref{eqn:H}, and let us assume that the conditions from Lemma \ref{lemmahc1} are satisfied. Let $f,g$ be compactly supported functions in $\Hi$. Then there exists a constant $C_{f,g}$ such that
\begin{equation}
\label{eqn:timedecay}
\abs{\scal{f}{\ex^{\iu t H}g}}\leq C_{f,g}\, t^{-3/2}\quad\text{for all $t\geq 1$}.
\end{equation}
\end{proposition}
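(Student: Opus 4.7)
The plan is to express $\scal{f}{\ex^{\iu tH}g}$ via Stone's formula as an oscillatory integral over $[-2t_c,2t_c]$, then change variables via $E=2t_c\cos\theta$ and extract the $t^{-3/2}$ decay by endpoint stationary phase, exploiting the fact that the spectral density vanishes like $\sqrt{2t_c\mp E}$ at each band edge.

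First, I will write
\[
\scal{f}{\ex^{\iu tH}g}=-\frac{1}{\pi}\int_{-2t_c}^{2t_c}\ex^{\iu tE}\,\im\scal{f}{(H-E-\iu 0)^{-1}g}\,\di E,
\]
which is valid since Lemma \ref{lemmahc1} furnishes both the limiting absorption principle on $\R$ and the purely absolutely continuous nature of $\sigma(H)=[-2t_c,2t_c]$. Via the Feshbach decomposition \eqref{eqn:resformula-bis}--\eqref{eqn:A}, each matrix element $\scal{f}{(H-E-\iu 0)^{-1}g}$ with compactly supported $f,g$ reduces to a finite algebraic combination of (i) entries of $S(E)^{-1}$, which by Lemma \ref{lemmahc1} admit the convergent expansions \eqref{hc1} in $\sqrt{2t_c\pm E}$ at the thresholds and are smooth in $E$ on $(-2t_c,2t_c)$, and (ii) half-line Dirichlet resolvent entries given explicitly by \eqref{eqn:resolventlaplacian}.

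Next, I parametrize by $E=2t_c\cos\theta$, so that $\di E=-2t_c\sin\theta\,\di\theta$, and set
\[
\Phi_{f,g}(\theta):=\sin\theta\cdot\im\scal{f}{(H-2t_c\cos\theta-\iu 0)^{-1}g},\qquad \theta\in[0,\pi],
\]
so that the integral becomes $\tfrac{2t_c}{\pi}\int_0^\pi \ex^{2\iu t\,t_c\cos\theta}\Phi_{f,g}(\theta)\,\di\theta$. I will verify two properties of $\Phi_{f,g}$. First, it extends to a function in $C^\infty([0,\pi])$: the half-line resolvent entries \eqref{eqn:resolventlaplacian} are already smooth in $\theta$ on $[0,\pi]$, since the apparent $1/\sin\theta$ singularity is cancelled by the vanishing of $\ex^{-\iu\theta(n+m+2)}-\ex^{-\iu\theta|n-m|}$ at $\theta\in\{0,\pi\}$, while the series \eqref{hc1}, rewritten in $\theta$ via $\sqrt{2t_c-E}=2\sqrt{t_c}\sin(\theta/2)$ and $\sqrt{2t_c+E}=2\sqrt{t_c}\cos(\theta/2)$, show that $S(E)^{-1}$ is smooth in $\theta$ near the endpoints. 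Second, $\Phi_{f,g}$ vanishes to order at least two at $\theta=0$ and $\theta=\pi$: one vanishing factor is the explicit $\sin\theta$ in the definition, and a second $O(\sin\theta)$ factor arises because the identity $\im\scal{n}{(h_j-E-\iu 0)^{-1}m}=\sin((n+1)\theta)\sin((m+1)\theta)/(t_c\sin\theta)$ vanishes linearly at the endpoints, so that $\im S(E+\iu 0)=O(\sin\theta)$ and hence $\im(S(E+\iu 0)^{-1})=O(\sin\theta)$ by a Neumann-type expansion; this $O(\sin\theta)$ behaviour then propagates to $\im\scal{f}{(H-E-\iu 0)^{-1}g}$ through the Feshbach formula \eqref{eqn:A}.

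Finally, I will apply the endpoint stationary phase analysis to $\int_0^\pi \ex^{2\iu t\,t_c\cos\theta}\Phi_{f,g}(\theta)\,\di\theta$. The phase $\varphi(\theta)=2t_c\cos\theta$ has only the critical points $\theta\in\{0,\pi\}$, both non-degenerate since $\varphi''(0),\varphi''(\pi)\neq 0$; on any compact subinterval of $(0,\pi)$, repeated integration by parts against $(\iu t\varphi')^{-1}\,\di/\di\theta$ yields decay $O(t^{-N})$ for any $N$. Near $\theta=0$, writing $\Phi_{f,g}(\theta)=\theta^2\psi(\theta)$ with $\psi\in C^\infty([0,\delta])$ and inserting a smooth cutoff, the substitution $u=\sqrt{t\,t_c}\,\theta$ in $\int_0^\delta \ex^{-\iu t\,t_c\theta^2(1+O(\theta^2))}\theta^2\psi(\theta)\,\di\theta$ extracts a prefactor $t^{-3/2}$ multiplying an oscillatory integral bounded uniformly for $t\geq 1$; the same analysis at $\theta=\pi$ yields another $O(t^{-3/2})$ contribution, and together they give \eqref{eqn:timedecay}. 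The main technical obstacle is precisely the double vanishing of $\Phi_{f,g}$ at the endpoints, which requires carefully tracking how \eqref{hc1} combines with \eqref{eqn:resolventlaplacian} through each block of \eqref{eqn:A}; once this is in place, the stationary-phase bookkeeping is routine.
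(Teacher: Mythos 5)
Your proposal is sound and yields the same $t^{-3/2}$ decay by the same underlying mechanism the paper uses: Stone's formula reduces the matrix element to an oscillatory integral of the spectral density, which by Lemma \ref{lemmahc1} is smooth inside $(-2t_c,2t_c)$, has a $\sqrt{2t_c\mp E}$ leading singularity at the band edges, and vanishes there. The differences are in the bookkeeping. The paper stays in the $E$ variable, introduces a smooth partition of unity, and handles the $(2t_c\mp E)^{1/2}$ and $(2t_c\mp E)^{3/2}$ terms individually by rescaling $y=tx$ and integrating by parts; you substitute $E=2t_c\cos\theta$, which turns every half-integer power into a smooth function of $\theta$ (since $\sqrt{2t_c-E}=2\sqrt{t_c}\sin(\theta/2)$ and $\sqrt{2t_c+E}=2\sqrt{t_c}\cos(\theta/2)$), and apply an endpoint stationary-phase bound for a $C^\infty$ amplitude vanishing to order two at the non-degenerate critical points $\theta\in\{0,\pi\}$. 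Your route buys a genuinely smooth amplitude and a textbook stationary-phase lemma, at the price of the trigonometric reparametrization. Similarly, for the vanishing of the density at $\pm 2t_c$ the paper argues softly --- the density is continuous on $\R$ and identically zero outside $[-2t_c,2t_c]$ because there is no spectrum there --- whereas you trace $\im(\,\cdot\,)=O(\sin\theta)$ explicitly through the Feshbach blocks via $\im(S^{-1})=-S^{-1}(\im S)(S^{-1})^*$; both are valid, the paper's is shorter.

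One point needs fixing. The display
\begin{equation*}
\scal{f}{\ex^{\iu tH}g}=-\frac{1}{\pi}\int_{-2t_c}^{2t_c}\ex^{\iu tE}\,\im\scal{f}{(H-E-\iu 0)^{-1}g}\,\di E
\end{equation*}
is only valid for $f=g$ (and then with a $+$ sign): the boundary value $(H-E-\iu 0)^{-1}$ is not self-adjoint --- its adjoint is $(H-E+\iu 0)^{-1}$ --- so for $f\neq g$ Stone's formula gives $\frac{1}{2\pi\iu}\scal{f}{\big[(H-E-\iu 0)^{-1}-(H-E+\iu 0)^{-1}\big]g}$, which is not the imaginary part of a single matrix element. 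The paper addresses this by invoking the polarization identity to reduce to $f=g$ at the outset; do the same, and the rest of your argument carries over unchanged.
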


\begin{proof}
By the polarization identity it suffices to prove \eqref{eqn:timedecay} for $f=g$. By virtue of functional calculus via the resolvent formalism, one has that
\begin{equation}
\label{eqn:resformula}
\begin{aligned}
\scal{f}{\ex^{\iu t H}f}
&=\lim_{\eps\to 0}\frac{1}{2\pi\iu}\int_{\R}\di E\, \ex^{\iu t E}\bscal{f}{\({\( H-E-\iu\eps \)}^{-1}-{\( H-E+\iu\eps \)}^{-1}\)f}\\
&=\frac{1}{\pi}\lim_{\eps\to 0}\int_{-2t_c}^{2t_c}\di E\, \ex^{\iu t  E}\im \bscal{f}{(H-E-\iu\eps )^{-1}f},
\end{aligned}
\end{equation}
where we used that the spectrum of $H$ equals $[-2t_c,2t_c]$. Using the Feshbach formula and the results of Lemmas \ref{lem:resolventlaplacian} and  \ref{lemmahc1}, we have that  
$\bscal{f}{( H-E-\iu\eps )^{-1}f}$ is bounded in $\eps$ uniformly in $-2t_c\leq E\leq 2t_c$ and we can take the limit $\eps\downarrow 0$ inside the integral. Moreover, the function 
$$F(E):=\frac{1}{\pi}\lim_{\eps\downarrow 0}\im \bscal{f}{( H-E-\iu\eps )^{-1}f}$$
is smooth on $(-2t_c,2t_c)$ and has convergent expansions in $\sqrt{2t_c\mp E}$ near $\pm 2t_c$.  Moreover, because $F(E)$ is continuous on $\R$ and equals zero outside $[-2t_c,2t_c]$ because $H$ does not have spectrum there, we must have $F(\pm 2t_c)=0$. Thus these expansions must be of the form 
$$F(E)=\sum_{n\geq 1} C_n^{\pm}\, (2t_c\mp E)^{n/2}.$$
We have $\scal{f}{\ex^{\iu t H}f}=\int_{-2t_c}^{2t_c} \di E \, F(E)\, \ex^{\iu tE}.$ We construct a smooth partition of identity on $[-2t_c,2t_c]$ consisting of $\phi_1(E)+\phi_2(E)+\phi_3(E)=1$, where $0\leq \phi_j\leq 1$, $\phi_1=1$ near $-2t_c$, $\phi_3=1$ near $2t_c$, and $\phi_2\neq 1$ only on some small enough intervals near $\pm 2t_c$, where the above expansions for $F(E)$ hold. 

Because $\phi_2(E)F(E)$ is smooth with compact support, its contribution to $\scal{f}{\ex^{\iu t H}f}$ will decay faster than any power of $t$. Let us analyze the contribution from $\phi_1$. We have 
$$\int_{-2t_c}^{2t_c} \di E \, \phi_1(E)F(E)\ex^{\iu t E}=\ex^{-2\iu t_c t}\sum_{n\geq 1}C_n^{-}\int_0^\infty \di x\, \phi_1(-2t_c +x)\, x^{n/2}\ex^{\iu tx}.$$
The function $$\phi_1(-2t_c+x)\, \Big (F(-2t_c+x)-C_1^- \sqrt{x}-C_3^- x^{3/2}\Big )$$
is $C^2$ on $(0,\infty)$, equals zero at $x=0$ and has compact support. Integrating twice by parts using $e^{\iu tx}$, we can show that the contribution coming from here decays like $t^{-2}$. We only need to treat $n=1$ and $n=3$. By the change $y=tx$, the contribution from $n=1$ equals 
$$t^{-3/2}\int_0^\infty \di y\, \phi_1(-2t_c +y/t)\, y^{1/2}\ex^{\iu y}.$$
We split the integral into one over $[0,1]$ and the other over $[1,\infty)$. The first integral has an easy limit when $t\to\infty$. The second one reads as: 
$$\int_1^\infty \di y\, \phi_1(-2t_c +y/t)\, y^{1/2}\ex^{\iu y}.$$
We integrate twice by parts using $\ex^{\iu y}$. All boundary terms at $y=1$ remain bounded in $t$. Every time we differentiate $\phi_1$ we gain a factor $1/t$, while when we differentiate $\sqrt{y}$ we gain a decay of order $1/y$. Using that $\phi_1=0$ if $y/t$ is larger than some positive number, and because $y^{-3/2}$ is integrable on $[1,\infty)$, we can show that the second integral is also uniformly bounded in $t\geq 1$, hence the term coming from $n=1$ decays like $t^{-3/2}$. In a similar way, the term with $n=3$ gives a decay like $t^{-5/2}$. Thus the contribution from $\phi_1$ decays like $t^{-3/2}$, and one can prove that the contribution from $\phi_3$ has the same behavior.
\end{proof}

\hc{
For completeness, we end this Appendix by introducing the transmission coefficient between the leads. We follow \cite{CJM, N} and we keep the details to a minimum. 
The operators on the leads, $h_1$ and $h_2$, are two copies of the Dirichlet Laplacian, which can be diagonalized by a generalized Fourier transform 
$$\mathfrak{F}:\ell^2(\N)\mapsto L^2([-2t_c,2t_c]),$$
which for some $f$ with compact support is given by:
\begin{equation}\label{march1}
\begin{aligned}
    & \big (\mathfrak{F}(f)\big )(E)=\scal{\Psi^0_E}{f}_{\ell^2(\N)},\, \Psi_E^0(n)=\frac{\sin\big ( (n+1)\theta\big )}{\sqrt{\pi t_c\sin(\theta)}}, \\ &E=2t_c\cos(\theta),\, \theta\in (0,\pi),\, n\geq 0.
     \end{aligned}
\end{equation}
Also, 
\begin{equation}\label{march0}
\mathfrak{F}_L=\mathfrak{F}\oplus \mathfrak{F}    
\end{equation}
diagonalizes $H_L=h_1+h_2$. 
In order to emphasise that we have two leads, we denote by $\Psi^0_{j,E}$, $j\in \{1,2\}$, the generalized eigenfunctions of each $h_j$. We use these functions via the Lippmann--Schwinger equation \cite{Y} in order to generate generalized eigenfunctions for the coupled operator $H=h_1+h_2+h_s+h_\tau$. Under the assumptions of Lemma \ref{lemmahc1}, they are given by (as elements of an appropriate weighted $\ell^2$ space): 
\begin{equation}\label{march3}
\ket{\Psi_{j,E}^{\pm}}=\ket{\Psi^0_{j,E}}-(H-E\mp \iu 0_+)^{-1}h_\tau\, \ket{\Psi^0_{j,E}},\quad E\in (-2t_c,2t_c),
\end{equation}
where we used that $h_s\Psi^0_{j,E}=0$. The functions $\Psi_{j,E}^+$ implement the generalized Fourier transform $\mathfrak{F}_L W_-(H_L,H)$ which diagonalizes $H$. In the spectral representation of $H_L$, the unitary scattering matrix $\mathcal{S}$ between $H$ and $H_L$ has a fiber $\mathcal{S}(E)$ which is a $2\times 2$ matrix and can be written as $\mathcal{S}(E)=\Id-2\pi \iu T(E)$, where the $T$-matrix satisfies the so-called optical theorem: 
$$T(E)-T(E)^*=-2\pi \iu T(E)\, T(E)^*.$$
Formula (2.15) in \cite{N} gives the element $T_{jk}(E)$ as: 
\begin{equation*}
    T_{jk}(E)=\scal{\Psi_{j,E}^0}{h_\tau\Psi_{k,E}^+},
\end{equation*}
or using \eqref{eqn:resformula-bis} from Lemma \ref{lemmahc1} we may write: 
\begin{equation*}
    T_{12}(E)=\tau \scal{\Psi_{1,E}^0}{L_1}\, \scal{S_1}{\Psi_{2,E}^+}=-\tau^2 \scal{\Psi_{1,E}^0}{L_1}\, \scal{L_2}{\Psi_{2,E}^0}\, \scal{S_1}{S(E)^{-1} S_2}.
\end{equation*}
Finally, the transmission coefficient between the leads is: 
\begin{equation}\label{march2}
    \mathcal{T}_{0}(E)=|T_{12}(E)|^2,
\end{equation}
which due to the optical theorem and the Lippmann--Schwinger equation can be rewritten in many equivalent forms. 
}

\section{Continuity properties of the current density}\label{Ap3}
Let us consider the fixed point $\underline{a}_\lambda\in \Hi$ from \eqref{gc11'}. Let us recall that $\mathfrak{F}_L$ denotes two copies of the generalized Fourier transform $\mathfrak{F}$ which diagonalizes both $h_j$'s \hc{(see \eqref{march0})}. Denote by  $$w_{\lambda,n}(E,\sigma):=\big (\mathfrak{F}_LW_-(H_L,H) \, \underline{a}_{\lambda,n}\big )(E,\sigma),\quad E\in [-2t_c,2t_c],\quad \sigma\in \{1,2\},\quad 1\leq n\leq N.$$
As a starting point, $w_{\lambda,n}$ is just a function belonging to $L^2([-2t_c,2t_c])\otimes \C^2$, but we will show that if $\lambda$ is small enough, these  functions are actually continuous. Let us apply the unitary $\mathfrak{F}_LW_-(H_L,H)$ on both sides of \eqref{gc11'}. By exploiting the intertwining properties of the wave operators we obtain:
\begin{equation}
\label{gcm1}
\begin{aligned}
 w_{\lambda,n}(E,\sigma)&=\big (\mathfrak{F}_LW_-(H_L,H)\ket{\zeta_n}\big )(E,\sigma)\\
 &\quad +\iu \lambda\sum_{1\leq j,k\leq N}\nu_{jk}\scal{\underline{a}_{\lambda,k}}{W_-(H,H_{L}) \rho_i  W_-(H_{L},H) \underline{a}_{\lambda,k}}\cdot\\
& \qquad \cdot\int_0^\infty  \di r \scal{\zeta_j}{\ex^{\iu r H}\zeta_n} \ex^{-\iu r E} w_{\lambda,j}(E,\sigma). 
\end{aligned}
\end{equation}
 We have  
\begin{align*}
\scal{\underline{a}_{\lambda,_k}}{W_-(H,H_{L}) \rho_i  W_-(H_{L},H) \underline{a}_{\lambda,k}}=\sum_{\sigma'=1}^2 \int_{-2t_c}^{2t_c} \di E' \, |w_{\lambda,k}(E',\sigma')|^2\, \frac{1}{\ex^{\beta_{\sigma'}(E'-\mu_{\sigma'})}+1}.
\end{align*}
Also, working under the hypothesis of Lemma \ref{lemmahc1} we can compute
$$\iu \int_0^\infty  \di r \scal{\zeta_j}{\ex^{\iu r H}\zeta_n} \ex^{-\iu r E}=-\scal{\zeta_j}{(H-E+\iu 0_+)^{-1}\zeta_n}=-\scal{S(E)^{-1}\zeta_j}{\zeta_n}.$$
Thus the functions $w_{\lambda,n}$ seen as elements of $L^2([-2t_c,2t_c])\otimes\C^2$ obey: 
\begin{align}\label{gcm2}
&w_{\lambda,n}(E,\sigma) =\big (\mathfrak{F}_LW_-(H_L,H)\ket{\zeta_n}\big )(E,\sigma)\\
 &- \lambda\sum_{1\leq j,k\leq N}\nu_{jk}\scal{S(E)^{-1}\zeta_j}{\zeta_n}\, w_{\lambda,j}(E,\sigma)\, \sum_{\sigma'=1}^2 \int_{-2t_c}^{2t_c} \di E' \, |w_{\lambda,k}(E',\sigma')|^2\, \frac{1}{\ex^{\beta_{\sigma'}(E'-\mu_{\sigma'})}+1}.\nonumber 
\end{align}
The idea is to show that the above fixed point equation also holds true when the right-hand side is seen as a map which produces  continuous functions when it acts on continuous functions. First, we need to prove that the \virg{free} term $\big (\mathfrak{F}_LW_-(H_L,H)\ket{\zeta_n}\big )(E,\sigma)$ is continuous in $E$. We have 
\begin{align*}
W_-(H_L,H)\ket{\zeta_n}&=\Pi_L W_-(H_L,H)\ket{\zeta_n}=\lim_{t\to\infty}\Pi_L \ex^{-\iu tH_L}\ex^{\iu tH}\ket{\zeta_n}\\
&=\iu \tau \sum_{\sigma'=1}^2 \int_0^\infty \di t\, \Pi_L \ex^{-\iu t H_L}\ket{L_{\sigma'}}\bra{S_{\sigma'}}\ex^{\iu t H}\ket{\zeta_n}.
\end{align*}

By applying $\mathfrak{F}_L$ we obtain
\begin{align*}\big (\mathfrak{F}_LW_-(H_L,H)\ket{\zeta_n}\big )(E,\sigma)&=\iu \tau \int_0^\infty \di t \, \ex^{-\iu t E}(\mathfrak{F}L_\sigma)(E)\, \scal{S_\sigma}{\ex^{\iu t H}\zeta_n}\\
&=-\tau\,  (\mathfrak{F}L_\sigma)(E)\,\scal{S(E)^{-1}\, S_\sigma}{\zeta_n}.
\end{align*}
Since $L_\sigma$ has compact support, $\mathfrak{F}L_\sigma$ is continuous. Also, Lemma \ref{lemmahc1} guarantees that the scalar products involving $S(E)^{-1}$ are also continuous. 

By relatively standard arguments involving Banach--Caccioppoli's fixed point theorem in the space of continuous functions defined on a compact interval, one can now show that there exists some $0<\lambda_1\leq \lambda_0$ such that the fixed point equation in \eqref{gcm2} has a unique solution in the class of continuous functions for all $0\leq \lambda\leq \lambda_1$. This solution is also varying continuously with respect to the $\beta$'s and $\mu$'s. 

Finally, if we apply $\mathfrak{F}_LW_-(H_L,H)$ to both sides of \eqref{7}, we see that at fixed $E$, the integral with respect to $r$ can be performed and gives (up to a constant) $$\scal{(H-E-\iu 0_+)^{-1}\zeta_j}{ \psi_c},$$
which is continuous in $E$ (see \eqref{eqn:resformula-bis}). Hence if $\psi_c$ has compact support, all functions of the type $\big (\mathfrak{F}_LW_-(H_L,H)A_{\lambda,\infty}\, \psi_c\big )(E,\sigma)$ are continuous in $E$. Using this in \eqref{gc13} and \eqref{gc12}, we conclude that $\mathcal{T}_\lambda(E)$ is also continuous.

\bigskip \bigskip
{\footnotesize
\begin{tabular}{ll}
(H. D.~Cornean)   
		&  \textsc{Department of Mathematical Sciences, Aalborg University} \\ 
        	&   Skjernvej 4A, 9220 Aalborg, Denmark\\
        	&  {E-mail address}: \href{mailto:cornean@math.aau.dk}{\texttt{cornean@math.aau.dk}}\\[10pt]
       
         (G.~Marcelli)   
       &  \gm{\textsc{Mathematics and Physics Department, Roma Tre University}} \\ 
        	&   \gm{Largo S. L. Murialdo 1, 00146
Roma, Italy}\\
        	&  \gm{{E-mail address}: \href{mailto:giovanna.marcelli@uniroma3.it}{\texttt{giovanna.marcelli@uniroma3.it}}}\\
\end{tabular}
}

\end{document}